\documentclass[conference,compsoc]{IEEEtran}
\ifCLASSOPTIONcompsoc
  \usepackage[nocompress]{cite}
\else
  \usepackage{cite}
\fi
\ifCLASSINFOpdf
\else
\fi
\usepackage{subcaption}
\usepackage{bm}
\usepackage{bbding}
\usepackage{listings}
\usepackage{makecell}
\usepackage{amsthm}
\usepackage{xcolor}
\usepackage{mathtools}
\usepackage{tikz}
\usepackage[utf8]{inputenc}
\usepackage{amsmath,amsfonts}
\makeatletter
\@ifclassloaded{acmart}{}{\usepackage{amssymb}}
\makeatother
\usepackage{algorithmic}
\usepackage{algorithm}
\usepackage{array}
\usepackage{textcomp}
\usepackage{stfloats}
\usepackage{url}
\usepackage{verbatim}
\usepackage{enumitem} 
\usepackage{graphicx}
\usepackage{xspace}
\usepackage{tabularray}
\usepackage{booktabs}
\usepackage{multirow}
\usepackage{threeparttable}
\usepackage{diagbox}
\usepackage{soul}
\usepackage{hyperref}

\providecommand{\Description}[2][]{}

\newcommand{\sysname}{\textsc{SafeStats}\xspace}

\usepackage{amsthm}
\newtheorem{definition}{Definition}

\newtheorem{theorem}{Theorem}
\newtheorem{claim}{Claim}

\newtheorem{lemma}{Lemma}

\theoremstyle{definition}

\newcolumntype{C}{>{$}c<{$}}

\newcommand{\myherebox}[6]{
	\begin{figure}[ht!]
		\centering
		\begin{tikzpicture}
		\node[anchor=text,text width=\columnwidth-1.1cm, draw, rounded corners, line width=1pt, fill=#3, inner sep=5mm, font=\fontsize{8}{10}\selectfont] (big) {\\#4};
		\node[draw, rounded corners, line width=.5pt, fill=#2, anchor=west, xshift=5mm] (small) at (big.north west) {#1};
		\end{tikzpicture}
		\par\vspace{0.4em}
        \caption{#5}
        \label{#6}
        \Description[Short description]{}
	\end{figure}
}

\begin{document}
%
\title{\sysname: Efficient 2PC Protocols for Data Statistic-Related Functions}

\author{
\IEEEauthorblockN{Tanren Liu}
\IEEEauthorblockA{Xidian University\\
Xi'an, Shaanxi, China\\
fatw\_email@163.com
}
\and
\IEEEauthorblockN{Xianjia Meng}
\IEEEauthorblockA{Northwest University\\
Xi'an, Shaanxi, China\\
xianjiam@nwu.edu.cn
}
\and
\IEEEauthorblockN{Yang Liu}
\IEEEauthorblockA{Xidian University\\
Xi'an, Shaanxi, China\\
bcds2018@foxmail.com
}
\and
\IEEEauthorblockN{Xin Kang}
\IEEEauthorblockA{Xidian University\\
Xi'an, Shaanxi, China\\
kangxin@stu.xidian.edu.cn
}
\and
\IEEEauthorblockN{Chenhui You}
\IEEEauthorblockA{Xidian University\\
Xi'an, Shaanxi, China\\
25151213660@stu.xidian.edu.cn
}
\and
\IEEEauthorblockN{Yong Zeng}
\IEEEauthorblockA{Xidian University\\
Xi'an, Shaanxi, China\\
yzeng@mail.xidian.edu.cn
}
\and
\IEEEauthorblockN{Zhuo Ma}
\IEEEauthorblockA{Xidian University\\
Xi'an, Shaanxi, China\\
mazhuo@mail.xidian.edu.cn
}}


%


\maketitle

\begin{abstract}
Statistical analysis on sensitive datasets like medical records and financial transactions is essential for decision-making, but raises significant privacy concerns. 
While existing secure Two-Party Computation (2PC) makes extensive efforts in designing the common secure primitives (e.g., addition and multiplication) or machine learning-related functions, few pay attention to the statistical functions.
In this paper, we propose \sysname, a secure toolkit tailored for 2PC secure statistical analysis.
Specifically, to develop \sysname, we first refer to Microsoft Excel’s
statistical library and summarize that most statistical operations can be achieved with three core functions: 1) frequency counting, 2) sorting, and 3) non-linear math functions.
Then, for each core statistical function, \sysname presents an efficient 2PC implementation.
For secure frequency counting, \sysname adopts a secure shift-based strategy to avoid invoking expensive 2PC equality test protocols.
For secure sort, \sysname involves a secure segment-indicator protocol to achieve secure counting-based sort, which enables fast element sorting over specific statistical scenarios without the need for secure comparison.
For non-linear math functions, we enhance the current reduce-then-approximate paradigm by introducing a bisection-based range reduction protocol.
Finally, we implement \sysname and test it on $14$ common statistical analysis cases.
As an example, for the chi-square test, \sysname achieves a $1.5\times$ runtime speedup and a $4.2\times$ reduction in communication compared to directly using the current general-purpose 2PC library to realize it.

\end{abstract}


%
\IEEEpeerreviewmaketitle

\section{Introduction}

Statistical analysis plays a pivotal role in enabling data-driven decision-making across diverse domains, including big data analytics, social sciences, and etc. 
However, in these domains, the pervasive issue of data privacy leakage has emerged as a significant concern for both industry and academia \cite{seyidova2025mathematical, pandey2025privacy, liu2024survey}.
To address this challenge, the secure two-party computation (2PC) technique has gained considerable attention from the research community, offering a promising avenue for mitigating privacy risks.

Although 2PC is extensively studied~\cite{guo2025seaf,liu2025mlformer,li2024nimbus, hou2023ciphergpt, chandran2019ezpc}, significant challenges remain in deploying it within the statistical analysis scenarios.
Existing studies on 2PC primarily focus on either low-level secure primitives like secure multiplication and secure comparison~\cite{eqlu2024efficient,rathee2020cryptflow2,huang2022cheetah}, or high-level applications such as privacy-preserving machine learning model training and predictions~\cite{xu2025breaking, guo2025seaf, luo2024secformer}.
Nevertheless, turning to the statistical operations such as $\mathsf{Median}$ and $\mathsf{Rank}$, existing schemes~\cite{hou2023ciphergpt, hamada2012practically} typically adopt the most straightforward solution: directly utilizing 2PC primitives to reproduce the operations.
As a result, even with the state-of-the-art (SOTA) protocols~\cite{hou2023ciphergpt, bogdanov2014practical}, computing a $\mathsf{Rank}$ function over $2^{12}$ elements still requires more than $18$ minutes.
Intuitively, such overhead renders these solutions impractical for most applications.

\subsection{Our Techniques}
In this work, we present \sysname, a secure toolkit tailored for 2PC privacy-preserving statistical analysis.
Specifically, to develop \sysname, our research mainly proceeds with the following three steps: 
1) analyze the requirement to realize 2PC statistical analysis;
2) propose three secure and efficient 2PC protocols for core statistical functions;
3) test \sysname over an extensive range of specific statistical analysis cases.

\begin{table}[htbp]
  \centering
  \begin{threeparttable}
    \caption{Commonly used statistical functions in Excel and their required building blocks:
    counting function ($\mathsf{Frequency}$), 
    sort function ($\mathsf{Sort}$),
    mathematical non-linear functions ($\ln x, \sqrt{x}, 1/x$).
    Operations relying only on addition/multiplication (e.g., mean, variance) 
    are excluded since these are well-supported in standard 2PC frameworks.}
\label{tab: function_analysis}
\setlength{\tabcolsep}{1.6pt}
\renewcommand{\arraystretch}{1.2}
\begin{tabular}{cc|ccc}
\hline
\multicolumn{2}{c|}{\multirow{2}{*}{Statistical Operations}} & \multicolumn{3}{c}{Building Blocks} \\ \cline{3-5} 
\multicolumn{2}{c|}{} & \multicolumn{1}{c|}{Counting Func.} & \multicolumn{1}{c|}{Sort Func.} & Math Func(s). \\ \hline
\multicolumn{1}{c|}{\multirow{5}{*}{\begin{tabular}[c]{@{}c@{}}Order \&\\ Frequency\end{tabular}}} & Median & \multicolumn{1}{c|}{$\square$} & \multicolumn{1}{c|}{$\blacksquare$} & $\square$ \\ \cline{2-5} 
\multicolumn{1}{c|}{} & Mode & \multicolumn{1}{c|}{$\blacksquare$} & \multicolumn{1}{c|}{$\blacksquare$} & $\square$ \\ \cline{2-5} 
\multicolumn{1}{c|}{} & Percentile & \multicolumn{1}{c|}{$\square$} & \multicolumn{1}{c|}{$\blacksquare$} & $\square$ \\ \cline{2-5} 
\multicolumn{1}{c|}{} & Rank & \multicolumn{1}{c|}{$\square$} & \multicolumn{1}{c|}{$\blacksquare$} & $\square$ \\ \cline{2-5} 
\multicolumn{1}{c|}{} & Min/Max & \multicolumn{1}{c|}{$\square$} & \multicolumn{1}{c|}{$\blacksquare$} & $\square$ \\ \hline
\multicolumn{1}{c|}{\multirow{3}{*}{\begin{tabular}[c]{@{}c@{}}Descriptive\\ Statistics\end{tabular}}} & Std & \multicolumn{1}{c|}{$\square$} & \multicolumn{1}{c|}{$\square$} & $\blacksquare$ \\ \cline{2-5} 
\multicolumn{1}{c|}{} & Skew & \multicolumn{1}{c|}{$\square$} & \multicolumn{1}{c|}{$\square$} & $\blacksquare$ \\ \cline{2-5} 
\multicolumn{1}{c|}{} & Kurt & \multicolumn{1}{c|}{$\square$} & \multicolumn{1}{c|}{$\square$} & $\blacksquare$ \\ \hline
\multicolumn{1}{c|}{\multirow{2}{*}{\begin{tabular}[c]{@{}c@{}}Regression\\ Analysis\end{tabular}}} & Linest & \multicolumn{1}{c|}{$\square$} & \multicolumn{1}{c|}{$\square$} & $\blacksquare$ \\ \cline{2-5} 
\multicolumn{1}{c|}{} & Logest & \multicolumn{1}{c|}{$\square$} & \multicolumn{1}{c|}{$\square$} & $\blacksquare$ \\ \hline
\multicolumn{1}{c|}{Correlation} & Pearson & \multicolumn{1}{c|}{$\square$} & \multicolumn{1}{c|}{$\square$} & $\blacksquare$ \\ \hline
\multicolumn{1}{c|}{\multirow{3}{*}{\begin{tabular}[c]{@{}c@{}}Hypothesis\\ Testing\end{tabular}}} & T.Test & \multicolumn{1}{c|}{$\square$} & \multicolumn{1}{c|}{$\square$} & $\blacksquare$ \\ \cline{2-5} 
\multicolumn{1}{c|}{} & CHISQ.Test & \multicolumn{1}{c|}{$\blacksquare$} & \multicolumn{1}{c|}{$\square$} & $\blacksquare$ \\ \cline{2-5} 
\multicolumn{1}{c|}{} & F.Test & \multicolumn{1}{c|}{$\square$} & \multicolumn{1}{c|}{$\square$} & $\blacksquare$ \\ \hline
\end{tabular}
    \begin{tablenotes}[flushleft]
      \footnotesize
      \item $\blacksquare$ = required; $\square$ = not required.
    \end{tablenotes}
  \end{threeparttable}
\end{table}

\subsubsection{Requirement analysis}
Statistical analysis involves a wide range of different operations, which makes it impossible to enumerate and develop a 2PC protocol for each of them.
Therefore, to simplify the design of \sysname, we refer to Microsoft Excel’s statistical library\footnote{https://support.microsoft.com/en-us/office/statistical-functions-reference-624dac86-a375-4435-bc25-76d659719ffd} and summarize the common functions used for most statistical operations.
\autoref{tab: function_analysis} shows our investigation result.
From this, we can categorize the common statistical operations into three groups:
\begin{itemize}[leftmargin=1.5em]
    \item The frequency counting-based operations ($\mathsf{Frequency}$), which outputs the frequency of the chosen element values and offers support to counting-related statistical operations, e.g., $\mathsf{Mode}$ and $\mathsf{CHISQ.Test}$.
    \item The sorting-based operations ($\mathsf{Sort}$), which sort the chosen data elements with a specific order and offer support to order-related operations, e.g., $\mathsf{Median}$ and $\mathsf{Rank}$.
    \item Non-linear math function(s)-based operations, e.g. $\mathsf{Logarithm}$, used to implement algebraic statistical functions, e.g., $\mathsf{Std}$ and $\mathsf{Skew}$.
\end{itemize}
Here, we omit the reference to basic linear functions like addition and multiplication because they have been extensively discussed in prior works.
According to the above, we can derive that to develop a secure 2PC library for private statistical analysis, the main task is to implement three core functions, i.e., \textbf{secure counting function}, \textbf{secure sort function}, and \textbf{secure math non-linear function(s)}.

\subsubsection{Secure counting function (\autoref{text: count_method})}
To obtain the private frequency counting of each value $v$, previous methods~\cite{zhang2011generic} need to separately check every input whether it is equal to $v$, leading to a high execution complexity.
\sysname introduces a new paradigm for frequency counting without the dependence on equality test.
The key observation of our paradigm is that the result of performing equality test between input $x$ and each value $v\in[0, m-1]$ can be viewed as a boolean vector of length $m$. 
The vector has value $1$ at position $x$ and $0$ elsewhere, abbr. $\vec{e}^x$.
Instead of constructing $\vec{e}^{x}$ through $m$ equality tests, we note that $\vec{e}^{x}$ can be derived as an $x$ offset right-shifted version of vector $\vec{e}^{0}$.
With this shift-based representation, we can replace the $m$ times equality test with one vector shift to implement the secure counting function.
Note that our oblivious shift only needs $\mathcal{O}(\log m)$ invocation of oblivious transfer (OT), while equality test needs $\mathcal{O}(m\ell)$ OTs~\cite{rathee2020cryptflow2, huang2022cheetah}, where $\ell$ is the data length of the input.
Concretely, for frequency counting, \sysname achieves $4.0$-$7.9\times$ runtime improvement and $3.1$-$8.6 \times$ communication improvement compared to the current equality-based frequency counting methods~\cite{zhang2011generic}.

\subsubsection{Secure sort function (\autoref{text: order_method})}
Prior secure sort~\cite{wprfpeceny2025efficient, hou2023ciphergpt, hamada2012practically} offer efficient general-purpose solutions.
However, they become inefficient for statistical tasks on large dataset with small, enumerable feature domains\footnote{We say that such an assumption is common for statistical apps, e.g., analyzing the relationship between income level and educational attainment at the city level, where each feature has fewer than 10 discrete categories, while the dataset contains tens of thousands of records.}, 
because their performance is dominated by secure comparison, which is widely recognized as the performance bottleneck in 2PC~\cite{ABYv2,huang2022cheetah, zhou2023bicoptor}.
To break the bottleneck, \sysname proposes a secure counting-based sort protocol without the reliance on secure comparisons. 
Commonly, counting sort is carried out with two steps: 1) frequency vector construction and 2) element insertion.
Intuitively, the former step can be well implemented with the above-mentioned secure frequency counting protocol.
While for the second step, a direct solution is to use the secure iterative insertion method introduced by~\cite{azogagh2024non}.
However, as depicted in~\autoref{fig: count_sort}, this insertion method can only be executed sequentially, which makes it suffer from a very high circuit depth (i.e., $\mathcal{O}(n)$).

Relatively, \sysname introduces a secure segment-indicator protocol to achieve elements insertion for counting-based sort.
With private input $a, b$, this protocol outputs a secret-shared indicator vector equal to $1$ on the interval $[a, b]$ and $0$ elsewhere
Using this protocol, \sysname can insert all elements with the same value into the output vector via a single element-wise multiplication as shown in~\autoref{fig: count_sort}.
As a result, besides the $\mathcal{O}(1)$ circuit depth, \sysname only needs $\mathcal{O}(n(m + \log m))$ OTs to achieve secure sort, which outperforms the previous comparison-based sort with $\mathcal{O}(n\log n\log (nm))$ OTs, where $n$ is the data size and $m$ is the domain size.
In our experiments with $m \le \log^2 n$, \sysname achieves $3.4$-$20.5 \times$ runtime speedup and $1.3$-$7.6\times$ communication reduction compared to the prior methods~\cite{wprfpeceny2025efficient, hou2023ciphergpt, hamada2012practically}. 

\begin{figure}[htbp]
    \centering
    \includegraphics[scale=0.65]{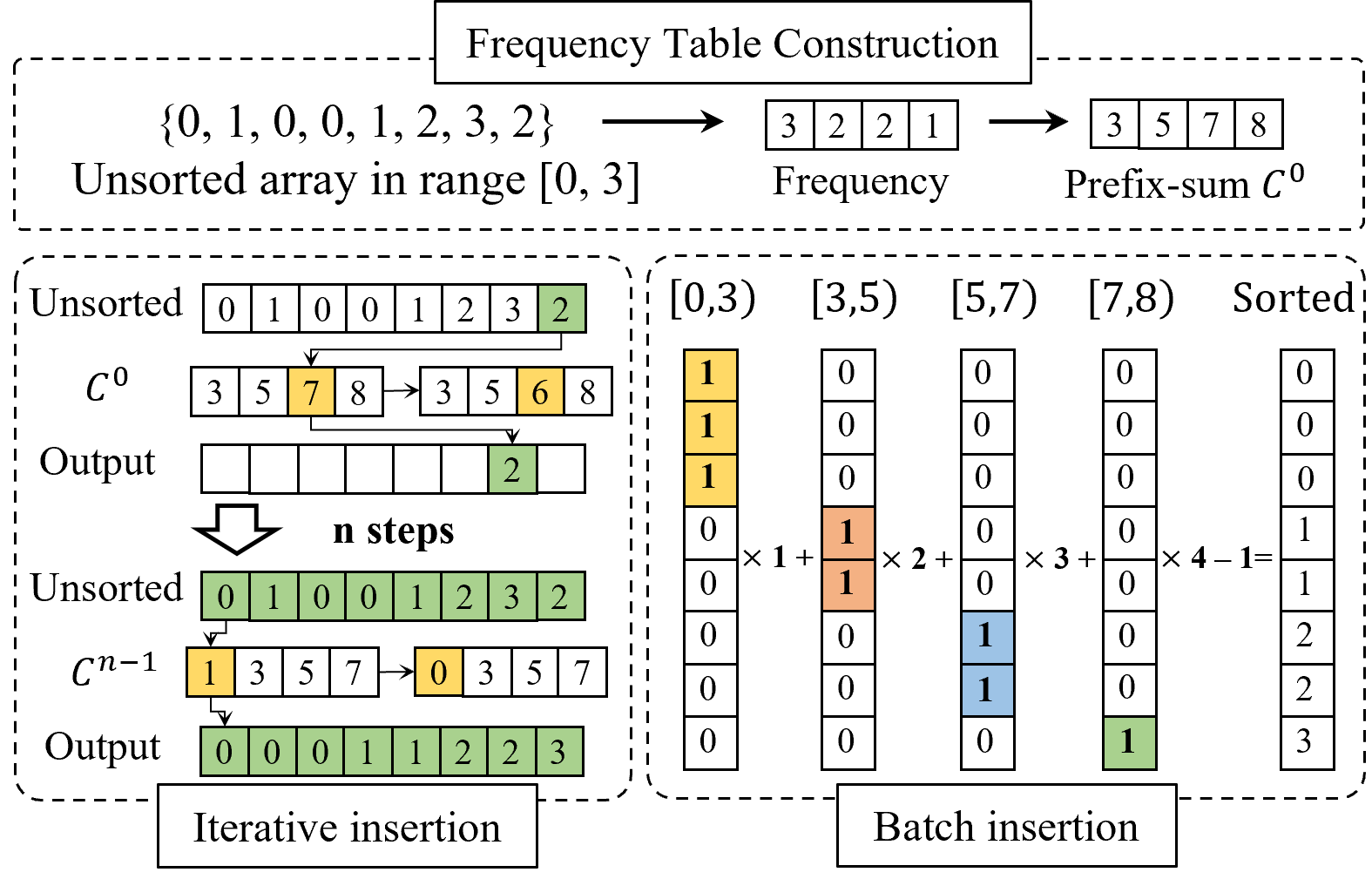}
    \caption{A toy example of iterative and batch insertion in counting sort.}
    \label{fig: count_sort}
    \Description[Short description]{Toy example for iterative insertion and batch insertion in counting sort}
\end{figure}

\subsubsection{Secure non-linear math functions (\autoref{text: math_method})}
To securely evaluate non-linear functions, existing pipelines~\cite{keller2020mp,hua2024ppglove,rathee2021sirnn,keller2022secure,lu2020faster}
mainly contains two steps: 1) range normalization, rewriting the input as $x = 2^t x'$; and 2) polynomial approximation of the non-linear function over $x'$.
A dominant cost in this pipeline arises from range reduction (40\%-60\%), whose core operation is identifying the Most Significant Non-Zero (MSNZ) bit of $x$.
Existing protocol~\cite{rathee2021sirnn} computes the MSNZ bit of $x$ with a linear strategy: decomposing the $\ell$-bit input into $d$ digits of $c$-bit ($\ell = c\cdot d$), and sequentially examining the digits from most to least to locate the MSNZ digit.
The exact MSNZ bit is then obtained via a private lookup table.
This leads to $\mathcal{O}(d)$ communication rounds.
Observing that the MSNZ digit always lies in the non-zero half of $x$, \sysname proposes a bisection-based protocol for MSNZ bit identification.
Instead of scanning digits sequentially, our protocol recursively partitions the input and selects the non-zero half at each step, reducing the communication rounds to $\mathcal{O}(\log d)$.
By incorporating this bisection-based MSNZ protocol, our implementation of \sysname reduces runtime by $1.2$-$1.7\times$ and communication by $1.15$-$1.16\times$ for secure non-linear function evaluation compared to the current work ~\cite{rathee2021sirnn}.

\subsubsection{Case Study (\autoref{text: case})}
To validate the improvement of \sysname, we experiment with it on $14$ specific cases of statistical operations.
Since parts of the operations (e.g., $\mathsf{Rank}$ and $\mathsf{CHISQ.Test}$) have not been discussed in previous work, we directly implement them with SOTA 2PC primitives~\cite{ma2023secretflow,chandran2019ezpc} as baselines.
According to the results, \sysname gives an optimization on all $14$ cases.
As an example, for the chi-square test, \sysname achieves a $1.5\times$ runtime speedup and a $4.2\times$ reduction in communication compared to baselines.

The contributions of this work are summarized below.
\begin{itemize}[leftmargin=1.5em]
    \item \textbf{Secure counting function}. 
    We propose a secure shift-based frequency counting function that reduces the OT invocations from $\mathcal{O}(m\ell)$ to $\mathcal{O}(\log m)$. 
    Moreover, we further optimize a customized secure counting function with the Chinese Remainder Theorem (CRT), which decomposes the feature domain into smaller sub-domains, enabling the construction of shorter frequency vectors.
        
    \item \textbf{Secure sort function}.    
    We introduce a novel secure counting sort protocol designed for statistical tasks on large datasets with small feature domains.
    By leveraging a secure segment-indicator protocol, our approach achieves constant communication rounds. 
    In addition, considering $\mathsf{percentiles}$, a special case of sorting, we reframe it as an interval-testing problem, achieving notable efficiency when the feature domain is small.

    \item \textbf{Secure non-linear math functions}.
    We improve the efficiency of secure non-linear functions by introducing a novel secure MSNZB protocol that reduces the communication rounds from $\mathcal{O}(n)$ to $\mathcal{O}(\log n)$.
    Furthermore, we characterize the relationship between the degree of an approximation polynomial and its valid approximation range, and offer an alternative efficiency optimization direction for secure non-linear math functions, i.e., interval compression.

    \item \textbf{Security and performance}.
    Besides providing formal security proof for our proposed protocols, we implement \sysname and benchmark the proposed protocols. 
    The results on $14$ commonly used statistical functions demonstrate that \sysname provides a more efficient implementation compared to directly using EzPC~\cite{chandran2019ezpc}.
\end{itemize}

\section{Preliminaries}

\textit{Notation.} For an vector $\vec{a}$, $\vec{a}[i]$ denotes the i-th element in $\vec{a}$. $[n]$ denotes the set of integers $\{0, \dots ,n-1\}$. 
Let $a \gets \mathbb{Z}_{2^\ell}$ denotes sampling a number uniformly at random from $\mathbb{Z}_{2^\ell}$.
Let $1\{b\}$ denotes the indicator function that is $1$ when $b$ is \textit{true} and $0$ when $b$ is \textit{false}.
Let $\ell_m = \lceil \log m\rceil$ and $\ell_n = \lceil \log n\rceil$.
$\lambda$ is the security parameter and is typically set to 128.

\subsection{System Model}
We consider a two-party system consisting of parties $P_0$ and $P_1$ that jointly conduct a statistical task, such as computing the correlation between education level and income.
$P_0$ and $P_1$ owns the secret-shared input $\langle \vec{x}\rangle_b$ respectively.
The parties compute corresponding statistical functions on the secret-shared inputs via \sysname.
The output is secret-shared between parties, which can be revealed when necessary.

\subsection{Security Model}
We provide security in the Universal Composability (UC) framework~\cite{canetti2001universally} against a static semi-honest probabilistic polynomial-time (PPT) adversary $\mathcal{A}$.
In this model, a computationally bounded adversary $\mathcal{A}$ corrupts either $P_0$ or $P_1$ at the start of the protocol execution and then honestly follows the prescribed steps of the protocol.
Security is defined by comparing two executions:
(1) a real-world execution $\mathsf{Real}_{ \Pi,\mathcal{A},\mathcal{Z}}(1^\lambda)$, where $P_0$ and $P_1$ run the protocol in the presence of $\mathcal{A}$ and an environment $\mathcal{Z}$; and
(2) an ideal-world execution $\mathsf{Ideal}_{\mathcal{F},\mathcal{S},\mathcal{Z}}(1^\lambda)$, where both parties submit their inputs to a trusted functionality that performs the computation correctly and outputs the result.
Many of our constructions employ multiple sub-protocols.
We formalize these using the hybrid model, which resembles the real-world execution except that calls to sub-protocols are replaced with invocations of their corresponding ideal functionalities.
A protocol that makes use of an ideal functionality $\mathcal{F}$ is said to operate in the $\mathcal{F}$-hybrid model.

\begin{definition}
    We say protocol $\Pi$ UC-secure realizes functionality $\mathcal{F}$ if for all PPT adversaries $\mathcal{A}$ there exists a PPT simulator $\mathcal{S}$ such that for all PPT environment $\mathcal{Z}$, it holds:
    $$
    \mathsf{Real}_{ \Pi,\mathcal{A},\mathcal{Z}}(1^\lambda) \; \approx \;  \mathsf{Ideal}_{\mathcal{F},\mathcal{S},\mathcal{Z}}(1^\lambda) 
    $$ 
\end{definition}

\subsection{Cryptographic Primitives}
\label{text: primitives}
\textbf{Secret Sharing Schemes}.
Throughout this work, we use 2-out-of-2 additive secret sharing schemes over different rings. 
There are 3 specific rings that we consider: $\mathbb{Z}_2$, $\mathbb{Z}_{2^\ell}$, and $\mathbb{Z}_N$, for a positive $N$ ($N$ is not a power of $2$).
The secret share function takes as input $x \in \mathbb{Z}_{2^\ell}$, and output shares over $\mathbb{Z}_{2^\ell}$, denoted by $\langle x\rangle_0^{2^\ell}$ and $\langle x\rangle_1^{2^\ell}$, with the only constriction is $\langle x\rangle_0^{2^\ell} + \langle x\rangle_1^{2^\ell} = x$ ($+$ denotes addition in $\mathbb{Z}_{2^\ell}$). 
The secret share schemes are defined similarly in $\mathbb{Z}_2$ and $\mathbb{Z}_N$.
We sometimes refer to shares over $\mathbb{Z}_N$ and $\mathbb{Z}_{2^\ell}$ as arithmetic shares, denoted as $\langle \cdot \rangle^{2^\ell}, \langle \cdot \rangle^N$ and shares over $\mathbb{Z}_2$ as boolean shares, denoted as $\langle \cdot \rangle^B$.
When the context is clear, we omit the superscript notation like $\langle \cdot\rangle$ for simplicity.

\textbf{Oblivious Transfer}.
Oblivious Transfer (OT) is the functionality for the sender to input messages $m_0, \dots,m_{k-1}$ and the receiver to input the choice $c \in Z_k$.
The receiver receives $m_c$ as output, but doesn't know the other messages.
The sender has nothing about the choice $c$.
We denote a 1-out-of-k OT as $(1,k)\text{-OT}_{\ell}$.
Additionally, we also use two specific OT: the 1-out-of 2 correlated OT $(1,2)\text{-COT}_\ell$ and the 1-out-of 2 random OT $(1,2)\text{-ROT}_\ell$.
$(1,2)\text{-COT}_\ell$ is defined as follows: sender inputs a correlation $x$, receiver inputs a choice bit $i \in \{0,1\}$ and the functionality outputs a random element $r$ to the sender and $-r + i\cdot x$ to the receiver.
While in $(1,2)\text{-ROT}_\ell$, the sender receives two random messages $m_1, m_2$.
The receiver receives the random choice bit $b$ and the corresponding message $m_b$.
These OTs can be implemented using either IKNP-style~\cite{asharov2013more} or VOLE-style~\cite{yang2020ferret}.
For IKNP-style OT, excluding the one-time setup cost for the base OTs, $(1,k)\text{-OT}_\ell$, $(1,2)\text{-COT}_\ell$ require $2\lambda + k\ell$, $\lambda + \ell$.
Specifically, simpler $(1,2)$-$\text{OT}_\ell$ has a communication of $\lambda + 2\ell$.
For fair comparison, we adopt IKNP-style OT for the following protocol complexity analysis.

\textbf{Random Vector Oblivious Shift Evaluation}.
The functionality $\mathcal{F}_\text{RVOSE}$~\cite{eqlu2024efficient} takes a boolean vector $\vec{a}$ as input from $P_0$, and outputs $\vec{b}$ to $P_0$, and outputs an offset $\epsilon$ and vector $\vec{c}=\text{shift}(\epsilon, \vec{a})\oplus \vec{b}$ to $P_1$.
This functionality can be implemented by~\cite{eqlu2024efficient} with a communication complexity $3\lambda \log n$.
We can freely extend this functionality $\mathcal{F}_\text{RVOSE}$ to the arithmetic version, where $\vec{c}=\text{shift}(\epsilon, \vec{a}) + \vec{b}$.
The only difference with $\Pi_\text{RVOSE}$ in~\cite{eqlu2024efficient} is that we use the pseudo-random generator to derive the puncture matrix, where each element in $\mathbb{Z}_{2^\ell}$ instead of $\mathbb{Z}_{2}$.

\textbf{Multiplexer \& B2A conversion}.
The functionality $\mathcal{F}_\text{MUX}$~\cite{rathee2021sirnn} takes as input arithmetic shares of $a$, boolean shares of $b$, and outputs shares of $a$ if $b=1$, else outputs shares of $0$.
A protocol for $\mathcal{F}_\text{MUX}$ can be implemented by $2$ simultaneous calls to 
$(1,2)\text{-COT}_\ell$ and communication complexity is $2(\lambda + \ell)$.
The functionality $\mathcal{F}_\text{B2A}$~\cite{rathee2021sirnn} takes boolean shares as input and gives out arithmetic shares of the same value as output. 
It can be realized via one call to $(1,2)\text{-COT}_\ell$, and communication complexity is $\lambda + \ell$.

\textbf{AND Gate}.
The functionality $\mathcal{F}_{AND}$~\cite{rathee2021sirnn} takes as input boolean shares of $a$ and $b$ from $P_0$, $P_1$, and outputs boolean shares of $a\land b$.
The protocol for $\mathcal{F}_{AND}$~\cite{rathee2020cryptflow2, rathee2021sirnn, huang2022cheetah} can be implemented based on Beaver Triples, which can be implemented with $(1,16)$-$OT_2$, and communication complexity is $\lambda + 20$ bits.

\textbf{Equality Test \& Millionaires}. 
The functionality $\mathcal{F}_\text{eq}$~\cite{rathee2021sirnn} takes as input arithmetic shares of $a$ and $b$ from $P_0$, $P_1$, and returns shares of one bit $1\{a=b\}$.
The functionality $\mathcal{F}_\text{Mill}$~\cite{rathee2021sirnn} takes as input arithmetic shares of $a$ and $b$ from $P_0$, $P_1$, and returns shares of one bit $1\{a>=b\}$.
\cite{rathee2020cryptflow2} implement $\mathcal{F}_{eq}$ with the communication complexity below $\frac{3}{4}\lambda \ell + 9\ell$ and $\mathcal{F}_\text{Mill}$ below $\lambda \ell + 14\ell$.

\textbf{Share Conversion}.
The functionality $\mathcal{F}_\text{convert}^{N_1 \rightarrow N_2}$~\cite{rathee2021sirnn} takes as input arithmetic shares of $x$ on $\mathbb{Z}_{N_1}$ from $P_0$, $P_1$, and outputs arithmetic shares of $x$ on $\mathbb{Z}_{N_2}$.
The protocol for $\mathcal{F}_\text{convert}^{N_1 \rightarrow N_2}$ can be implemented based on $\mathcal{F}_\text{Mill}$ and $\mathcal{F}_\text{B2A}$~\cite{rathee2021sirnn}, and communication complexity is less than $\lambda \ell_1 + 14\ell_1 + \lambda + \ell_2$, $\ell_1=\lceil \log N_1\rceil, \ell_2 = \lceil \log N_2\rceil$.

\section{The Secure Counting Function}

\label{text: count_method}
In this section, we present our secure counting protocol, a core primitive underpinning many statistical analyses over private data such as the $\mathsf{Mode}$ and $\mathsf{CHISQ.Test}$. 
We formalize the secure counting functionality $\mathcal{F}_\text{PFC}$ in \autoref{func: pfc}, which securely constructs a frequency vector $\langle \vec{y}\rangle$ from a secret-shared vector $\langle \vec{x}\rangle$.
Beyond the general setting, we consider a practically important special case in which half of the input features are identical\footnote{This scenario commonly arises in statistical applications such as abnormal network traffic analysis~\cite{morsy2025vb, hennequin2024multi} and industrial equipment condition analysis~\cite{lahama2024implementation,kallitsis2016amon,brass2015local}.}, and introduce a customized CRT-based secure frequency counting protocol that partitions the feature domain into small sub-domains.

\myherebox{Functionality $\mathcal{F}_\text{PFC}^{m,n}$}{white!20}{white!10}{
  $\mathcal{F}_\text{PFC}$ interacts with $P_0$, $P_1$ and the simulator $\mathcal{S}$.

\underline{\textbf{Parameters:}} $n$ is the size of the dataset; $m$ is the size of feature domain. $\ell_n = \lceil \log n\rceil$, $\ell_m = \lceil \log m\rceil$.
  
\underline{\textbf{Input:}}
  \begin{itemize}[leftmargin=*]
  \item Upon receiving $(\mathsf{Input}, \mathsf{sid}, \langle \vec{x}\rangle_0)$ from $P_0$, record $\langle \vec{x}\rangle_0$ and send $(\mathsf{Input}, \mathsf{sid}, P_0)$ to $\mathcal{S}$, where $\langle \vec{x}\rangle_0 \in \mathbb{Z}_m^n$.
  \item Upon receiving $(\mathsf{Input}, \mathsf{sid}, \langle \vec{x}\rangle_1)$ from $P_1$, record $\langle \vec{x}\rangle_1$ and send $(\mathsf{Input}, \mathsf{sid}, P_1)$ to $\mathcal{S}$, where $\langle \vec{x}\rangle_1 \in \mathbb{Z}_m^n$.
  \end{itemize}
    
\underline{\textbf{Execution:}}
  \begin{itemize}[leftmargin=*]
   \item If both $\langle \vec{x}\rangle_0, \langle \vec{x}\rangle_1$ are recorded, reveal $\vec{x} := \langle \vec{x}\rangle_0 + \langle \vec{x}\rangle_1$, compute the frequency vector $\vec{y}$, uniform randomly sample vector $\langle \vec{y}\rangle_0 \gets \mathbb{Z}_{2^{\ell_n}}^m$, and compute $\langle \vec{y}\rangle_1 := \vec{y} - \langle \vec{y}\rangle_0$, $\langle \vec{y} \rangle_1 \in \mathbb{Z}_{2^{\ell_n}}^m$.
   \item Send $(\mathsf{Output},\mathsf{sid}, \langle \vec{y}\rangle_b)$ to $P_b$, $b \in \{0, 1\}$.
  \end{itemize}
}{The Ideal Functionality $\mathcal{F}_\text{PFC}$.}{func: pfc}

\subsection{Private Frequency Counting via Secure Shift}
\label{text: subpfs}
To get the private frequency vector, previous methods~\cite{zhang2011generic} get the frequency of $v$ through $n$ times equality test, i.e., $1\{x_i=v\}, i \in [n]$, convert the boolean shares to the arithmetic shares, and then sum up these equality test results.
The intuition of our private frequency counting is that if we combine the result of all $1\{x = v_j\}, j\in[m]$, we can get a boolean unit vector $\vec{e}^x$, where $\vec{e}^x$ equals $1$ only in the index $x$ and $0$ else.
To obtain the boolean vector $\vec{e}^x$, instead of performing $m$ times equality test, we can get it through a secure vector shift, as shown in \autoref{fig: pfs_shift}.
Given the secret shares of $\langle x\rangle$, we can get $\vec{e}$ through two vector circular shifts on the unit vector $\vec{e}^0$ with the offsets $\langle x\rangle_0, \langle x\rangle_1$, respectively.
\begin{figure}[htbp]
    \centering
    \includegraphics[scale=0.8]{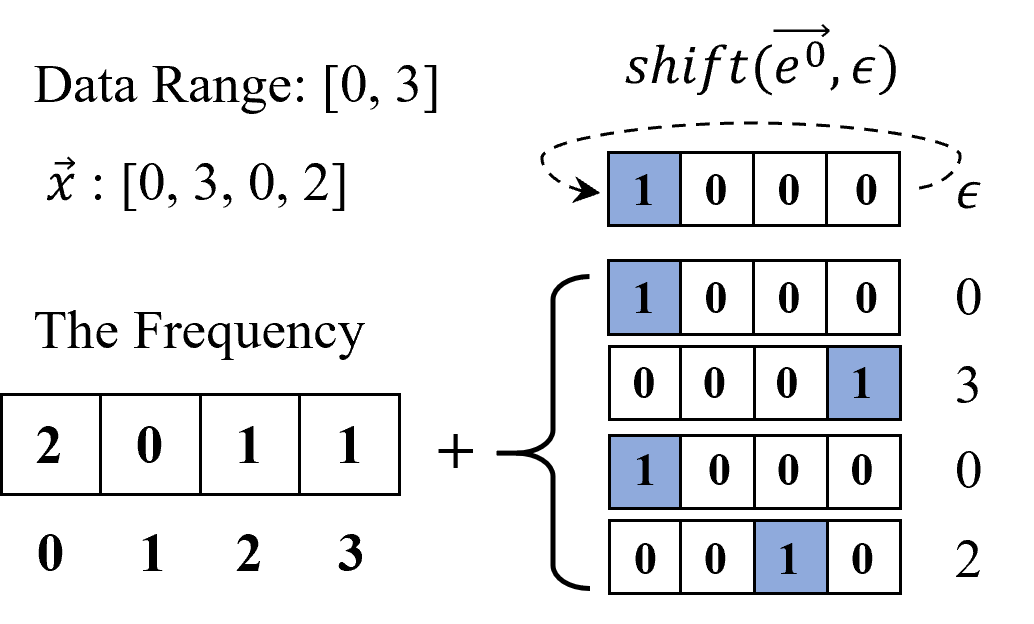}
    \caption{Illustration of shift-based secure frequency counting.}
    \Description{Diagram illustrating the shift-based private frequency counting approach.}
    \label{fig: pfs_shift}
\end{figure}

Specifically, these two circular shifts can be achieved in the following steps.
First $P_0$ can locally perform a circular shift with offset $\langle x\rangle_0$ and $\vec{e}^0$ to get the $\vec{e}^{\langle x\rangle_0}$.
To circular shift vector $\vec{e}^{\langle x\rangle_0}$ with offset $\langle x\rangle_1$, we leverage the functionality $\mathcal{F}_\text{RVOSE}$.
In the offline phase, parties invoke $\mathcal{F}_\text{RVOSE}$.
$P_0$ gets $\vec{a}, \vec{b}$, $P_1$ gets $\epsilon, \vec{c}$.
In the online phase, $P_0$ masks $\vec{e}^{\langle x\rangle_0}$ with $\vec{a}$, sends $\vec{d}:= \vec{e}^{\langle x\rangle_0} \oplus \vec{a}$ to $P_1$.
$P_1$ masks $\langle x\rangle_1$ with $\epsilon$, sends $\langle x\rangle_1 - \epsilon$ to $P_0$.
Receiving $\langle x\rangle_1 - \epsilon$, $P_0$ circular shifts $\vec{b}$ with offset $\langle x\rangle_1 - \epsilon$, outputs $\langle \vec{e}^{x}\rangle_0:= \text{shift}(\langle x\rangle_1 - \epsilon,\vec{b})$.
$P_1$ circular shifts $\vec{d}$ with $\epsilon$ offset, and computes $\vec{t}:=\text{shift}(\epsilon, \vec{d}) \oplus \vec{c}$, further circular shifts $\vec{t}$ with offset $\langle x\rangle_1 - \epsilon$, and outputs $\langle \vec{e}^{x}\rangle_1:= \text{shift}(\langle x\rangle_1 - \epsilon,\vec{t})$.
Performing the above steps for each input $x_i$, we can get the boolean share $\langle \vec{e}^{x_i} \rangle$.
Getting the boolean shares $\langle \vec{e}^{x_i} \rangle$, parties convert them into the arithmetic shares and locally sum up to get the secret shares of the frequency vector $\langle \vec{y}\rangle$.
The complete protocol is given in \autoref{pro: pfc}.

\myherebox{Protocol $\Pi_\text{PFC}^{m,n}$}{white!20}{white!10}{
   The protocol is parameterized by the dataset size $n$ and the feature domain $m$.
   $\ell_m = \lceil \log m\rceil, \ell_n = \lceil \log n\rceil$.

   \emph{$\mathsf{Input:}$} 
   For $b \in \{0, 1\}$, $P_b$ inputs the datasets $\langle \vec{x}\rangle_b \in \mathbb{Z}_m^n$. $m$ is the number of possible values.

   \emph{$\mathsf{Output:}$} 
   $P_b$ outputs the private frequency $\langle \vec{y}\rangle_b \in \mathbb{Z}_{2^{\ell_n}}^m$.

\underline{\textbf{Protocol:}}

\underline{\textbf{Offline:}}
\begin{enumerate}[label=\arabic*), leftmargin=*]
    \item For $i\in [n]$, $P_0$ samples vector $\vec{a}_i \gets \mathbb{Z}_{2}^m$ uniformly at random. 
    \item For $i\in [n]$, $P_0$ \& $P_1$ invoke $\mathcal{F}_\text{RVOSE}$ with $P_0$ inputs $\vec{a}_i$, and $P_0$ receives $\vec{b}_i \in \mathbb{Z}_2^m$, $P_1$ receives $\vec{c}_i \in \mathbb{Z}_2^m$ and $\epsilon_i \in \mathbb{Z}_{m}$.
\end{enumerate}
\underline{\textbf{Online:}}
\begin{enumerate}[label=\arabic*), leftmargin=*]
    \item For $i\in [n]$, $P_0$ initializes a boolean vector $\vec{e}_i \in \mathbb{Z}_{2}^m$, and sets $\vec{e}_i[j] := 1$ if $j = \langle \vec{x}[i]\rangle_0$ else $0$.
    \item For $i\in [n]$, $P_0$ computes $\vec{d}_i:=\vec{e}_i \oplus \vec{a}_i$ and sends it to $P_1$.
    \item For $i\in [n]$, $P_1$ computes $\langle \vec{x}[i]\rangle_1 - \epsilon_i$ and sends it to $P_0$.
    \item For $i\in [n]$, $P_1$ computes $\vec{t}_i := \text{shift}(\epsilon_i, \vec{d}_i)\oplus\vec{c}_i$, and computes $\text{shift}(\langle \vec{x}[i]\rangle_1 - \epsilon_i, \vec{t}_i)$ as $\langle \vec{y}''_i\rangle_1$.
    \item For $i\in [n]$, $P_0$ computes $\text{shift}(\langle \vec{x}[i]\rangle_1 - \epsilon_i, \vec{b}_i)$ as $\langle \vec{y}''_i\rangle_0$.
    \item For $i\in [n]$, $P_0$ \& $P_1$ invoke $\mathcal{F}_\text{B2A}$ with input $\langle \vec{y}''_i\rangle_0, \langle \vec{y}''_i\rangle_1$ and learns $\langle \vec{y}'_i\rangle_0, \langle \vec{y}'_i\rangle_1$.
    \item $P_b$ locally compute $\langle \vec{y}[j]\rangle_b := \sum_{i=0}^{n-1}\langle \vec{y}'_i[j]\rangle_b$.
\end{enumerate}
}{The Secure Frequency Counting Protocol.}{pro: pfc}

\textbf{Correctness.} The circular shift of a $m$ length vector $\vec{x}$ holds that if we have offset $\epsilon_0$, $\epsilon_1$ and $\epsilon=\epsilon_0 + \epsilon_1 \bmod m$, then the result of circular shift $\vec{x}$ with $\epsilon$ offset equals to twice circular shift of $\vec{x}$ with $\epsilon_0$ and $\epsilon_1$.
Given the correctness of $\mathcal{F}_\text{RVOSE}$, we have $\vec{t}\oplus\vec{b} = \text{shift}(\epsilon, \vec{e}^{\langle x\rangle_0} \oplus \vec{a})\oplus \vec{c} \oplus \vec{b} = \text{shift}(\epsilon, \vec{e}^{\langle x\rangle_0})$. The further circular shift with offset $\langle x\rangle_1 - \epsilon$ holds that $\langle \vec{y}''\rangle_0 \oplus \langle \vec{y}''\rangle_1 = \text{shift}(\langle x\rangle_1 - \epsilon, \vec{t}\oplus\vec{b})$. Such that the vector $\vec{e}^{0}$ shifts offset $\langle x\rangle_0 + \epsilon + \langle x\rangle_1 - \epsilon = x \bmod n$. 
Given the correctness of $\mathcal{F}_\text{B2A}$, the correctness of the frequency counting can easily follow.

\textbf{Efficiency.}
For each element in vector $\vec{x}$, the protocol $\Pi_\text{PFC}$ requires one call to $\mathcal{F}_\text{RVOSE}$ in the offline phase, $m$-length boolean vector and one index message communication, and $m$ calls to $\mathcal{F}_\text{B2A}$.
The offline phase requires $n3\lambda \log m$ bit communication with $2$ rounds.
The online phase requires $n(m+\ell_m + m\lambda + m\ell_n)$ bit communication with $3$ rounds.
The total cost is $n(3\lambda \log m + \ell_m + m(1 + \lambda + \ell_n))$ bits with $5$ communication rounds.
The total cost of previous frequency counting based on the equality test is $mn(\frac{3}{4}\lambda\ell_m + 9\ell_m + \lambda + \ell_n)$ with $\log \ell_m + 2$ communication rounds.
We also compare our protocol with the bit decomposition-based frequency counting design~\cite{wprfpeceny2025efficient}.
In this design, the total cost is $n(2\lambda l_m +40 l_m + m(\lambda + 20)/2 + m(\lambda + l_n))$ with $2l_m + 1$ rounds.
Asymptotically, the equality-based design requires $\mathcal{O}(mn(\log m + \log n))$. 
Our shift-based design instead requires $\mathcal{O}(n\log m + mn\log n)$.
The bit decomposition-based requires the same asymptotical complexity with our design, but has a higher communication round complexity.
Overall, the secure-shift design provides a lower complexity compared to the baselines.

\textbf{Security.}
We define the functionality $\mathcal{F}_\text{PFC}^{m,n}$ for secure frequency counting in \autoref{func: pfc} and describe the protocol in \autoref{pro: pfc}. We prove our protocol $\Pi_\text{PFC}^{m,n}$ realizes functionality $\mathcal{F}_\text{PFC}^{m,n}$.

\begin{theorem}
\label{theorem: pfc}
$\Pi_\text{PFC}^{m,n}$ UC realizes $\mathcal{F}_\text{PFC}^{m,n}$ in the $(\mathcal{F}_\text{B2A}, \mathcal{F}_\text{RVOSE})$ hybrid model against semi-honest probabilistic polynomial time (PPT) adversaries with statistical corruption.
\end{theorem}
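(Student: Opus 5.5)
We construct, for each corrupted party $P_b$, a simulator $\mathcal{S}$ that takes $P_b$'s input share $\langle \vec{x}\rangle_b$ and output share $\langle \vec{y}\rangle_b$ from $\mathcal{F}_\text{PFC}^{m,n}$. In the $(\mathcal{F}_\text{B2A}, \mathcal{F}_\text{RVOSE})$-hybrid model, $\mathcal{S}$ must produce (i) the outputs $P_b$ receives from the two ideal sub-functionalities and (ii) the single online message $P_b$ receives from the honest party. Indistinguishability is then argued through a short hybrid sequence. Because every simulated value is identically distributed to the real one, the argument is perfect rather than computational, which is consistent with the theorem's "statistical" phrasing. Correctness was argued above, so the joint distribution with the honest party's output matches once the corrupted view is simulated consistently with $\langle \vec{y}\rangle_b$.

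\emph{Corrupted $P_0$.} $\mathcal{S}$ samples $\vec{a}_i \gets \mathbb{Z}_2^m$ and hands $P_0$ a uniform $\vec{b}_i \gets \mathbb{Z}_2^m$ as its $\mathcal{F}_\text{RVOSE}$ output. In the real protocol $P_0$'s RVOSE output $\vec{b}_i$ is uniform, being masked by $P_1$'s side. $\mathcal{S}$ then simulates the incoming message $\langle \vec{x}[i]\rangle_1 - \epsilon_i$ by a uniform element of $\mathbb{Z}_m$. This is exact: in the real world $\epsilon_i$ is uniform in $\mathbb{Z}_m$ and is independent of everything else in $P_0$'s view, because $\vec{c}_i = \text{shift}(\epsilon_i,\vec{a}_i)\oplus\vec{b}_i$ stays on $P_1$'s side. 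Finally, for the $\mathcal{F}_\text{B2A}$ calls, $\mathcal{S}$ returns to $P_0$ arithmetic vectors $\langle \vec{y}'_i\rangle_0$ chosen uniformly subject to $\sum_i \langle \vec{y}'_i\rangle_0 = \langle \vec{y}\rangle_0$. This matches the real distribution because $\mathcal{F}_\text{B2A}$ outputs fresh uniform shares, so any $n-1$ of them are uniform and the last is determined by the functionality's output share.

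\emph{Corrupted $P_1$.} $\mathcal{S}$ gives $P_1$ a uniform $\epsilon_i \gets \mathbb{Z}_m$ and a uniform $\vec{c}_i \gets \mathbb{Z}_2^m$ as RVOSE outputs. It simulates $\vec{d}_i$ by a fresh uniform vector in $\mathbb{Z}_2^m$. In the real world $\vec{d}_i = \vec{e}_i \oplus \vec{a}_i$ is a one-time pad, and $\vec{c}_i$ reveals nothing about $\vec{a}_i$ since it is further masked by the uniform $\vec{b}_i$ held only by $P_0$. The B2A outputs are handled exactly as in the $P_0$ case.

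The main obstacle is the joint independence claim. For $P_1$ we must check that the triple $(\vec{c}_i, \epsilon_i, \vec{d}_i)$ is jointly uniform, not merely uniform in each coordinate. For $P_0$ we must check the analogous claim for $(\vec{b}_i, \langle\vec{x}[i]\rangle_1-\epsilon_i)$. The plan is to argue this by viewing $\vec{b}_i$ (respectively $\epsilon_i$) as a fresh uniform mask, so that conditioning on all other values leaves the masked value uniform; this step relies on the specification of $\mathcal{F}_\text{RVOSE}$. The remaining hybrids are straightforward. They replace real RVOSE outputs with simulated ones, replace the real messages with uniform ones, and replace the B2A outputs with shares summing to $\langle\vec{y}\rangle_b$. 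Each step is an identical-distribution argument, and composing them over all $i\in[n]$ gives $\mathsf{Real}\equiv\mathsf{Ideal}$.
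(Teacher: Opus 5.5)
Your proposal is correct and follows essentially the same simulator construction as the paper: uniform $\mathcal{F}_\text{RVOSE}$ outputs, a uniform stand-in for $\langle \vec{x}[i]\rangle_1-\epsilon_i$ (when $P_0$ is corrupted) or for $\vec{d}_i$ (when $P_1$ is corrupted), and $\mathcal{F}_\text{B2A}$ outputs sampled uniformly subject to summing to $\langle \vec{y}\rangle_b$, each justified by an identical-distribution argument. Your explicit check that $(\epsilon_i,\vec{c}_i,\vec{d}_i)$ is jointly uniform, using that $\vec{b}_i$ masks $\vec{a}_i$ inside $\vec{c}_i$, is more careful than the paper's item-by-item marginal argument, and $\mathbb{Z}_m$ is the correct range where the paper writes $\mathbb{Z}_n$.
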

\begin{proof}
CF. Appendix.~\ref{sec: pfc} for details.
\end{proof}

\subsection{Private Frequency Counting with CRT}
\label{text: subcrt}
In this subsection, we present a customized secure counting to get the mode of the input vector $\vec{x}$ when the mode’s frequency $f_\text{mode}$ exceeds $n/2$.
The intuition of our customized secure counting is that if the mode value $w$ has frequency $f_w > n/2$, then the mode $w_i$ of the dataset $\vec{x}_i$ satisfies $w_i = w \bmod t_i$, where $\vec{x}_i:= \{ d \bmod t_i \mid d \in \vec{x} \}$ and $t_i$ are co-prime. 
Following this insight, we break down the private mode on a large frequency vector into a series of short frequency sub-vectors by choosing $t_i$ according to the CRT, i.e., $t_i$ are co-prime, and $N=\prod_i t_i$ is slightly larger than $m$.
Then, obtaining the arithmetical share of the mode in each sub-vector $\langle w_i\rangle_b$, we can reconstruct $w$ with the CRT that: 
$w := \sum_{i=1}^{u} w_i \cdot M_i \cdot \beta_i \pmod{N}$, where
$M_i = \frac{N}{t_i}, \beta_i \equiv M_i^{-1} \bmod{t_i}$ are both public numbers, such that $w$ can be reconstruct locally.
\begin{figure}[htbp]
    \centering
    \includegraphics[scale=0.7]{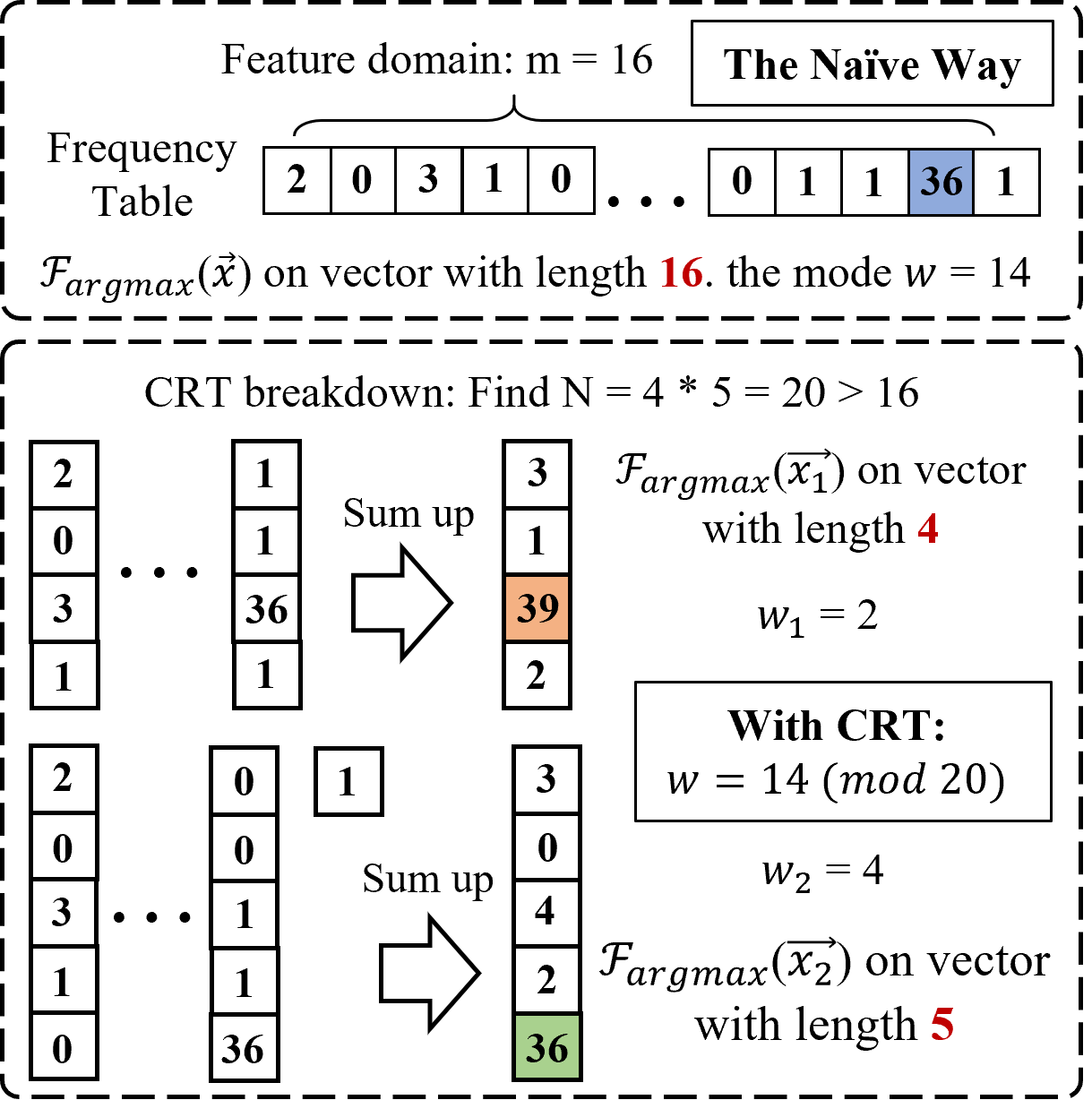}
    \caption{The mode function computed in a naive way and optimized with CRT. The data range is $[0, 15]$.}
    \label{fig: mode_crt}
    \Description[Short description]{}
\end{figure}

Specifically, we present our customized secure frequency counting via a toy example in \autoref{fig: mode_crt}.
Suppose the data range is $[0, 15]$ and the mode’s frequency exceeds half the dataset size.
Using the naive method, we need to perform $\mathcal{F}_\text{argmax}^{n}$~\cite{rathee2020cryptflow2} on the frequency vector with length $16$ to find the mode.
$\mathcal{F}_\text{argmax}^{n}$ gets secret shared vector $\langle \vec{x}\rangle_b$ as input and outputs the secret shared index of the max value in $\vec{x}$.
In contrast, with the CRT optimization, parties first convert the input $\langle \vec{x}\rangle \in \mathbb{Z}_{m}^n$ into $\langle \vec{x_i}\rangle \in \mathbb{Z}_{t_i}^{n}$.
Then parties compute the frequency vector of each $\langle \vec{x_i}\rangle$ with $\mathcal{F}_\text{PFC}^{t_i, n}$ to learns $\langle \vec{y_i}\rangle_b$, and perform $\mathcal{F}_\text{argmax}^{t_i}$ with input sub-domain frequency vectors to get the modes.
In our toy example, parties perform $\mathcal{F}_\text{argmax}^{4}$ and $\mathcal{F}_\text{argmax}^{5}$ on two sub-domain frequency vectors, respectively.
After getting the results of modes $\langle w'_i\rangle^{t_i}$ over the sub-domain frequency vectors, parties convert $\langle w'_i\rangle^{t_i}$ into $\langle w'_i\rangle^{N}$, and compute the mode $\langle w\rangle^N$ via the CRT.
The complete algorithm is in \autoref{algo: mode}.

\myherebox{Protocol $\Pi_\text{mode}^{m,n,\mathbb{T}}$}{white!20}{white!10}{
   The protocol is parameterized by the dataset size $n$ and the feature domain $m$.
   $\ell_m = \lceil \log m\rceil, \ell_n = \lceil \log n\rceil$.
   $t_i$ are $u$ co-prime numbers in set $\mathbb{T}$, and $N=\prod_i t_i \ge m$.

   \emph{$\mathsf{Input:}$} 
   For $b \in \{0, 1\}$, $P_b$ inputs the datasets $\langle\vec{x}\rangle_b \in \mathbb{Z}_{m}^{n}$.

   \emph{$\mathsf{Output:}$} 
   $P_b$ outputs the share $\langle w\rangle_b^N$, $w = \mathsf{mode}(\vec{x})$.

\underline{\textbf{Protocol:}}
\begin{enumerate}[label=\arabic*), leftmargin=*]
    \item For $i\in[u]$, $P_0$ \& $P_1$ invoke $\mathcal{F}_\text{convert}^{m \rightarrow t_i}$ with input $\langle \vec{x}\rangle_0, \langle \vec{x}\rangle_1$ and learns $\langle \vec{x_i}\rangle_0^{t_i}, \langle \vec{x_i}\rangle_1^{t_i}$.
    \item For $i\in[u]$, $P_0$ \& $P_1$ invoke $\mathcal{F}_\text{PFC}^{t_i, n}$ with input $\langle \vec{x_i}\rangle_0^{t_i}, \langle \vec{x_i}\rangle_1^{t_i}$ and learn $\langle \vec{y_i}\rangle_0 \in \mathbb{Z}_{2^{\ell_n}}^{t_i}, \langle \vec{y_i}\rangle_1 \in \mathbb{Z}_{2^{\ell_n}}^{t_i}$.
    \item For $i\in[u]$, $P_0$ \& $P_1$ invoke $\mathcal{F}_\text{argmax}^{t_i}$ with input $\langle \vec{y_i}\rangle_0, \langle \vec{y_i}\rangle_1$ and learn secret shares of the index of the max value $\langle w'_i\rangle_0^{t_i}, \langle w'_i\rangle_1^{t_i}$.
    \item For $i\in[u]$, $P_0$ \& $P_1$ invoke $\mathcal{F}_\text{convert}^{t_i\rightarrow N}$ with input $\langle w'_i\rangle_0^{t_i}, \langle w'_i\rangle_1^{t_i}$ and learns $\langle w_i\rangle_0^N, \langle w_i\rangle_1^N$.
    \item $P_b$ locally reconstructs the arithmetical share $\langle w\rangle_b$ mode based on the CRT.
\end{enumerate}
}{The Customized Secure $\mathsf{mode}$.}{algo: mode}

\textbf{Correctness.}
The correctness that $\langle w_i\rangle^N$ are the mode can be easily followed via the correctness of $\mathcal{F}_\text{convert}^{N_1 \rightarrow N_2}$, $\mathcal{F}_\text{PFC}^{m,n}$, and $\mathcal{F}_\text{argmax}^{n}$.
Given the correctness of $\langle w_i\rangle^N$, $w = \langle w\rangle_0 + \langle w\rangle_1 = \sum_{i=1}^{u} (\langle w_i\rangle_0 + \langle w_i\rangle_1) \cdot M_i \cdot \beta_i = \sum_{i=1}^{u} w_i \cdot M_i \cdot \beta_i \pmod{N}$, which is correct with the CRT.

\textbf{Efficiency.} 
$\Pi_\text{mode}^{m,n,\mathbb{T}}$ requires $u$ times $\mathcal{F}_\text{convert}^{m \rightarrow t_i}$, $\mathcal{F}_\text{PFC}^{t_i,n}$, $\mathcal{F}_\text{argmax}^{t_i}$ and $\mathcal{F}_\text{convert}^{t_i \rightarrow N}$.
The $\mathcal{F}_\text{convert}^{m \rightarrow t_i}$ requires $(\lambda + 14)u\ell_m + u\lambda + \sum_{i=0}^{u-1}\ell_{t_i}$ bits with $\log \ell_m + 2$ communication rounds.
$\mathcal{F}_\text{PFC}^{t_i,n}$ requires $n\sum_{i=0}^{u-1}(3\lambda \log t_i + \ell_{t_i} + t_i(1+\lambda + \ell_n))$ with $5$ communication rounds.
$\mathcal{F}_\text{argmax}^{t_i}$ requires $\sum_{i=0}^{u-1}(\lambda \ell_n + 16 \ell_n+4\lambda + 2\ell_{t_i})(t_i-1)$ with $2\log \ell_n\log \hat{t}$ communication rounds, where $\hat{t}$ is the max $t_i$.
$\mathcal{F}_\text{convert}^{t_i \rightarrow N}$ requires $(\lambda + 14)\sum_{i=0}^{u-1}\ell_{t_i} + u\lambda + u\ell_N$ bits with $\log \ell_{\hat{t}} + 2$ communication rounds.
Using the naive way, the communication complexity to computes mode of the $m$ length frequency vector is $n(3\lambda\log m + \ell_m + m(1+\lambda+\ell_n))$ for $n$ times $\mathcal{F}_\text{PFC}^{m,n}$  and $(\lambda \ell_n + 16\ell_n + 4\lambda + 2\ell_m)(m-1)$ for $\mathcal{F}_\text{argmax}^{m}$.
The communication rounds is $2\log \ell_n \log m + 5$.
Asymptotically, the CRT optimization communication complexity is about $\mathcal{O}(n\log m + \sum{t_i} n\log n + \sum{t_i} \log m)$.
The naive way is about $\mathcal{O}(n\log m + mn\log n + m \log m)$.

\textbf{Security}. We provide the security proof of $\Pi_\text{mode}^{m,n,\mathbb{T}}$ in Appendix.\ref{sec: mode}.

\section{The Secure Sort Function}

\label{text: order_method}
In this section, we present our secure counting sort protocol, which supports statistical functions (e.g., $\mathsf{Rank}$) on large datasets but small feature domains.
We formalize the secure sort functionality $\mathcal{F}_\text{sort}$ in \autoref{func: sort}, which takes a secret-shared vector $\langle \vec{x}\rangle$ as input and outputs a secret-shared sorted vector $\langle \vec{y}\rangle$.
In addition, to efficiently support statistical functions such as $\mathsf{Percentiles}$, we reformulate them as interval-testing problems based on the frequency vector, thereby avoiding the need to fully sort the input vector.

\myherebox{Functionality $\mathcal{F}_\text{sort}^{m,n}$}{white!20}{white!10}{
  $\mathcal{F}_\text{sort}$ interacts with $P_0$, $P_1$ and the simulator $\mathcal{S}$.

\underline{\textbf{Parameters:}} $n$ is the size of the dataset; $m$ is the size of feature domain. $\ell_m=\lceil \log m\rceil$, $\ell_n=\lceil \log n\rceil$.
  
\underline{\textbf{Input:}}
  \begin{itemize}[leftmargin=*]
  \item Upon receiving $(\mathsf{Input}, \mathsf{sid}, \langle \vec{x}\rangle_0)$ from $P_0$, record $\langle \vec{x}\rangle_0$ and send $(\mathsf{Input}, \mathsf{sid}, P_0)$ to $\mathcal{S}$, where $\langle \vec{x}\rangle_0 \in \mathbb{Z}_m^n$.
  \item Upon receiving $(\mathsf{Input}, \mathsf{sid}, \langle \vec{x}\rangle_1)$ from $P_1$, record $\langle \vec{x}\rangle_1$ and send $(\mathsf{Input}, \mathsf{sid}, P_1)$ to $\mathcal{S}$, where $\langle \vec{x}\rangle_1 \in \mathbb{Z}_m^n$.
  \end{itemize}
    
\underline{\textbf{Execution:}}
  \begin{itemize}[leftmargin=*]
   \item If both $\langle \vec{x}\rangle_0, \langle \vec{x}\rangle_1$ are recorded, reveal $\vec{x} := \langle \vec{x}\rangle_0 + \langle \vec{x}\rangle_1$, compute the ascending sorted vector $\vec{y}$, uniformly random sample $\langle\vec{y}\rangle_0 \gets \mathbb{Z}_{2^{\ell_m}}^n$, and compute $\langle \vec{y}\rangle_1 := \vec{y} - \langle\vec{y}\rangle_0$.
   \item Send $(\mathsf{Output},\mathsf{sid}, \langle \vec{y}\rangle_b)$ to $P_b$, $b \in \{0, 1\}$.
  \end{itemize}
}{The Ideal Functionality $\mathcal{F}_\text{sort}$.}{func: sort}

\subsection{The Secure Counting Sort}
\label{text: subsort}
The secure counting sort consists of two main steps: 1)the frequency vector construction and 2)the element insertion.
The first step can be achieved through $\mathcal{F}_\text{PFC}^{m,n}$.
We here present the element insertion using our secure segment-indicator vector.
The secure segment-indicator vector is an arithmetic secret share of a vector with elements in index $a$ to $b$ is $1$ else $0$, abbr, $\vec{1}_{a:b}$ (the index $a,b$ here starts from $1$, $\vec{1}_{a:b}$ denotes a zero vector, when $a > b$).
Given the secret share of $\vec{1}_{a:b}$, we can insert value $v$ via a scale multiplication $v\cdot \vec{1}_{a:b}$. 
We compute the vector $\vec{1}_{a:b}$ through $\vec{1}_{1:b} - \vec{1}_{1:a-1}$
To get vector $\vec{1}_{1:b}$ with input $b$, we can derive it from the unit vector $\vec{e}^b$ via $\vec{1}_{1:b}[i] := \sum_{j=i}^n\vec{e}^b[j]$.
$\vec{e}^b$ can be constructed similarly with a secure shift paradigm.

\begin{figure}[htbp]
    \centering
    \includegraphics[scale=1.0]{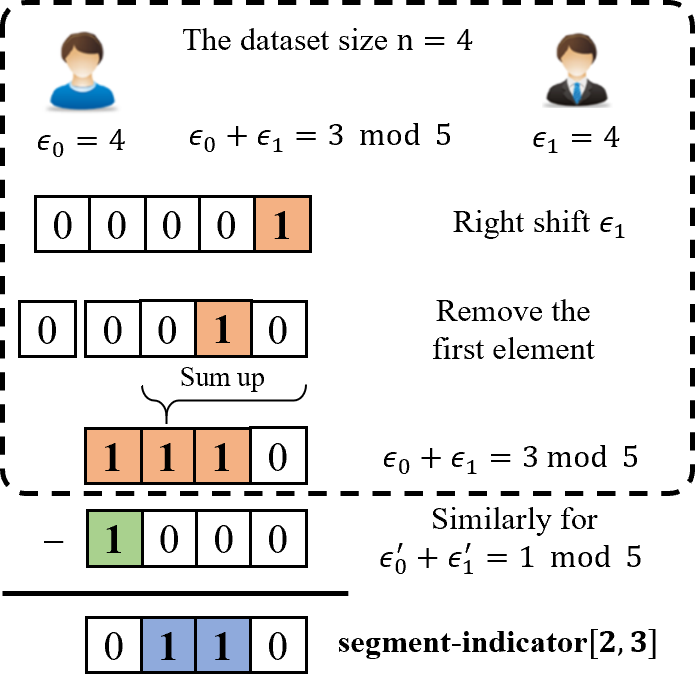}
    \caption{Example of constructing $\vec{1}_{2:3}$ for data range $[0, 3]$.}
    \label{fig: block_indicator}
    \Description[Short description]{}
\end{figure}

Specifically, after obtaining the private frequency vector, parties first compute the prefix sum of the private frequency vector $\vec{q}$, where $\vec{q}[i]$ is the end index of value $i$ in the final sorted vector and $\vec{q}[i]+1$ is the start index of value $i+1$.
In other words, the segment-indicator vector for value $i$ is $\vec{1}_{\vec{q}[i-1]+1:\vec{q}[i]}$. 
As the frequency ranges from $0$ to $n$, it needs to convert the private frequency vector $\vec{q}$ into shares over $\mathbb{Z}_{n+1}$ to ensure the correctness. 
We present the concrete steps to get $\langle \vec{1}_{a:b}\rangle$ with a toy example, assuming $n=4, a=2, b=3$, as shown in \autoref{fig: block_indicator}.
Given $\epsilon_0 + \epsilon_1 = 3 \bmod 5$, parties can compute the $\langle \vec{e}^3\rangle$ leveraging $\mathcal{F}_\text{RVOSE}$.
Then, parties remove the first element of $\langle \vec{e}^3\rangle$ such that for frequency $0$, the parties can obtain secret shares of a zero vector.
With $\langle \vec{\hat{e}}^3\rangle:=\langle \vec{e}^3[1:]\rangle$ parties can compute $\langle \vec{1}_{1:3} \rangle$ through $\langle \vec{1}_{1:3}[i] \rangle := \sum_{j=i}^n\langle \vec{\hat{e}}^3[j]\rangle$.
Similarly, parties can get $\langle \vec{1}_{1:1} \rangle$ with input $\epsilon_0', \epsilon_1'$, where $\epsilon_0' + \epsilon_1' = 1 \bmod 5$.
With $\langle \vec{1}_{1:3} \rangle, \langle \vec{1}_{1:1} \rangle$ parties can locally compute $\langle \vec{1}_{2:3} \rangle := \langle \vec{1}_{1:3} \rangle - \langle \vec{1}_{1:1} \rangle$.
Using $\langle \vec{y}_{i}'\rangle$ to denote the secret share of the segment-indicator vector of value $i$, we can get the final sorted vector through $\langle \vec{y}\rangle:=\sum_{i=0}^{m-1}\langle \vec{y}_{i}'\rangle\cdot (i+1) - \vec{1}$.
The detailed steps are shown in \autoref{pro: csort}.

\textbf{Correctness}. 
By the correctness of $\mathcal{F}_\text{PFC}^{m,n}$ and $\mathcal{F}_\text{RVOSE}$, we get the secret shared frequency vector $\langle \vec{q}' \rangle$ of the input $\langle \vec{x}\rangle$.
In the sorted vector, the value $i$ is inserted into the vector behind all values less than $i$ such that the segment-indicator vector is $\vec{1}_{a:b}$, where $a=1+\sum_{j=0}^{i-1} \vec{q}'[j]$, $b = a+\vec{q}'[i]$.
Then, $\langle \vec{1}_{a:b}\rangle = \langle \vec{1}_{1:b}\rangle - \langle \vec{1}_{1:a-1}\rangle$, and $\langle \vec{1}_{1:a}[i] \rangle = \sum_{j=i}^n\langle \vec{\hat{e}}^a[j]\rangle$.
$\langle \vec{\hat{e}}^a\rangle=\langle \vec{e}^a[1:]\rangle$.
The first element of $\vec{e}^a$ is removed to ensure the value with frequency $0$ outputs a zero segment-indicator vector.
The correctness of $\langle \vec{e}^a\rangle$ follows similar to $\Pi_\text{PFC}^{m,n}$.

\textbf{Efficiency}. $\Pi_\text{CSort}^{m,n}$ first invokes $\mathcal{F}_\text{PFC}^{m,n}$, which costs $n3\lambda\log m$ bits with $2$ rounds in the offline phase, and $n(m+\ell_m+m\lambda+m\ell_n)$ bits with $3$ rounds in the online phase. Then invokes the share conversion to translate the shares in $\mathbb{Z}_{n+1}$, which costs less than $\lambda\ell_n + 15\ell_n + \lambda$ with $\log \ell_m + 1$ rounds.
Then, $m-1$  times circular shift using the arithmetic version of $\mathcal{F}_\text{RVOSE}$, which costs $(m-1)3\lambda\log (n+1)$ bits with $2$ rounds in the offline phase and $(m-1)(\ell_n+(n+1)\ell_m)$ bits with $1$ rounds in the online phase.
The total communication complexity is about $\mathcal{O}(mn\log (mn))$, and the round complexity is $\mathcal{O}(\log \log m)$.
The communication and round complexity of the QuickSort~\cite{hou2023ciphergpt} are  
$O(n\log n(\log (mn))$ and $O(\log n\log\log m)$ respectively.
The communication and round complexities of RadixSort~\cite{wprfpeceny2025efficient} are
$\mathcal{O}(mn\log m + m^{1/t}n\log n)$ and $\mathcal{O}(\log m)$, respectively,
where $t$ depends on the selected radix.

\textbf{Security}.
We define the functionality $\mathcal{F}_\text{sort}^{m,n}$ for secure sort in \autoref{func: sort} and describe the protocol in \autoref{pro: csort}. 
We prove our protocol $\Pi_\text{CSort}^{m,n}$ realizes functionality $\mathcal{F}_\text{sort}^{m,n}$.

\begin{theorem}
\label{theorem: csort}
$\Pi_\text{CSort}^{m,n}$ UC realizes $\mathcal{F}_\text{sort}^{m,n}$ in the $(\mathcal{F}_\text{PFC}^{m,n}, \mathcal{F}_\text{RVOSE})$ hybrid model against semi-honest probabilistic polynomial time (PPT) adversaries with statistical corruption.
\end{theorem}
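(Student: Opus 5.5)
We prove Theorem~\ref{theorem: csort} with the standard simulation argument in the $(\mathcal{F}_\text{PFC}^{m,n}, \mathcal{F}_\text{RVOSE})$-hybrid model, handling a corrupted $P_0$ and a corrupted $P_1$ separately. Because the adversary is semi-honest and static, it is enough to build, for each $b\in\{0,1\}$, a simulator $\mathcal{S}$ that sees only $P_b$'s input share $\langle\vec{x}\rangle_b$ and its output share $\langle\vec{y}\rangle_b$. This simulator must produce the corrupted party's view, namely hybrid-functionality outputs and received messages, jointly distributed with the honest party's output as in the real execution. Correctness was already argued above: the reconstructed $\vec{y}$ equals the sorted vector. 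So the joint distribution of outputs matches once the view is simulated consistently.

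\textbf{Simulating the view.} The simulator proceeds along the protocol as follows.
\begin{enumerate}[label=\arabic*), leftmargin=*]
\item For the call to $\mathcal{F}_\text{PFC}^{m,n}$, $\mathcal{S}$ returns a uniformly random vector in $\mathbb{Z}_{2^{\ell_n}}^m$ as the corrupted party's share of the frequency vector. This is exactly the functionality's output distribution.
\item The subsequent prefix sum and share conversion to $\mathbb{Z}_{n+1}$ are either local or (if realized by a call to $\mathcal{F}_\text{convert}$) also return uniformly random shares, which $\mathcal{S}$ samples.
\item For each of the $m$ segment-indicator computations, $\mathcal{S}$ plays $\mathcal{F}_\text{RVOSE}$. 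If $P_0$ is corrupted, $\mathcal{S}$ hands $P_0$ random $\vec{b}$ (for its sampled $\vec{a}$) and random masked offsets $\langle a\rangle_1-\epsilon$ in $\mathbb{Z}_{n+1}$. These are uniform in the real world because $\epsilon$ is a uniform, independent offset known only to $P_1$. If $P_1$ is corrupted, $\mathcal{S}$ hands $P_1$ a random $\epsilon$ and $\vec{c}$, together with a random masked vector $\vec{d}=\vec{e}\oplus\vec{a}$ (arithmetic variant: $\vec{e}+\vec{a}$). The vector $\vec{d}$ is uniform because $\vec{a}$ is a fresh one-time pad sampled by $P_0$.
\item The remaining steps, namely dropping the first entry, suffix sums, differences $\vec{1}_{1:b}-\vec{1}_{1:a-1}$, scaling by public $(i+1)$ and summation, are all local. $\mathcal{S}$ therefore does not simulate them. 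Instead it programs the corrupted party's final share to equal the given $\langle\vec{y}\rangle_b$.
\end{enumerate}

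\textbf{Main obstacle.} The delicate point is consistency between the simulated intermediate values and the final output share. In the real protocol, $\langle\vec{y}\rangle_b$ is a deterministic function of the corrupted party's view. We must therefore argue that the honest party's share is uniformly distributed conditioned on that view, so that replacing it by the ideal functionality's $\vec{y}-\langle\vec{y}\rangle_b$ changes nothing.

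I would handle this with a hybrid sequence. First, replace each RVOSE-masked message by uniform; this is perfect by the one-time-pad argument, applied separately to each independent invocation. Next, observe that the honest party's shares of each $\vec{e}^{a}$ contain the fresh RVOSE randomness ($\vec{b}$, or $\vec{t}$), which is independent of the corrupted view. Hence the corrupted party's final share is uniform given everything else it sees, and we may equivalently sample it uniformly and define the honest share as the difference.

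A subtlety to check is that the sorted output lives in $\mathbb{Z}_{2^{\ell_m}}$ while the indicators are computed modulo $n+1$. I would verify that the protocol's final conversion, or equivalently the arithmetic RVOSE working over the output ring, yields uniform masks in the output ring. If a residual bias arises from a ring mismatch, the "statistical corruption" qualifier in the statement suggests bounding it by a statistical-distance term that is negligible in $\lambda$. Since every step is either a perfect one-time-pad substitution or an ideal-functionality call, the composition of hybrids gives $\mathsf{Real}\approx\mathsf{Ideal}$.
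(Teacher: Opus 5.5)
\textbf{The gap: the simulated view is never made consistent with the functionality's output share.} In steps 3--4 your simulator samples everything the corrupted party receives independently and uniformly: $\vec b_i$ and the masked offsets for $P_0$, or $\epsilon_i,\vec c_i,\vec d_i$ for $P_1$. It then ``programs the corrupted party's final share to equal the given $\langle\vec y\rangle_b$.'' The simulator cannot do that. $\mathcal{F}_\text{sort}^{m,n}$ samples $\langle\vec y\rangle_b$ itself, and in $\Pi_\text{CSort}^{m,n}$ the corrupted party's share is a deterministic function of its view, since steps 7--12 are local. With independently sampled $\vec b_i$ or $\vec d_i$, the share recomputed from the simulated view matches the functionality's share only with probability $2^{-n\ell_m}$, and $\mathcal{Z}$ distinguishes by recomputing it. You correctly flag this consistency issue as the main obstacle, but your resolution is false as stated. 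The corrupted share is not ``uniform given everything else it sees''; it is determined by what it sees. And once inputs are fixed, the honest share is not uniform given the corrupted view; by correctness it equals $\vec y-\langle\vec y\rangle_b$. What is missing is an explicit way to sample the view \emph{conditioned on} the given $\langle\vec y\rangle_b$, together with a check that this reproduces the real joint distribution.

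\textbf{How the paper closes it.} The paper's simulator works backwards from $\langle\vec y\rangle_b$ through the local post-processing, which is invertible in the right variable.
\begin{itemize}
\item Unrolling steps 10--12 gives $\langle\vec y\rangle_b=m\langle\vec y''_{m-1}\rangle_b-\sum_{i=0}^{m-2}\langle\vec y''_i\rangle_b$, up to the public $\vec 1$ term. So $\vec y''_0$ enters with the unit coefficient $-1$.
\item The suffix-sum map of step 9 is unit triangular on coordinates $1,\dots,n$. Coordinate $0$ is discarded and can be sampled freely.
\item Circular shifts are permutations.
\end{itemize}
So $\mathcal{S}$ samples $\check y''_i$ for $1\le i\le m-2$, solves for $\check y''_0$ from $\langle\vec y\rangle_b$, and inverts the suffix sums to get $\check y'''_i$. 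For corrupted $P_0$ it sets $\check b_i:=\text{shift}(-r_i,\check y'''_i)$, so that step 8 yields $\check y'''_i$. For corrupted $P_1$ it chooses $\check d_i$ so that step 7, applied to the simulated $\check\epsilon_i,\check c_i$ and prefix sums $\langle\check q[i]\rangle_1$, yields $\check y'''_i$. The two worlds are then perfectly identical. In the real run the programmed component ($\vec b_0$ or $\vec d_0$) is uniform and independent of the rest of the view. With the rest fixed, the map from it to the output share is onto with equal-size fibres.

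\textbf{A secondary misreading.} Your ring-mismatch concern does not arise. The indicator vectors are already shared over $\mathbb{Z}_{2^{\ell_m}}$, since the arithmetic RVOSE uses $\vec a_i\in\mathbb{Z}_{2^{\ell_m}}^{n+1}$; only the offsets live in $\mathbb{Z}_{n+1}$. There is no bias to bound and no need to appeal to statistical slack.
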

\begin{proof}
CF. Appendix.~\ref{sec: csort} for details.
\end{proof}

\myherebox{Protocol $\Pi_\text{CSort}^{m,n}$}{white!20}{white!10}{
   The protocol is parameterized by the dataset size $n$ and the feature domain $m$.
   $\ell_m = \lceil \log m\rceil, \ell_n = \lceil \log n\rceil$.
   
   \emph{$\mathsf{Input:}$} 
   For $b \in \{0, 1\}$, $P_b$ inputs vector $\langle \vec{x}\rangle_b \in \mathbb{Z}_m^n$.

   \emph{$\mathsf{Output:}$} 
   $P_b$ outputs the sorted vector $\langle \vec{y}\rangle_b \in \mathbb{Z}_{2^{\ell_m}}^n$.

\underline{\textbf{Protocol:}}

\underline{\textbf{Offline:}}
\begin{enumerate}[label=\arabic*), leftmargin=*]
    \item For $i\in [m\!-\!1]$, $P_0$ samples vector $\vec{a}_i \gets \mathbb{Z}_{2^{\ell_m}}^{n+1}$ uniformly at random. 
    \item For $i\in [m\!-\!1]$, $P_0$ \& $P_1$ invoke $\mathcal{F}_\text{RVOSE}$ with $P_0$ inputs $\vec{a}_i$, and $P_0$ receives $\vec{b}_i \in \mathbb{Z}_{2^{\ell_m}}^{n+1}$, $P_1$ receives $\vec{c}_i \in \mathbb{Z}_{2^{\ell_m}}^{n+1}$ and $\epsilon_i \in \mathbb{Z}_{n+1}$.
\end{enumerate}

\underline{\textbf{Protocol:}}
\begin{enumerate}[label=\arabic*), leftmargin=*]
    \item $P_0$ \& $P_1$ invoke $\mathcal{F}_\text{PFC}^{m,n}$ with input $\langle \vec{x}\rangle_0, \langle \vec{x}\rangle_1$ and learn $\langle \vec{q}''\rangle_0, \langle \vec{q}''\rangle_1$.
    \item $P_0$ \& $P_1$ invoke $\mathcal{F}_\text{convert}^{2^{\ell_n}\rightarrow n+1}$ with input $\langle \vec{q}''\rangle^{2^{\ell_n}}_0, \langle \vec{q}''\rangle^{2^{\ell_n}}_1$ and learn $\langle \vec{q}'\rangle_0^{n+1}, \langle \vec{q}'\rangle_1^{n+1}$.
    \item For $i \in [m\!-\!1]$, $P_b$ computes $\langle \vec{q}[i]\rangle_b^{n+1} := \sum_{j=0}^i\langle \vec{q}'[j]\rangle_b^{n+1}$.
    \item For $i \in [m\!-\!1]$, $P_0$ initializes $\vec{v}_i \in \mathbb{Z}_{2^{\ell_m}}^{n + 1}$, and sets $\vec{v}_i[j] := 1$ if $j=\langle \vec{q}[i]\rangle_0$ else $0$.
    \item For $i \in [m\!-\!1]$, $P_0$ sends $\vec{d}_i :=\vec{v}_i -\vec{a}_i$ to $P_1$.
    \item For $i \in [m\!-\!1]$, $P_1$ sends $\langle \vec{q}[i] \rangle_1 - \epsilon_i$ to $P_0$.
    \item For $i \in [m\!-\!1]$, $P_1$ computes $\vec{t}_i := \text{shift}(\epsilon_i, \vec{d}_i) + \vec{c}_i$, and sets $\text{shift}(\langle \vec{q}[i]\rangle_1-\epsilon_i, \vec{t}_i)$ as $\langle \vec{y}'''\rangle_1$.
    \item For $i \in [m\!-\!1]$, $P_0$ sets $\text{shift}(\langle \vec{q}[i]\rangle_1 - \epsilon_i, \vec{b}_i)$ as $\langle \vec{y}'''_i\rangle_0$.
    \item For $i \in [m\!-\!1]$, $P_b$ computes $\langle \vec{y}''_i[j]\rangle_b:=\sum_{u=j}^{n} \langle\vec{y}'''_i[u]\rangle_b, j \in [1, n]$.
    \item $P_0$ sets $\langle \vec{y}''_{m-1} \rangle_0$ to an all-ones vector, while $P_1$ sets $\langle \vec{y}''_{m-1} \rangle_1$ to the zero vector.
    \item For $i \in [1, m\!-\!1]$, $P_b$ compute $\langle \vec{y}_i'\rangle:=\langle \vec{y}_{i}''\rangle - \langle \vec{y}_{i-1}''\rangle$ and $\langle \vec{y}_0'\rangle:=\langle \vec{y}_{0}''\rangle$.
    \item $P_b$ computes $\langle \vec{y}\rangle:=\sum_{i=0}^{m-1}\langle \vec{y}_{i}'\rangle\cdot (i+1) - \vec{1}$.
\end{enumerate}
}{Protocol of $\Pi_\text{CSort}$}{pro: csort}

\subsection{The Secure Counting Percentile}
\label{text: subkth} 
$\mathsf{Percentile}$ outputs the $k$-th smallest value from the dataset that represents the point below which a given percentage of data falls, where the given percentage determines $k$.
As a special case of sorting, $\mathsf{Percentile}$ does not need to sort the entire input vector.
Such that instead of inserting all the elements according to the frequency counting vector, we construct $\Pi_\text{percentile}^{m,n,k}$ based on the following observation.
For the $\mathsf{Percentile}$ operation, we note that the $k$-th smallest value $i$ denotes that $k$ falls in the $i$-th interval of the prefix-sum frequency vector.
Following the above idea of the interval test, we evaluate the k-th smallest value by judging which interval $k$ falls in.

Specifically, parties first construct the private prefix-sum frequency vector $\langle \vec{z}\rangle$ via $\mathcal{F}_\text{PFC}^{m,n}$. 
Then, to judge whether $k$ falls in the interval $(\vec{z}[i], \vec{z}[i+1]]$, parties compute the indicator $1\{k \le \vec{z}[i+1]\}$ by invoking $\mathcal{F}_\text{Mill}$ on inputs $k$ and $\langle \vec{z}[i+1]\rangle$, obtaining a private bit $\langle t[i+1]\rangle$.
Next, parties obtain $1\{k > \vec{z}[i]\}$ by setting $\langle \vec{u'}[i]\rangle_b := \langle \vec{t}[i]\rangle_b \oplus b$, and securely combine $\langle \vec{u'}[i]\rangle$ and $\langle \vec{t}[i+1]\rangle$ using $\mathcal{F}_\text{AND}$ to derive $\langle u[i]\rangle$, which indicates whether $k$ lies in the interval $(\vec{z}[i], \vec{z}[i+1]]$.
From these values, parties construct a private unit vector with one position corresponding to the target interval.
They then invoke $\mathcal{F}_\text{B2A}$ to convert this boolean vector into an arithmetic unit vector $\langle \vec{w}\rangle$, and finally compute the output percentile value as $\langle \omega \rangle := \sum_{i=0}^{m-1} i \cdot \langle w[i]\rangle$, which yields the $k$-th smallest element in the domain.
The complete protocol is presented in \autoref{algo: exkth}.

\textbf{Correctness}. 
The correctness of $\Pi^{m,n,k}_{\text{percentile}}$ follows from the correctness that the unit vector $\vec{v}$ indicates which interval $k$ falls in.
Specifically, the correctness of $\vec{v}$ is established as follows.
For the first entry, we have
$\vec{v}[0] = \vec{t}[0] = 1\{k \le \vec{z}[0]\}$.
For each $i \in [1, m-2]$, it holds that
$\vec{v}[i] = \vec{t}[i] \land \vec{u'}[i-1]$ = $1\{k \le \vec{z}[i]\} \land 1\{k > \vec{z}[i-1]\} = 1\{\vec{z}[i-1] < k \le \vec{z}[i]\}$.
Finally, for the last entry,
$\vec{v}[m-1] = \vec{u'}[m-2] = 1\{k>\vec{z}[m-2]\}$.
Therefore, $\vec{v}$ is a unit vector that correctly encodes the interval containing $k$.

\textbf{Efficiency}. $\Pi_\text{kth}$ invokes $m-1$ times $\mathcal{F}_\text{Mill}$ to get the result of $1\{k \le \vec{z}[i]\}, i \in [0, m-2]$.
Then need $m-2$ times $\mathcal{F}_\text{AND}$ to get the interval test result of $1\{\vec{z}[i] < k \le \vec{z}[i+1]\}, i \in [1, m-2]$.
Finally $m$ times $\mathcal{F}_\text{B2A}$ to get the result of $k$. 
This part of the interval test communication complexity is 
$m(\lambda(\ell_n + 2) + 14\ell_n+\ell_m+20) - \lambda(\ell_n+2)-14\ell_n-40$. 
This part of the interval test communication round is $\log \log m + 4$ rounds.
Adding the cost of secure frequency vector construction, the total communication complexity is about $\mathcal{O}(m\log m + n\log m+ mn\log n)$, and the round complexity is about $O(\log\log m)$.
The communication complexity of the SOTA~\cite{hou2023ciphergpt, hamada2012practically} is $\mathcal{O}(n \log nm)$ and a communication round complexity $O(\log n\log\log m)$.

\textbf{Security}.
We provide the security proof of $\Pi_\text{percentile}^{m,n,k}$ in Appendix.\ref{sec: kth}.

\myherebox{Protocol $\Pi_\text{percentile}^{m,n,k}$}{white!20}{white!10}{
   The protocol is parameterized by the dataset size $n$ and the feature domain $m$.
   $k$ denotes the index of the percentile value in the sorted vector.
   $\ell_m = \lceil \log m\rceil, \ell_n = \lceil \log n\rceil$.
   
   \emph{$\mathsf{Input:}$} 
   For $b \in \{0, 1\}$ $P_b$ inputs the private prefix-sum frequency $\langle \vec{z}\rangle_b \in \mathbb{Z}_{2^{\ell_n}}^m$ and the private $\langle k\rangle_b$.

   \emph{$\mathsf{Output:}$} 
   $P_b$ outputs the $k$-th smallest value $\langle \omega\rangle_b$. 

\underline{\textbf{Protocol:}}
\begin{enumerate}[label=\arabic*), leftmargin=*]
    \item $P_0$ \& $P_1$ invoke $\mathcal{F}_\text{PFC}^{m,n}$ with input $\langle \vec{x}\rangle_0, \langle \vec{x}\rangle_1$ and learns $\langle \vec{z}'\rangle_0, \langle \vec{z}'\rangle_1$.
    \item $P_b$ computes $\langle \vec{z}[i]\rangle_b:=\sum_{j=0}^i\langle \vec{z'}[j]\rangle_b$.
    \item $P_0$ \& $P_1$ invoke $\mathcal{F}_\text{Mill}$ with input $k$ and $\langle \vec{z}[i] \rangle_0, \langle \vec{z}[i] \rangle_1, i \in [0, m\!-\!2]$ and gets $\langle t_i\rangle_0, \langle t_i\rangle_1$, $t_i = 1\{k \le \vec{z}[i]\}$.
    \item $P_b$ sets $\langle \vec{u}'[i] \rangle_b := \langle \vec{t}[i]\rangle_b \oplus b, i \in [0, m\!-\!2]$.
    \item $P_0$ \& $P_1$ invoke $\mathcal{F}_\text{AND}$ with input $\langle \vec{t}[i\!+\! 1]\rangle_0, \langle \vec{t}[i\!+\! 1]\rangle_1$, $\langle \vec{u}'[i]\rangle_0, \langle \vec{u}'[i]\rangle_1$, and learns $\langle \vec{u}[i]\rangle_0, \langle \vec{u}[i]\rangle_1, i \in [0, m\!-\!3]$.
    \item $P_b$ constructs boolean vector $\langle \vec{v}\rangle$, where $\langle \vec{v}[0]\rangle_b:=\langle \vec{t}[0]\rangle_b$, $\langle \vec{v}[i\!+\!1]\rangle_b := \langle \vec{u}[i]\rangle_b, i\in[0,m\!-\!3]$, and $\langle \vec{v}[m\!-\!1]\rangle_b:=\langle \vec{u}'[m\!-\!2]\rangle_b$.
    \item $P_0$ \& $P_1$ invoke $\mathcal{F}_\text{B2A}$ with input $\langle \vec{v}[i]\rangle_0, \langle \vec{v}[i]\rangle_1, i \in [0, m\!-\!1]$ and learns $\langle \vec{w}[i]\rangle^{2^{\ell_m}}_0, \langle \vec{w}[i]\rangle^{2^{\ell_m}}_1, i \in [0, m\!-\!1]$.
    \item $P_b$ locally computes $\langle \omega\rangle_b := \sum_{i=0} ^ {m-1} i\cdot \langle \vec{w}[i]\rangle_b$.
\end{enumerate}
}{The Secure Percentiles Protocol}{algo: exkth}

For the max or min operation, it is special as its order index falls into the interval endpoints, which enables us to replace the $\mathcal{F}_\text{Mill}$ with $\mathcal{F}_\text{eq}$.
In contrast to the general interval test whose endpoints are typically different numbers, here the interval endpoints can be the same (the case appears when there are no elements whose value is $x_i$, such that $z_{i-1} = z_{i}$).
Such that there will be more than one interval endpoint hit to indicate the maximum value.
After performing $\mathcal{F}_\text{eq}(\langle z_i\rangle_b, n), i \in [m]$, we will get a boolean secret share vector $\langle \vec{u}\rangle_b^B$, where $\vec{u}[j] = 1$ if $j \ge \mathsf{Max}(\vec{x})$ else $0$.
To get the result of a unit vector indicates the max value, $P_b$ can be computed as follows: $\vec{u}'[0]:= \vec{u}[0], \vec{u}'[i]:= \vec{u}[i-1] \oplus \vec{u}[i]$.
Obtaining $\vec{u}'$, we can get the max value through $m$ times $\mathcal{F}_\text{B2A}$ and a local sum up.
The extended communication complexity besides $\mathcal{F}_\text{PFC}^{m,n}$ is $m(\frac{3}{4}\lambda + 9 \ell_n + \lambda + \ell_m)$.
We present the complete protocol $\Pi_\text{CMax}$ in \autoref{appendix: ex_max}.

\section{The Secure Non-linear Math Functions}

\label{text: math_method}
In this section, we present our bisection-based secure MSNZ bit protocol, which constitutes the dominant computational cost in secure non-linear math functions.
We adopt the functionality $\mathcal{F}_\text{MSNZB}$ defined in~\cite{rathee2021sirnn}, which takes a secret-shared value $\langle x\rangle$ as input and outputs a secret-shared vector $\langle \vec{z}\rangle^B$ s.t. $\vec{z}[i] = 1$ if $2^i \le x < 2^{i+1}$ and 0 otherwise.
Furthermore, we formally characterize the relationship between the degree of an approximation polynomial and its valid approximation interval, and propose a secure interval-compression protocol that offers an orthogonal efficiency optimization for secure non-linear math functions.

\subsection{The Optimized Most Significant Non-Zero Bit Protocol}
\label{text: subdeflation}
To get the MSNZ bit of $\langle x \rangle$, Sirnn~\cite{rathee2021sirnn} realizes first invoking the $\mathcal{F}_{\text{DigDec}}$ to convert the arithmetical secret shares of $x$ into the arithmetical secret shares of $x_i \in \mathbb{Z}_{2^c}, i\in [d], c =  \ell/d$, where $x = x_0 || \dots||x_{d-1}$.
Then getting the $k_i = \text{MSNZB}(x_i), v_i =1$ if $x_i = 0$ else $0$ using private lookup table (LUT), 
and then combining the result to get $\text{MSNZB}(x) := \sum_i k_i \cdot (1 \oplus v_i)\cdot \prod_{j>i}v_j$, which requires $\mathcal{O}(d)$ communication rounds to locate the MSNZ $x_i$.
The intuition of our $\Pi_\text{MSNZB}$ is that the MSNZ $x_i$ locates in the non-zero half part of $x$.
Such that we can obliviously locate the MSNZ $x_i$ based on the bisection idea recursively.

Specifically, parties first decompose $\langle x\rangle$ to $\langle x_0\rangle||\langle x_1\rangle$, where $x=x_0||x_1$ $x_0$ and $x_1$ have the same bit length through the invocation of $\mathcal{F}_\text{Mill}$ and $\mathcal{F}_\text{B2A}$.
Then parties check whether $x_0$ is zero through the invocation of $\mathcal{F}_\text{eq}$.
Leveraging the result $1\{x_0 = 0\}$, parties can obliviously choose $x_b$ as $x'$, where $b=1 - 1\{x_0=0\}$.
Then we apply the same steps on $x'$ to get a shorter message, which can finally be efficiently evaluated by the private LUT.
To get the final result, assume that the MSNZB of $x'$ is $k$, $\text{MSNZB}(x)=k+\ell/2\cdot(1\oplus b)$.
To get the boolean secret share of vector $\langle \vec{z}\rangle^B$ with input $\langle \text{MSNZB}(x)\rangle$, we can construct it via a secure vector circular shift similar in $\Pi_\text{PFC}$.
For input length $\ell$ that is not a power of $2$, we can perform zero extension to convert the input to $\mathbb{Z}_{2^t}$ to ensure that during each step the input can be decomposed into two with the same bit length.
The functionality $\mathcal{F}_\text{LUT}$ and $\mathcal{F}_\text{one-hot}$~\cite{rathee2021sirnn} can be implemented with $(1,k)$-$\text{OT}$.
We provide the complete protocol $\Pi_\text{MSNZB}$ in \autoref{pro: msnzb}.

\textbf{Correctness}.
For $\ell$ bit input $x=x_0||x_1$, we have $\text{MSNZB}(x) = \text{MSNZB}(x_0) + \ell/2$ if $x_0 \neq 0$ else $\text{MSNZB}(x) = \text{MSNZB}(x_1)$.
When the input to $\text{MSNZB}$ is small enough, the correctness is ensured through the correctness of $\mathcal{F}_\text{LUT}$.

\textbf{Efficiency}.
For $\ell$ bit input $x$, our method consists of four parts to get the MSNZ bit of $x$.
The first step is to obliviously find the most significant block, where each step requires one invocation of $\mathcal{F}_\text{Mill}$, $\mathcal{F}_\text{B2A}$, $\mathcal{F}_\text{eq}$ and $\mathcal{F}_\text{MUX}$, whose communication complexity is less than $(\frac{7}{4}\nu +3d)\lambda + 26\nu$, where $\ell=dc$, $\nu=\sum_{i=1}^{\log d}(\frac{\ell}{2^i})$ with $2\nu + 3\log d$ communication rounds.
The second step is the evaluation of LUT, whose communication complexity is $2\lambda + 2^c\iota$ with $2$ communication rounds.
The third step is the result combination, which requires $\log d$ times $\mathcal{F}_\text{B2A}$, whose communication complexity is $\log d(\lambda+\iota)$ with $2\log d$ communication rounds.
The fourth step is the one-hot conversion, whose communication complexity is $ 2\lambda+ \ell^2$ with $2$ communication rounds.
As for the concrete parameter $\ell=32, c=8$ suggested in Sirnn~\cite{rathee2021sirnn}, our protocol has $1.47\times$ less communication complexity.

\textbf{Security}.
We describe the protocol in \autoref{pro: msnzb}. 
We prove our protocol $\Pi_\text{MSNZB}$ realizes functionality $\mathcal{F}_\text{MSNZB}$ in Sirnn~\cite{rathee2021sirnn}.

\begin{theorem}
\label{theorem: msnzb}
$\Pi_\text{MSNZB}$ UC realizes $\mathcal{F}_\text{MSNZB}$ in the $\mathcal{F}_\text{Mill}$, $\mathcal{F}_\text{B2A}$, $\mathcal{F}_\text{eq}$, $\mathcal{F}_\text{MUX}$, $\mathcal{F}_\text{one-hot}$, $\mathcal{F}_\text{LUT}$ hybrid model against semi-honest probabilistic polynomial time (PPT) adversaries with statistical corruption.
\end{theorem}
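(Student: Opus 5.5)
The plan is a standard hybrid-model simulation argument. Since the adversary is semi-honest and static, we build, for each corrupted party $P_b$, a simulator $\mathcal{S}$ that produces a view indistinguishable from $P_b$'s real view in $\Pi_\text{MSNZB}$. $\mathcal{S}$ receives $\langle x\rangle_b$ and the output share $\langle \vec{z}\rangle_b^B$ from $\mathcal{F}_\text{MSNZB}$. The key structural observation is that, apart from opening of masked values inside sub-protocols that we treat as ideal, $\Pi_\text{MSNZB}$ consists only of invocations of $\mathcal{F}_\text{Mill}$, $\mathcal{F}_\text{B2A}$, $\mathcal{F}_\text{eq}$, $\mathcal{F}_\text{MUX}$, $\mathcal{F}_\text{LUT}$ and $\mathcal{F}_\text{one-hot}$, interleaved with local linear operations on shares. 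In the hybrid model, $P_b$'s view is therefore its input share together with the sequence of output shares it receives from these ideal functionalities.

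The simulation proceeds round by round, following the bisection recursion of depth $\log d$. At each level, $\mathcal{S}$ plays each ideal functionality on behalf of the corrupted party. For $\mathcal{F}_\text{Mill}$ and $\mathcal{F}_\text{B2A}$, used in the decomposition $x=x_0\|x_1$, for $\mathcal{F}_\text{eq}$, computing $1\{x_0=0\}$, and for $\mathcal{F}_\text{MUX}$, selecting $x'$, it records the corrupted party's inputs, which it can compute locally from earlier simulated shares. It then returns a uniformly random share in the corresponding ring ($\mathbb{Z}_2$ or $\mathbb{Z}_{2^{\ell/2^i}}$). The same applies to the final $\mathcal{F}_\text{LUT}$ on the short block and to the $\log d$ $\mathcal{F}_\text{B2A}$ calls used to combine $k+\frac{\ell}{2}(1\oplus b)$ across levels.

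For the last call, to $\mathcal{F}_\text{one-hot}$, $\mathcal{S}$ does not sample freshly. Instead it returns exactly $\langle\vec{z}\rangle_b^B$ obtained from $\mathcal{F}_\text{MSNZB}$. This keeps the simulated output consistent with the honest party's output in the ideal world, which is needed for UC indistinguishability of the joint distribution of outputs and view.

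Indistinguishability is argued by a hybrid sequence over the ideal calls. Each ideal functionality outputs fresh, uniformly random shares that are independent of everything else given the secret. So in the real (hybrid) execution, every share $P_b$ receives is uniform and independent of the other party's share. The only correlation that matters is between the final output shares and the true $\vec{z}$, and $\mathcal{F}_\text{MSNZB}$ reproduces it by construction. Hence the simulated and hybrid views are identically distributed; the "statistical" qualifier covers any statistical slack in the zero-extension to $\mathbb{Z}_{2^t}$ or in ring conversions. Correctness of the recursion (already argued) guarantees that the honest party's real output equals the functionality's output, so the joint distributions match.

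The main obstacle is bookkeeping rather than a deep step. We must make sure no value is ever revealed in the clear outside the ideal calls, in particular the choice bit $b$ at each level and the intermediate MSNZB index $k$, since those would depend on $x$. I would first check the protocol listing to confirm that $b$ is kept as a boolean share fed to $\mathcal{F}_\text{MUX}$ and that the combination $k+\frac{\ell}{2}(1\oplus b)$ is computed via $\mathcal{F}_\text{B2A}$ followed by local linear operations or $\mathcal{F}_\text{MUX}$. If a product of two shared values appears there, it must be routed through an ideal call. Once this is verified, the simulation above goes through uniformly for both corrupted parties.
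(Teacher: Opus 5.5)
Your proposal is correct and follows essentially the same route as the paper. In the hybrid model, every message the corrupted party receives is an output share of $\mathcal{F}_\text{Mill}$, $\mathcal{F}_\text{B2A}$, $\mathcal{F}_\text{eq}$, $\mathcal{F}_\text{MUX}$, $\mathcal{F}_\text{LUT}$ or $\mathcal{F}_\text{one-hot}$, hence uniformly random, and the simulator just samples these values. Your one refinement is to program the $\mathcal{F}_\text{one-hot}$ output as the share $\langle \vec{z}\rangle_b^B$ from $\mathcal{F}_\text{MSNZB}$ and to invoke correctness for the joint distribution with the honest party's output; the paper's sketch leaves this implicit and samples that value uniformly too, so yours is the more careful version.
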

\begin{proof}
CF. Appendix.~\ref{sec: msnzb} for details.
\end{proof}

\myherebox{Protocol $\Pi_\text{MSNZB}$}{white!20}{white!10}{
   
   \emph{$\mathsf{Input:}$} 
   For $b \in \{0, 1\}$ $P_b$ inputs $\langle x\rangle_b \in \mathbb{Z}_{2^\ell}$.

   \emph{$\mathsf{Output:}$} 
   $P_b$ outputs $\{\langle z_i\rangle_b^B\}_{i\in[\ell]}$ s.t. $z_i = 1$ if $2^i \le x < 2^{i+1}$ and 0 otherwise.

\underline{\textbf{Protocol:}}
\begin{enumerate}[label=\arabic*), leftmargin=*]
    \item $P_b$ sets $\langle x\rangle_b$ as $\langle x^1\rangle_b$
    \item For $j =1 $ to $\log d$:
    \begin{enumerate}[label=\arabic*), leftmargin=*]
        \item $P_b$ locally decompose $\langle x^j\rangle_b^{2^{\ell_j}}$ into $\langle u^j\rangle_b^{2^{\ell_{j+1}}}||\langle v^j\rangle_b^{2^{\ell_{j+1}}}$, where $\ell_{j+1}=\ell_j/2$.
        \item $P_b$ invokes $\mathcal{F}_\text{Mill}$ with inputs $\langle v^j\rangle_b$ and learns $\langle w^j\rangle_b^B$, where $w^j = 1\{\langle v^j\rangle_0+\langle v^j\rangle_1 \ge 2^{\ell_{j+1}}\}$.
        \item $P_b$ invokes $\mathcal{F}_\text{B2A}$ with input $\langle w^j\rangle_b^B$ and learns $\langle t^j \rangle_b$.
        \item $P_b$ invokes $\mathcal{F}_{\text{eq}}$ with input $\langle t^j\rangle_b, \langle u^j\rangle_b$ and learns $\langle \sigma^j \rangle_b^B$ , where $\sigma^j = 1\{0=\langle t^j\rangle_0 + \langle u^j\rangle_0+\langle t^j\rangle_1 + \langle u^j\rangle_1\}$.
        \item $P_b$ invokes $\mathcal{F}_{\text{MUX}}$ with input $\langle \sigma^j\rangle_b^B, \langle t^j \rangle_b + \langle u^j\rangle_b, \langle v^j\rangle_b$ and learns $\langle x^{j+1}\rangle_b$.
    \end{enumerate}

    \item $P_b$ invokes $\mathcal{F}_\text{LUT}$ with inputs $\langle x^{d+1}\rangle_b$ and learns $\langle \widetilde{z}^{d}\rangle_b \in \mathbb{Z}_{2^{\iota}}$, where $\iota = \log \ell$.
    \item For $j =1 $ to $\log d$:
    \begin{enumerate}[label=\arabic*), leftmargin=*]
        \item $P_b$ invokes $\mathcal{F}_{\text{B2A}}$ with input $\langle \sigma^{d-j}\rangle_b^B$ and learns $\langle h'\rangle_b^{2^\iota}$, then computes $\langle h\rangle_b:= \langle h'\rangle_b \cdot\frac{\ell}{2^{d-j}}$.
        \item $P_b$ computes $\langle \widetilde{z}^{d-j}\rangle_b:=\langle \widetilde{z}^{d-j+1}\rangle_b + \langle h\rangle_b$.
    \end{enumerate}
    \item $P_b$ invokes $\mathcal{F}_{\text{one-hot}}(\langle \widetilde{z}^{1}\rangle_b)$ and learns $\{\langle z_i\rangle_b^B\}_{i\in[\ell]}$.
\end{enumerate}
}{The Secure MSNZB Protocol}{pro: msnzb}

\subsection{The Secure Interval Compression Protocol}
\label{text: sublog_approx}
In this subsection, we present an alternative efficiency optimization direction for secure non-linear math functions through the secure interval compression protocol.
As discussed in \autoref{text: subdeflation}, $x$ can be deflated into the interval $[2^k, 2^k(1+\epsilon)]$, where $k \in \mathbb{Z}$.
Leveraging the $\mathcal{F}_\text{MSNZB}$, the $\epsilon$ is a constant equals $1$.
In the following, we present the efficency improvement brought from the interval compression when applying the polynomial approximation or the iterative methods like Newton's iteration.
In the following, we take $1/x$ as an example.

As for the polynomial approximation, to minimize the max approximation error with Chebyshev polynomial approximation, we have the following lemma.
\begin{lemma}\label{lemma: approx_error}
    The minimax approximation error of $1/x$ on interval $[2^k, 2^k(1+\epsilon)]$ is less than $\frac{\epsilon^{n+1}}{2^{2n+1}2^k}$.
\end{lemma}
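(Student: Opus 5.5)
The plan is to bound the minimax error from above by the error of one explicit polynomial: the degree-$n$ interpolant of $f(x)=1/x$ at the Chebyshev nodes of the interval $I=[a,b]$, where $a=2^k$ and $b=2^k(1+\epsilon)$. The minimax error $E_n(f;I)=\inf_{\deg p\le n}\max_{x\in I}|f(x)-p(x)|$ is at most the error of any particular polynomial. So it suffices to bound the error of the Chebyshev interpolant $p_n$. Here $p_n$ interpolates $f$ at the $n+1$ nodes $x_j=\frac{a+b}{2}+\frac{b-a}{2}\cos\frac{(2j+1)\pi}{2(n+1)}$, for $j\in[n+1]$.

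The key steps, in order, are as follows.

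\textbf{Step 1.} Write the standard interpolation remainder: for each $x\in I$ there is some $\xi\in I$ with
\[
f(x)-p_n(x)=\frac{f^{(n+1)}(\xi)}{(n+1)!}\prod_{j=0}^{n}(x-x_j).
\]

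\textbf{Step 2.} Bound the node polynomial. After the affine change of variables $x=\frac{a+b}{2}+\frac{b-a}{2}t$, the product equals $\left(\frac{b-a}{2}\right)^{n+1}2^{-n}T_{n+1}(t)$, where $T_{n+1}$ is the Chebyshev polynomial. Since $|T_{n+1}|\le 1$ on $[-1,1]$, its sup-norm on $I$ is at most $\frac{(b-a)^{n+1}}{2^{2n+1}}$.

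\textbf{Step 3.} Compute the derivative: $f^{(n+1)}(x)=(-1)^{n+1}(n+1)!\,x^{-(n+2)}$. Its absolute value is decreasing on $I$, so $\max_{\xi\in I}|f^{(n+1)}(\xi)|/(n+1)!=a^{-(n+2)}=2^{-k(n+2)}$.

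\textbf{Step 4.} Substitute $b-a=2^k\epsilon$ and combine. This gives
\[
E_n\le \frac{(2^k\epsilon)^{n+1}}{2^{2n+1}\,2^{k(n+2)}}=\frac{\epsilon^{n+1}}{2^{2n+1}2^k},
\]
which is the claimed bound.

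The main obstacle is justifying the strict inequality rather than $\le$. I would argue it as follows. The remainder bound in Step 1 can only be attained if $|f^{(n+1)}(\xi)|$ equals its maximum. That happens only at $\xi=a$, so attaining the bound would require $\xi=a$ at a point where $|T_{n+1}(t)|=1$, and this cannot hold simultaneously at all the extremal points. Alternatively, I would use the equioscillation characterization. The Chebyshev interpolant of a non-polynomial function generally does not equioscillate at $n+2$ points with equal magnitude, because its error profile $f^{(n+1)}(\xi(x))\,T_{n+1}(t)$ is weighted non-uniformly. Hence $p_n$ is not the best approximation, and the minimax error is strictly smaller than the error of $p_n$. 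If this is too delicate, I would state the result with $\le$, which is all the subsequent efficiency discussion uses.
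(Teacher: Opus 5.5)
Your proposal is correct and is essentially the paper's proof. The paper simply quotes Powell's bound $\frac{(b-a)^{n+1}}{2^{2n+1}(n+1)!}\max_{[a,b]}|f^{(n+1)}|$ on the minimax error and substitutes $f=1/x$, $b-a=2^k\epsilon$, $\max|f^{(n+1)}|=(n+1)!\,2^{-k(n+2)}$; your Steps 1--3 rederive that bound via Chebyshev interpolation, and Step 4 makes the same substitution.

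On strictness, note that the paper's own proof also only yields $\le$. Your first argument for a strict inequality is mis-aimed: you must rule out attaining the bound at \emph{every} point, not at ``all extremal points simultaneously.'' The clean way (for $\epsilon>0$) uses that Rolle's theorem places $\xi(x)$ in the \emph{open} interval spanned by $x$ and the nodes. Since the Chebyshev nodes lie strictly inside $(a,b)$, this gives $\xi(x)>a$ for every $x\in I$. So the error is strictly below the bound pointwise, and hence also at its maximum on the compact $I$. Equivalently, you can use the explicit formula $1/x-p_n(x)=\omega(x)/(x\,\omega(0))$, where $\omega(x)=\prod_j(x-x_j)$ and $|\omega(0)|=\prod_j x_j>a^{n+1}$.
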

\begin{proof}
    According to the minimax approximation error in~\cite{powell1981approximation}, we have that for the non-linear function $f(x)$ the n-degree approximation polynomial $p_n(x)$ on the range $[a, b]$ has the error
    \[
    \begin{aligned}
        \max_{a\le x \le b}|f(x) - p_n(x)|
        &\le \frac{(b-a)^{n+1}}{2^{2n+1}(n+1)!} \\
        &\quad \cdot \max_{a \le x \le b}|f^{(n+1)}(x)|.
    \end{aligned}
    \]
    For $f(x) = 1 / x$ on the interval $[2^k, 2^k(1+\epsilon)]$, we have $$Err \le \frac{\epsilon^{n+1}}{2^{2n+1}2^k}$$.
    which implies Lemma~\ref{lemma: approx_error}.
\end{proof}
Leveraging Lemma~\ref{lemma: approx_error}, we can see that if we only use $\mathcal{F}_\text{MSNZB}$, the error equals to $\frac{1}{2^{2n+1}2^k}$.
Although we can normalize $x$ into the interval $[2^k, 2^{k+1}]$ with larger $k$ to increase the convergence, it needs a longer bit to represent the input to avoid the overflow during the polynomial computation, which increases the overhead.
Our alternative direction is to minimize the $\epsilon$, i.e., compress the interval then $\epsilon=1/2$.
Such that the approximation error is $\frac{1}{2^{3n+2}2^k}$, which gives a faster convergence.
That is, given the same precision target, we can also use lower polynomial degree to bring an efficiency improvement.
As for the iterative method, the approximation error relates to the initial value.
In SIRNN~\cite{rathee2021sirnn}, the initial values are assigned with a private LUT.
Following our interval compression idea, we can reduce about half of the LUT size. 

To compress the interval such that $[2^k, 2^k(1+\epsilon)]$, where $\epsilon=1/2$, we propose the following lemma.
\begin{lemma}\label{lemma: range reduce}
    For any $x \in [a+b,a+2b)$, $a,b \in \mathbb{R}, a > 0,b > 0$, one can find $c \in \mathbb{R}$ and a positive number $x'$ such that
    $$ x' = cx, x' \in [a, a+b) $$
\end{lemma}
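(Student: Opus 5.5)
The lemma as literally stated is nearly trivial: with $c$ allowed to be any real number, take $c := a/x$. Then $x' = a \in [a, a+b)$ and $x' > 0$, since $x \ge a+b > 0$. I would record this one-line argument first, so that the statement as written is settled. The real content, however, is a choice of $c$ that is \emph{independent of the private value} $x$ (or depends on it only through a small public case split). Only such a choice is useful for the interval-compression protocol, since the parties must be able to apply the scaling obliviously. So the main body of the proof should give a fixed constant $c$ that works for every $x$ in the interval.

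The plan for the constant is a direct interval computation. Take
\[
c := \frac{a}{a+b},
\]
which is positive because $a>0$ and $b>0$. Multiplying $x \in [a+b, a+2b)$ by $c$ gives
\[
x' = cx \in \Big[a,\; \frac{a(a+2b)}{a+b}\Big).
\]
It then suffices to show $\frac{a(a+2b)}{a+b} \le a+b$. This is equivalent to $a^2 + 2ab \le a^2 + 2ab + b^2$, i.e.\ $b^2 \ge 0$, which always holds. Hence $x' \in [a, a+b)$ and $x' > 0$.

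The step needing care is checking that the endpoints come out right. The left endpoint $a$ is attained exactly at $x = a+b$. For the right endpoint, the strict inequality $x < a+2b$ gives $x' < \frac{a(a+2b)}{a+b} \le a+b$, so strictness is preserved.

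Finally, I would remark how this specializes to the setting of the surrounding text. Take $a = 2^k$ and $b = 2^{k-1}$, so that $[a, a+b) = [2^k, 2^k(1+\tfrac12))$ and $[a+b, a+2b) = [1.5\cdot 2^k, 2^{k+1})$. The constant is then $c = 2/3$. The parties therefore test whether $x \ge 1.5\cdot 2^k$ and, in that case, multiply by the public constant $2/3$, compensating afterward (e.g.\ for $1/x$, by multiplying the result by $c$). This yields the $\epsilon = 1/2$ compression used with Lemma~\ref{lemma: approx_error}.
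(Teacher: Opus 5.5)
Your argument is correct and is essentially the paper's: the paper also takes the $x$-independent constant $c = \frac{a}{a+b}$, obtained by requiring $c(a+b) \ge a$ and $c(a+2b) < a+b$ and checking that the resulting range $\bigl[\frac{a}{a+b}, \frac{a+b}{a+2b}\bigr)$ is nonempty via the same $b^2$ computation you use. Your preliminary remark that $c = a/x$ already settles the literal statement, and your specialization to $c = 2/3$ on $[1.5\cdot 2^k, 2^{k+1})$, are consistent with how the paper applies the lemma.
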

\begin{proof}
    One can write as $x'=cx$.
    such that $a+b\le cx < a+2b$. 
    To ensure this inequality holds, we can get the following two inequalities
    \[
        \left\{
        \begin{aligned}
        c\cdot(a+2b) &< a+b \\
        c \cdot(a+b) &\ge a
        \end{aligned}
        \right.
    \]
    We work out the $\frac{a}{a+b}\le c < \frac{a+b}{a+2b}$.
    We can compute that
    \begin{align}
        \frac{a+b}{a+2b} - \frac{a}{a+b} &= \frac{(a+b)^2 - a(a+2b)}{(a+2b)(a+b)} \nonumber\\
        &= \frac{b^2}{(a+2b)(a+b)} > 0 \nonumber
    \end{align}
    Thus we find $c\!=\!\frac{a}{a+b}$,
    which implies Lemma~\ref{lemma: range reduce}.
\end{proof}
The proof of Lemma~\ref{lemma: range reduce} also presents the concrete method to normalize $x$ in interval $[2^k, 2^k\cdot2)$ to $[2^k, 2^k\cdot 1.5)$.
Specifically, take $k=0$, $x\in[1,2)$ as an example, parties input $\langle x\rangle_b$, $1.5$ to $\mathcal{F}_\text{Mill}$.
After the execution, parties get the output $\langle \eta\rangle_b^B$, $\eta = 1\{x>1.5\}$ from $\mathcal{F}_\text{Mill}$.  
According to Lemma~\ref{lemma: range reduce}, for $a=1,b=1/2$, we can find $c=2/3$. 
The final $ x':= x + \eta (cx-x)$, which requires one $\mathcal{F}_\text{MUX}$ with input $\langle \eta\rangle^B$ and $\langle cx-x\rangle^{2^\ell}$.

\section{Experiments}

\label{text: experiment}
\textbf{Experimental setup}. Our experiments were conducted on a single machine with Intel(R) Core(TM) i5-14400F CPU @ 4.70GHz and 32.0GB RAM running Ubuntu 22.04.
To simulate various network conditions, we used Linux Traffic Control (tc), emulating a LAN environment with 10 Gbps bandwidth and 0.125 ms round-trip (RTT) latency and a WAN environment with 1 Gbps bandwidth and 50 ms RTT latency.
All experiments were built upon the EzPC library\footnote{https://github.com/mpc-msri/EzPC}, with the underlying IKNP-style OT protocols.
We use 32-out-of-16 to represent the decimals.
In the following, we first present our benchmark of the three functions: 1) secure frequency counting, 2) secure sorting, and 3) the non-linear math function. Then we give the result of the case study.
Our implementations are publicly available at https://anonymous.4open.science/r/PSC-FC2D/.

\subsection{Secure counting function} 
\label{text: experi_count}
We evaluate the communication cost and running time of secure frequency counting under two scalability settings: varying the data size $n$ with a fixed domain size $m=1024$, and varying the domain size $m$ with a fixed data size $n=4096$.
As baselines, we compare \sysname with three representative approaches: equality-based counting instantiated with state-of-the-art secure equality tests~\cite{rathee2021sirnn, rathee2020cryptflow2}, DPF-based\footnote{\url{https://github.com/xingpz2008/dealerless-FSS_public}} counting based on distributed function secret sharing~\cite{xing2025distributedFSS}, and bit-decomposition-based counting~\cite{wprfpeceny2025efficient}.

The results are reported in \autoref{fig: count_comm} and \autoref{tab: count_time}. 
Overall, \sysname consistently outperforms all baselines in both communication and running time.
In terms of communication, \sysname reduces the communication cost by $3.1$-$8.6\times$ compared with equality-based counting, $4.7$-$21.0\times$ compared with DPF-based counting, and $1.4$-$1.9\times$ compared with bit-decomposition-based counting.
For running time, \sysname achieves the best performance under both LAN and WAN settings. 
Under the LAN setting, \sysname achieves a $4.0$-$7.9\times$ speedup over equality-based counting and a $1.1$-$3.0\times$ speedup over bit-decomposition-based counting. 
Under the WAN setting, where network latency becomes more significant, \sysname further achieves a $1.3$-$7.2\times$ speedup over equality-based counting, a $1.6$-$2.3\times$ speedup over bit-decomposition-based counting.
For the DPF-based counting baseline, the correlated key generation under 2PC requires securely evaluating a large number of PRG expansions and correction-word computations, which makes it substantially slower than the other methods in the 2PC setting.
These results demonstrate that \sysname scales efficiently with both the data size and the domain size.

\begin{table}[!ht]
\centering
\caption{Running-time overhead comparison for counting schemes under LAN and WAN settings. DPF, BitDec, and Equality are short for DPF-based, bit-decomposition-based, and Equality-based schemes, respectively. When varying $n$, we fix $m=1024$; when varying $m$, we fix $n=4096$.
}
\label{tab: count_time}
\footnotesize
\begin{tblr}{
  width = \columnwidth,
  colspec = {@{}Q[c]Q[c]Q[c]*{4}{X[c]}@{}},
  cells = {c},
  rowsep = 1.5pt,
  colsep = 2pt,
  cell{1}{1} = {r=2}{},
  cell{1}{2} = {r=2}{},
  cell{1}{3} = {r=2}{},
  cell{1}{4} = {c=4}{},
  cell{3}{1} = {r=8}{},
  cell{3}{2} = {r=4}{},
  cell{7}{2} = {r=4}{},
  cell{11}{1} = {r=8}{},
  cell{11}{2} = {r=4}{},
  cell{15}{2} = {r=4}{},
  hline{1,3,19} = {-}{0.08em},
  hline{2} = {4-7}{0.08em},
  hline{7,11,15} = {-}{0.05em},
}
Network & Varying & Value & Runtime Overhead (s) &        &            &          \\
        &         &       & \makecell[c]{DPF} & \makecell[c]{BitDec} & \makecell[c]{Equality} & \makecell[c]{Ours} \\
LAN     & $m$     & 16    & 193.77    & 0.03       & 0.04       & \textbf{0.01}     \\
        &         & 64    & 277.22    & 0.08       & 0.20       & \textbf{0.05}     \\
        &         & 256   & 345.14    & 0.29       & 1.29       & \textbf{0.19}     \\
        &         & 1024  & 445.15    & 1.05       & 6.09       & \textbf{0.77}     \\     
        & $n$     & 2048  & 223.47    & 0.53       & 3.15       & \textbf{0.46}     \\
        &         & 4096  & 442.95    & 1.06       & 6.12       & \textbf{0.82}     \\
        &         & 8192  & 918.81    & 2.05       & 12.96      & \textbf{1.79}     \\
        &         & 16384 & 2143.02   & 4.42       & 25.28      & \textbf{3.89}     \\
        
WAN     & $m$     & 16    & 15125.56  & 1.12       & 0.92       & \textbf{0.70}     \\
        &         & 64    & 21707.92  & 1.89       & 2.03       & \textbf{0.92}     \\
        &         & 256   & 28112.28  & 3.27       & 7.77       & \textbf{1.42}     \\
        &         & 1024  & 34533.01  & 8.45       & 29.36      & \textbf{4.34}     \\
        & $n$     & 2048  & 17262.74  & 5.18       & 15.46      & \textbf{2.41}     \\
        &         & 4096  & 34517.30  & 8.38       & 29.99      & \textbf{4.34}     \\
        &         & 8192  & 69050.15  & 14.74      & 57.36      & \textbf{8.28}     \\
        &         & 16384 & 138098.30 & 27.34      & 116.00     & \textbf{16.14}  
        
\end{tblr}
\end{table}

\begin{figure}[htbp] 
    \centering    
    \begin{minipage}[c]{0.49\linewidth}
        \centering
        \includegraphics[width=\textwidth]{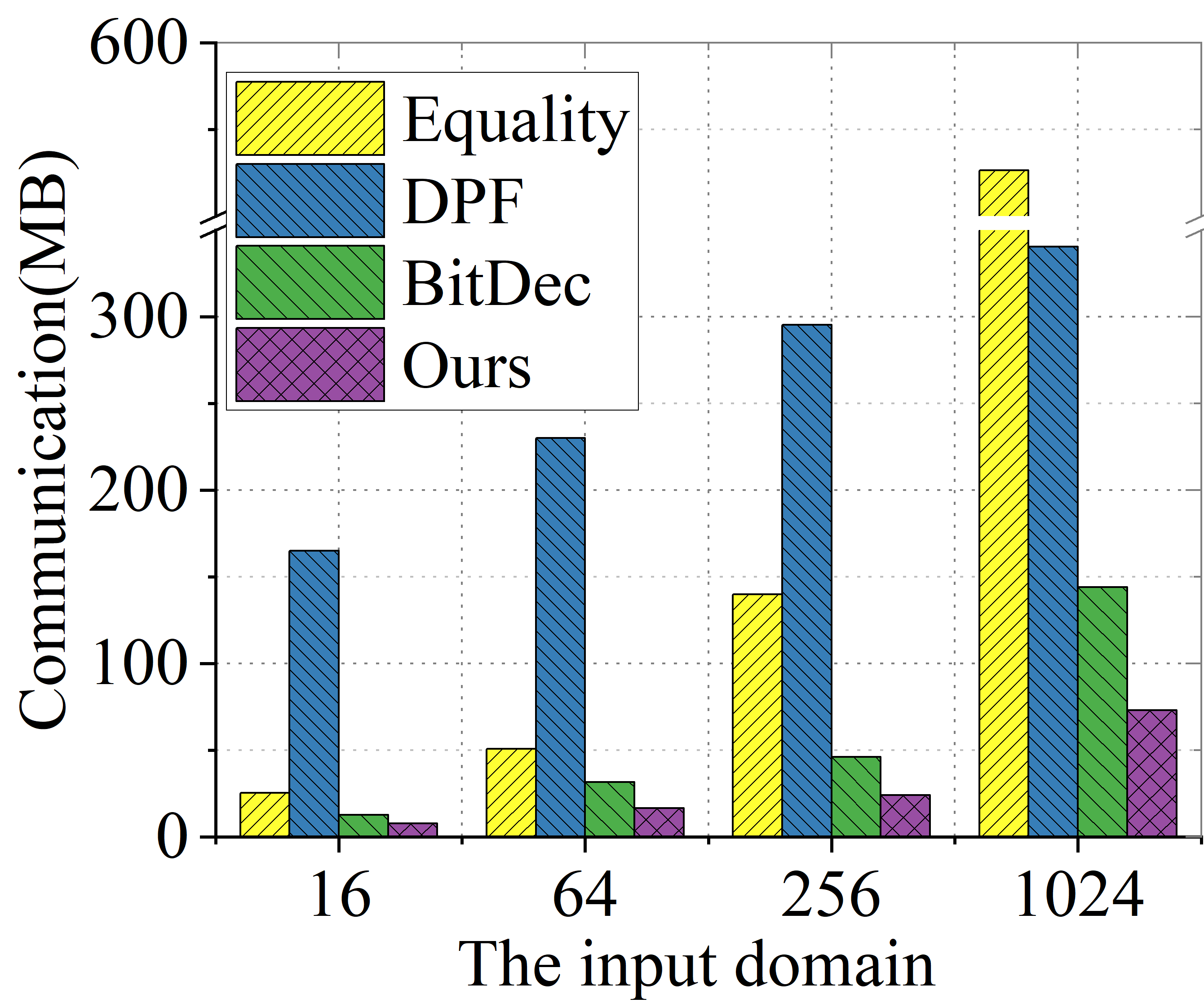}
    \end{minipage}
    \begin{minipage}[c]{0.49\linewidth}
        \centering
        \includegraphics[width=\textwidth]{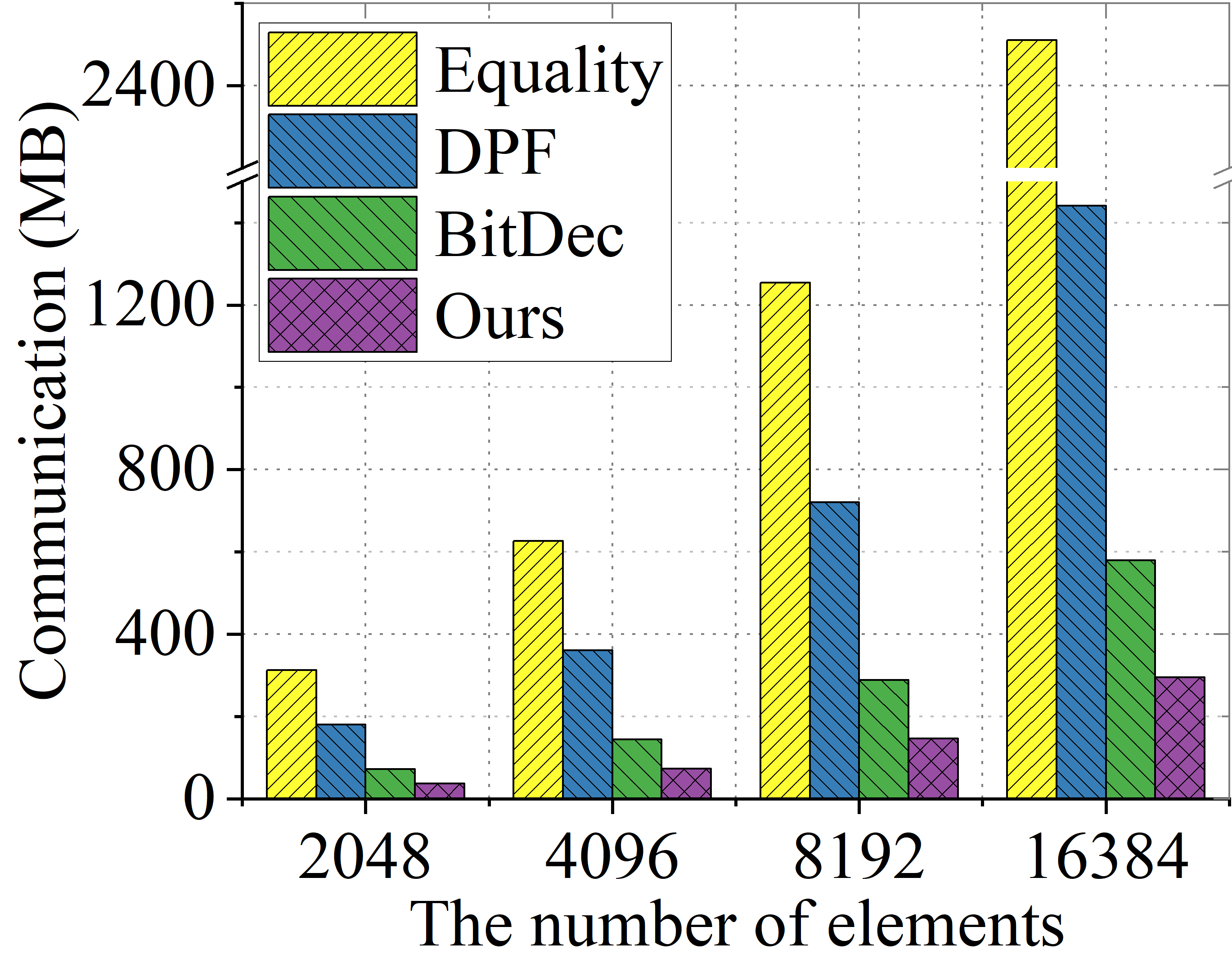}
    \end{minipage}

    \caption{Communication overhead of secure frequency counting under varying data sizes and domain sizes.}
    \label{fig: count_comm}
    \Description[Short description]{}
\end{figure}

\subsection{Secure order function}
Similar to the secure frequency counting setting, we evaluate the communication cost and running time of secure sorting under two scalability settings: varying the data size $n$ with a fixed domain size $m=1024$, and varying the domain size $m$ with a fixed data size $n=4096$.
As baselines, we compare \sysname with three representative approaches: shuffle-based QuickSort instantiated with state-of-the-art secure comparison~\cite{rathee2020cryptflow2, rathee2021sirnn} and secure shuffle~\cite{chase2020secret}, DCF-based sorting based on distributed function secret sharing~\cite{xing2025distributedFSS}, and RadixSort from~\cite{wprfpeceny2025efficient}.
The results are reported in \autoref{tab: sort_time} and \autoref{fig: sort_comm}.
The DCF-based sorting is omitted from \autoref{fig: sort_comm} for readability, since its communication cost is at the GB scale ($>$ 50GB).

When varying the data size $n$ with a fixed domain size $m=1024$, \sysname incurs approximately $2\times$ higher communication cost than the QuickSort and RadixSort baselines.
Despite this increase in communication volume, \sysname requires significantly fewer communication rounds, $\mathcal{O}(\log\log m)$ compared to $\mathcal{O}(\log n \log\log m)$ for the QuickSort baseline and $\mathcal{O}(\log m)$ for the RadixSort baseline, which results in lower overall running time.
Specifically, under the LAN setting, \sysname achieves a $4.7$-$5.9\times$ speedup over the QuickSort baseline across all evaluated values of $n$.
Under the WAN setting, where network latency dominates, the advantage becomes substantially more pronounced: \sysname reduces the running time by roughly $90\times$ over QuickSort baseline and $4.7\times$ over RadixSort baseline.

When varying the domain size $m$ with a fixed data size $n=4096$, the communication cost and running time of QuickSort stay almost unchanged as $m$ increases, since its cost primarily scales with the dataset size $n$. 
In contrast, RadixSort exhibits a moderate growth trend because its cost is affected by the number of radix decomposition rounds, which increases with $m$.
For \sysname, both communication cost and running time grow more noticeably as $m$ increases, due to its domain-size-dependent computation pattern.
Specifically,
under the LAN setting, \sysname is $21.4\times$, $10.1\times$, and $3.4\times$ faster than RadixSort when $m=16$, $64$, and $256$, respectively.
However, when $m$ increases to $1024$, the running time of \sysname becomes comparable to RadixSort, with a slight slowdown of $1.1\times$.
In terms of communication, \sysname also requires less communication than RadixSort when $m$ is small, while its communication cost becomes about $2.1\times$ higher than RadixSort at $m=1024$.

Overall, these results indicate that \sysname is particularly effective in optimizing running time and communication cost when the feature domain size $m < \log^2 n$.

\begin{table}[!ht]
\centering
\caption{Running-time overhead comparison for sorting schemes under LAN and WAN settings. DCF is short for DCF-based sorting. When varying $n$, we fix $m=1024$; when varying $m$, we fix $n=4096$.}
\label{tab: sort_time}
\begin{tblr}{
  width = \columnwidth,
  colspec = {@{}Q[c]Q[c]Q[c]*{4}{X[c]}@{}},
  cells = {c},
  rowsep = 1.5pt,
  colsep = 2pt,
  cell{1}{1} = {r=2}{},
  cell{1}{2} = {r=2}{},
  cell{1}{3} = {r=2}{},
  cell{1}{4} = {c=4}{},
  cell{3}{1} = {r=8}{},
  cell{3}{2} = {r=4}{},
  cell{7}{2} = {r=4}{},
  cell{11}{1} = {r=8}{},
  cell{11}{2} = {r=4}{},
  cell{15}{2} = {r=4}{},
  hline{1,3,19} = {-}{0.08em},
  hline{2} = {4-7}{0.08em},
  hline{7,11,15} = {-}{0.05em},
}
Network & Varying & Value & Runtime Overhead (s) &       &       &      \\
        &         &       & \makecell[c]{DCF} & \makecell[c]{QuickSort} & \makecell[c]{RadixSort} & \makecell[c]{Ours} \\
LAN     & $m$     & 16    & 34610.61  & 14.66   & 1.07   & \textbf{0.05}  \\
        &         & 64    & 47301.48  & 14.71   & 1.62   & \textbf{0.16}  \\
        &         & 256   & 52696.67  & 14.70   & 2.16   & \textbf{0.63}  \\
        &         & 1024  & 61931.15  & 14.75   & \textbf{2.71}   & 2.87  \\     
        & $n$     & 2048  & 27208.71  & 5.58    & 1.28   & \textbf{1.18}  \\
        &         & 4096  & 67119.64  & 14.64   & 2.76   & \textbf{2.53}  \\
        &         & 8192  & 153826.28 & 42.58   & 7.92   & \textbf{7.54}  \\
        &         & 16384 & 337420.98 & 136.03  & \textbf{15.29}  & 22.97 \\
WAN     & $m$     & 16    & $3.49\mathrm{e}6$ & 1113.81 & 13.48  & \textbf{1.23}  \\
        &         & 64    & $4.55\mathrm{e}6$ & 1104.57 & 20.41  & \textbf{1.54}  \\
        &         & 256   & $5.76\mathrm{e}6$ & 1114.85 & 27.22  & \textbf{3.12}  \\
        &         & 1024  & $6.98\mathrm{e}6$ & 1109.80 & 34.65  & \textbf{9.77}  \\
        & $n$     & 2048  & $2.91\mathrm{e}6$ & 557.76  & 20.96  & \textbf{4.94}  \\
        &         & 4096  & $7.65\mathrm{e}6$ & 1119.99 & 35.08  & \textbf{9.54}  \\
        &         & 8192  & $1.62\mathrm{e}7$ & 2232.24 & 95.82  & \textbf{20.19} \\
        &         & 16384 & $3.80\mathrm{e}7$ & 4507.88 & 186.52 & \textbf{46.82} 
        
\end{tblr}
\end{table}

\begin{figure}[htbp] 
    \centering    
    \begin{minipage}[c]{0.49\linewidth}
        \centering
        \includegraphics[width=\textwidth]{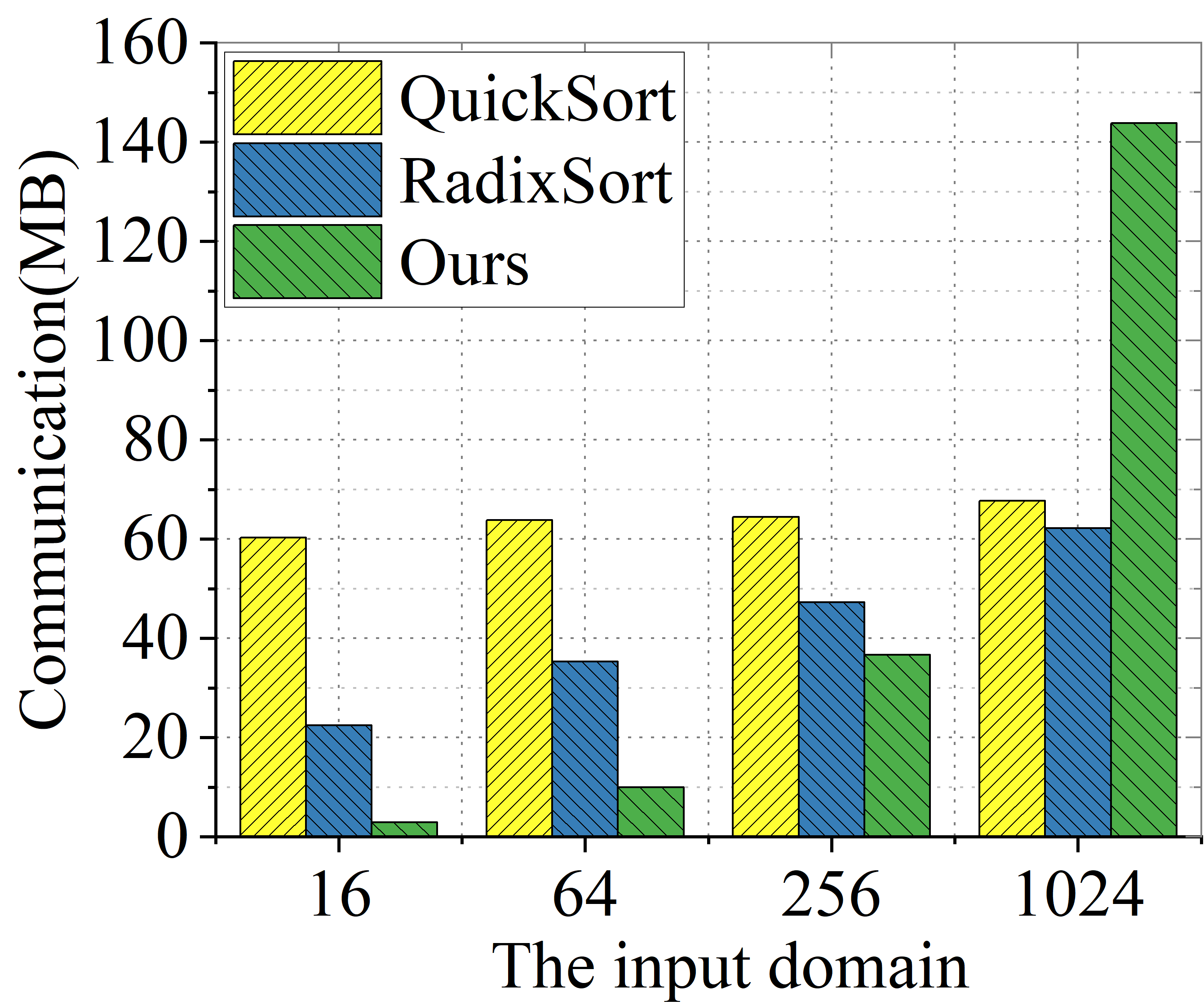}
    \end{minipage}
    \begin{minipage}[c]{0.49\linewidth}
        \centering
        \includegraphics[width=\textwidth]{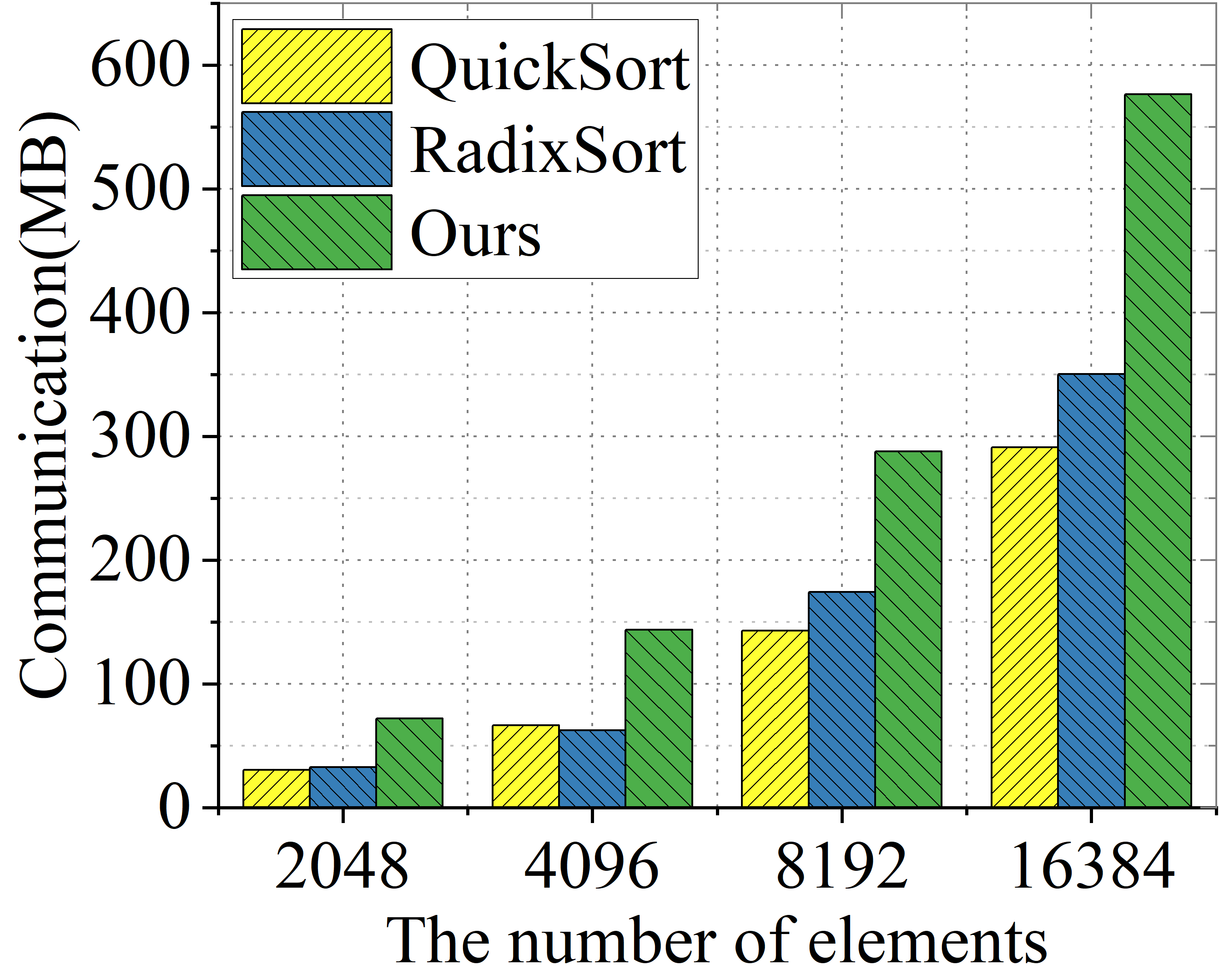}
    \end{minipage}
    \caption{Communication overhead of secure sorting under varying data sizes and domain sizes,excluding the GB-scale DCF-based sorting for readability.}
    \label{fig: sort_comm}
    \Description{A diagram showing Communication comparison.}
\end{figure}

\subsection{Secure non-linear math function}
In this subsection, we evaluate our protocols for secure non-linear math functions, including the tree-based $\Pi_{\text{MSNZB}}$ and its application to functions $1/x$, $1/\sqrt{x}$, and $\ln(x)$.

To evaluate the efficiency improvement of our secure MSNZB protocol, we test $10^5$ numbers under both LAN and WAN settings and set the MSNZB in Sirnn~\cite{rathee2021sirnn} as the baseline.
As shown in \autoref{tab: msnzb_cmpwithsirnn}, our tree-based MSNZB achieves consistent improvements in both communication and runtime.
Specifically, it reduces communication by $1.47\times$ and achieves $2.48\times$ and $1.61\times$ faster execution under LAN and WAN settings, respectively.
These gains stem from the hierarchical tree structure, which requires only a single invocation of the LUTs, leading to a more communication-efficient protocol.
Since MSNZB serves as a fundamental building block for most nonlinear functions, this optimization directly accelerates the overall evaluation of such functions.

\begin{table}[!ht]
\centering
\caption{Performance comparison of MSNZB implementations.}
\label{tab: msnzb_cmpwithsirnn}
\begin{tblr}{
  cells = {c},
  cell{1}{1} = {r=2}{},
  cell{1}{2} = {c=2}{},
  cell{1}{4} = {r=2}{},
  vline{2-4} = {3-4}{},
  hline{1,3,5} = {-}{0.08em},
  hline{2} = {2-3}{0.08em},
  hline{4} = {-}{0.05em},
}
Method                                  & Total Time (in sec)     &                          & {Comm./ Inst.\\ (in KB)} \\
                                        & LAN                     & WAN                      &                          \\
\sysname                 & 1.49                    & 8.32                     & 1.14                     \\
\cite{rathee2021sirnn} & {3.68\\(2.48 $\times$)} & {13.37\\(1.61 $\times$)} & {1.68\\(1.47 $\times$)}  
\end{tblr}
\end{table}

To evaluate the efficiency improvement of non-linear math functions brought from the new MSNZB protocol, we also test it on $1/x$, $1/\sqrt{x}$ and $\ln x$. 
We test $10^5$ numbers under both LAN and WAN settings.
We set the $1/x$ and $1/\sqrt{x}$ in~\cite{rathee2021sirnn}, and $\ln x$ in~\cite{lu2020faster} as baselines. 
As summarized in \autoref{tab: funcs}, our method reduces communication cost by about $1.15\times$, and about $1.70\times$ faster execution under LAN settings, and about $1.25\times$ under WAN conditions.

\begin{table}[!ht]
\centering
\caption{Performance comparison against ~\cite{rathee2021sirnn} and ~\cite{lu2020faster} under different network conditions.}
\label{tab: funcs}
\begin{tblr}{
  cells = {c},
  colsep = 3pt,
  cell{1}{1} = {c=2,r=2}{},
  cell{1}{3} = {c=2}{},
  cell{1}{5} = {r=2}{},
  cell{1}{6} = {r=2}{},
  cell{3}{1} = {r=2}{},
  cell{5}{1} = {r=2}{},
  cell{7}{1} = {r=2}{},
  vline{3-6} = {3-8}{},
  hline{1,3,9} = {-}{0.08em},
  hline{4,6,8} = {2-6}{},
  hline{5,7} = {-}{},
}
Method       &                                         & Total Time (in sec)     &                          & {Comm./Inst. \\ (in KB)} & {Max ULP\\ Error} \\
             &                                         & LAN                     & WAN                      &                          &                   \\
1/x          & \sysname                 & 2.86                    & 18.97                    & 3.37                     & 2                 \\
             & \cite{rathee2021sirnn} & {5.06\\(1.77 $\times$)} & {24.01\\(1.26 $\times$)} & {3.91\\(1.16 $\times$)}  & 1                 \\
$1/\sqrt{x}$ & \sysname                 & 3.22                    & 21.78                    & 3.56                     & 4                 \\
             & \cite{rathee2021sirnn} & {5.42\\(1.68 $\times$)} & {26.82\\(1.23 $\times$)} & {4.10\\(1.15 $\times$)}  & 4                 \\
$\ln x$      & \sysname                 & 2.96                    & 20.44                    & 3.29                     & 12                \\
             & \cite{lu2020faster}    & {5.16\\(1.74 $\times$)} & {25.48\\(1.25 $\times$)} & {3.84\\(1.16 $\times$)}  & 28                
\end{tblr}
\end{table}

\subsection{Case studies}
\label{text: case}
\begin{table*}[htbp]
\centering
\caption{Case studies of \sysname and baseline implemented on EzPC}
\label{tab: case_study}
\centering
\begin{tblr}{
  cells = {c},
  cell{1}{1} = {r=2}{},
  cell{1}{2} = {c=2}{},
  cell{1}{4} = {c=2}{},
  cell{1}{6} = {c=2}{},
  vline{3,5} = {1}{0.08em},
  vline{4,6} = {2}{0.08em},
  vline{2-7} = {3-15}{},
  hline{1,3,16} = {-}{0.08em},
  hline{2} = {2-7}{0.08em},
  hline{4-15} = {-}{},
}
Functions         & Runtime(ms) LAN &       & Runtime(ms) WAN &         & Communication(KB) &         \\
                  & Baseline            & \sysname  & Baseline       & \sysname    & Baseline              & \sysname    \\
Mode              & 263.12          & 195.66    & 23633.80        & 20897.83        & 14864.63          & 4332.40        \\
Median/Percentile & 764.43          & 41.83     & 9915.76         & 1450.39         & 3149.44           & 2688.69        \\
Rank              & 3064.60         & 77.58     & 321953.92      & 1567.40         & 16709.84          & 4865.48        \\
Min/Max           & 796.87          & 36.17     & 123398.53       & 1048.07         & 9879.38           & 2645.67        \\
Std               & 0.07            & 0.05      & 5.75            & 5.70            & 13.45             & 12.91          \\
Skew              & 0.10            & 0.08      & 14.89           & 14.84           & 32.13             & 31.59          \\
Kurt              & 0.09            & 0.07      & 11.23           & 11.18           & 22.78             & 22.24          \\
Linest            & 0.10            & 0.08      & 14.86           & 14.81           & 31.94             & 31.40          \\
Logest            & 61.78           & 35.46     & 319.36          & 259.06          & 4618.36           & 3971.73        \\
Pearson           & 0.12            & 0.10      & 20.36           & 20.31           & 41.46             & 40.92          \\
T.Test            & 0.12            & 0.08      & 5.99            & 5.89            & 17.36             & 16.28          \\
CHISQ.Test        & 54.95           & 36.67     & 832.67          & 423.15          & 6741.99           & 1621.38        \\
F.Test            & 0.20            & 0.14      & 11.78           & 11.63           & 31.29             & 29.67       
\end{tblr}
\end{table*}

To evaluate the efficiency of \sysname for statistical analysis, we select $14$ commonly used statistical functions.
All selected functions can be constructed using the three categories of secure primitives provided by \sysname.
For example, consider the secure $\mathsf{CHISQ.Test}$, defined as $\chi^2=\sum_{i=1}^{a}\sum_{j=1}^{b}\frac{(O_{ij}-E_{ij})^2}{E_{ij}}$, where $O_{ij}$ denotes the observed frequency and $E_{ij} = \frac{O_i O_j}{N}$.
This function can be implemented by combining the secure frequency counting protocol $\Pi_{\text{PFC}}^{m,n}$, the secure reciprocal ($1/x$) primitive, and secure multiplication.
In this subsection, we report experimental results obtained using EzPC.
For baseline comparison, we also implement the same statistical functions using the corresponding secure primitives.
Specifically, for secure frequency counting, we adopt a straightforward equality-test-based implementation as the baseline.
For secure sorting, we employ the \emph{shuffle-then-compare} paradigm from~\cite{hamada2012practically}, instantiated with the SOTA secure shuffle protocol~\cite{chase2020secret} and secure comparison~\cite{rathee2020cryptflow2, huang2022cheetah}.
For secure non-linear mathematical functions, we use the protocols from~\cite{rathee2021sirnn} as the baseline.

We conduct performance evaluations using the Students Academic Performance Evaluation Dataset\footnote{https://data.mendeley.com/datasets/dc3797vf3t/1}, a publicly available educational dataset comprising approximately $1195$ students with multiple learning behavior features and corresponding academic performance indicators.
In our experiments, we use average class attendance and cumulative grade point average (CGPA) as input attributes, with domain sizes of $100$ and $400$, respectively.
For single-variable functions, average class attendance is used as the input.
For two-variable functions, both average class attendance and CGPA are used.
For the $\mathsf{CHISQ.Test}$, we discretize average class attendance and CGPA into $5$ and $10$ categories, respectively.

As shown in \autoref{tab: case_study}, functions implemented using \sysname consistently outperform the baseline across most evaluated cases.
The performance gains are particularly pronounced for functions that rely on secure sorting, such as $\mathsf{Rank}$, which achieves a $39.5\times$ speedup in running time under the LAN setting and a $205.4\times$ speedup under the WAN setting, along with a $3.4\times$ reduction in communication.
In contrast, the improvement is less pronounced for multiplication-intensive statistical functions such as $\mathsf{Std}$, where the dominant cost arises from secure multiplication and both \sysname and the baseline use the same implementation provided by EzPC.
Overall, these results suggest that \sysname is particularly effective in accelerating secure statistical computations that involve non-linear operations, while maintaining competitive performance across a broad range of data statistics-related functions.

\section{Conclusion}

In this paper, we propose \sysname, a secure and efficient toolkit designed for 2PC-based statistical analysis. 
By examining Microsoft Excel’s statistical library, we observed that a wide range of statistical tasks can be decomposed into three fundamental operations: frequency counting, sorting, and non-linear mathematical functions. 
Guided by this insight, \sysname provides specialized 2PC implementations for each core operation, enabling secure statistical analysis with significantly improved efficiency.
For frequency counting, \sysname introduces a secure shift-based strategy that avoids costly equality tests. 
For sorting, we propose a secure segment-indicator protocol that supports counting-based sorting without relying on secure comparisons, making it particularly suitable for common statistical scenarios. 
For non-linear math functions, \sysname advances the reduce-then-approximate paradigm with a bisection-based range reduction protocol, improving both accuracy and efficiency.
Our experimental evaluation on 14 representative statistical analysis tasks demonstrates the efficiency improvement of \sysname to implement data statistic-related functions.

\bibliographystyle{IEEEtran}
\bibliography{ref}

\appendix

\subsection{Security Proof}
Due to the page limit, we include the complete simulation proof only for $\Pi_\text{PFC}^{m,n}$, our secure frequency counting protocol $\Pi_\text{PFC}^{m,n}$ and secure sorting protocol $\Pi_\text{CSort}^{m,n}$.
The other protocols are proved using the same simulation paradigm together with UC composition in their respective hybrid models, so we only provide concise proof sketches here. The full simulators and detailed indistinguishability arguments are deferred to the full version.

\subsubsection{Secure Frequency Counting}
\label{sec: pfc}

\begin{proof}
    To prove Theorem~\ref{theorem: pfc}, we construct a PPT simulator $\mathcal{S}$, such that no non-uniform PPT environment $\mathcal{Z}$ can distinguish between the ideal world $\mathsf{Ideal}_{\mathcal{F}_\text{PFC}^{m,n},\mathcal{S},\mathcal{Z}}(1^\lambda) $ and the real world $\mathsf{Real}_{ \Pi_\text{PFC}^{m,n},\mathcal{A},\mathcal{Z}}^{\mathcal{F}_\text{RVOSE}, \mathcal{F}_\text{B2A}}(1^\lambda)$. We consider the following cases:

    $\mathsf{Case}$ $1$: $P_0$ is corrupted.
    We construct the simulator $\mathcal{S}$ that internally runs $\mathcal{A}$, simulates $\mathcal{F}_\text{RVOSE}$ and $\mathcal{F}_\text{B2A}$, forwards messages to/from $\mathcal{Z}$, and simulates the interface of the honest party $P_1$.

    Upon receiving $(\mathsf{Input}, \mathsf{sid}, P_1)$ from $\mathcal{F}_\text{PFC}^{m,n}$, $\mathcal{S}$ does as follows.
    \begin{itemize}
        \item Upon receiving $\mathsf{Output}, \mathsf{sid}, \langle \vec{y}\rangle_0$ from $\mathcal{F}_\text{PFC}^{m,n}$, $\mathcal{S}$ records $\langle \vec{y}\rangle_0$.
        \item  When $P_0$ invokes $\mathcal{F}_\text{RVOSE}$ with $(\mathsf{input, \mathsf{sid}, \vec{a_i}})$, $\mathcal{S}$  records $\vec{a_i}$ and samples $\vec{\check{b_i}} \gets \mathbb{Z}_{2}^m$, $\check{\epsilon_i} \gets \mathbb{Z}_{m}$, computes $\vec{\check{c_i}}:=\text{shift}(\check{\epsilon_i},\vec{a_i}) \oplus \vec{\check{b_i}}$ and sends $\vec{\check{b_i}}$ to $P_0$ act as $\mathcal{F}_\text{RVOSE}$. 
        \item $\mathcal{S}$ samples $r_i \gets \mathbb{Z}_{n}$ and sends $r_i$ to $P_0$ act as $P_1$.
        \item When $P_0$ invokes $\mathcal{F}_\text{B2A}$ with $(\mathsf{input}, \mathsf{sid}, \langle\vec{y''_i}\rangle_0)$, $\mathcal{S}$ samples $\langle \vec{\check{y'_i}}\rangle_0 \gets \mathbb{Z}_{2^{\ell_n}}^m$, for $i \in [n-1]$, computes $\langle \vec{\check{y'_{n-1}}}\rangle_0 := \langle \vec{y}\rangle_0 - \sum_{i=0}^{n-2}\langle \vec{\check{y'}}\rangle_0$, and sends $\langle \vec{\check{y'_i}}\rangle_0$ to $P_0$ act as $\mathcal{F}_\text{B2A}$. 
    \end{itemize}

    \noindent $\mathsf{Indistinguishability.}$ We show that the incoming message and the output of $P_0$ in the ideal world are indistinguishable from the real world. 

    \begin{claim}
     The ideal world $\mathsf{Ideal}_{\mathcal{F}_\text{PFC}^{m,n},\mathcal{S},\mathcal{Z}}(1^\lambda) $ and the real world $\mathsf{Real}_{ \Pi_\text{PFC}^{m,n},\mathcal{A},\mathcal{Z}}^{\mathcal{F}_\text{RVOSE},\mathcal{F}_\text{B2A}}(1^\lambda)$ are perfectly indistinguishable.
    \end{claim}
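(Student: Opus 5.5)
We show the two views are identically distributed by going through the corrupted party $P_0$'s incoming messages in protocol order, then checking that the joint distribution with the honest output $\langle \vec{y}\rangle_1$ also matches. $P_0$'s view consists of three parts: the vectors $\vec{b}_i$ returned by $\mathcal{F}_\text{RVOSE}$, the masked offsets $\langle \vec{x}[i]\rangle_1-\epsilon_i$ sent by $P_1$, and the outputs $\langle \vec{y}'_i\rangle_0$ of $\mathcal{F}_\text{B2A}$.

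First, in the real world $\mathcal{F}_\text{RVOSE}$ samples $\vec{b}_i$ uniformly from $\mathbb{Z}_2^m$, independently of $\vec{a}_i$. This is exactly how $\mathcal{S}$ samples $\vec{\check{b}}_i$. Second, $\epsilon_i$ is uniform in $\mathbb{Z}_m$, independent of everything else, and known only to $P_1$. Hence $\langle \vec{x}[i]\rangle_1-\epsilon_i$ is a one-time-pad encryption over $\mathbb{Z}_m$ and is uniform, matching the simulated $r_i$. For this we must read the simulator's $r_i \gets \mathbb{Z}_n$ as sampling from $\mathbb{Z}_m$, the ring of the shares. Because each $\epsilon_i$ is used once, these values are jointly uniform and independent of the $\vec{b}_i$.

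Third, $P_0$'s local values $\langle \vec{y}''_i\rangle_0=\text{shift}(\cdot,\vec{b}_i)$ are determined by its own view. The ideal $\mathcal{F}_\text{B2A}$ returns fresh uniform shares $\langle \vec{y}'_i\rangle_0$, and the honest outputs satisfy $\langle\vec{y}\rangle_0=\sum_i\langle\vec{y}'_i\rangle_0$. The simulator samples the first $n-1$ B2A outputs uniformly and fixes the last so that the sum equals the $\langle \vec{y}\rangle_0$ received from $\mathcal{F}_\text{PFC}^{m,n}$. Since $\langle \vec{y}\rangle_0$ is itself uniform in $\mathbb{Z}_{2^{\ell_n}}^m$, the tuple of B2A outputs is uniform in both worlds, subject to the same linear constraint.

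Finally, we check correctness of the joint distribution. In the real world $\langle \vec{y}\rangle_1=\vec{y}-\langle\vec{y}\rangle_0$ by the correctness argument given after Protocol $\Pi_\text{PFC}^{m,n}$. In the ideal world the functionality sets $\langle \vec{y}\rangle_1$ the same way. So the pair (view of $P_0$, output of $P_1$) is identically distributed, and the claim of perfect indistinguishability follows.

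Case 2, $P_1$ corrupted, is handled symmetrically. Now $P_1$'s view consists of $\epsilon_i,\vec{c}_i$ from $\mathcal{F}_\text{RVOSE}$, the vectors $\vec{d}_i$ from $P_0$, and its B2A outputs. The simulator samples $\epsilon_i$ uniformly, samples $\vec{c}_i$ uniformly (in the real world it is masked by the uniform $\vec{b}_i$), and samples $\vec{d}_i$ uniformly (masked by the uniform $\vec{a}_i$). It then splits $\langle\vec{y}\rangle_1$ into B2A outputs exactly as above.

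The main subtlety is independence. $\vec{c}_i=\text{shift}(\epsilon_i,\vec{a}_i)\oplus\vec{b}_i$ and $\vec{d}_i=\vec{e}_i\oplus\vec{a}_i$ share the same $\vec{a}_i$, so we must check they are jointly uniform. This holds because $\vec{b}_i$ is fresh, uniform, and hidden from $P_1$, so it perfectly masks $\vec{c}_i$ independently of $\vec{d}_i$.
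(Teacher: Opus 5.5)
Your proposal is correct and follows essentially the same route as the paper's proof: the same simulator, with each incoming message (the $\mathcal{F}_{\text{RVOSE}}$ outputs, the masked offsets $\langle \vec{x}[i]\rangle_1-\epsilon_i$ or masked vectors $\vec{d}_i$, and the $\mathcal{F}_{\text{B2A}}$ outputs split so that they sum to the functionality's share) shown to be uniform. Your extra care tightens points the paper's argument glosses over or states incorrectly: you sample $r_i$ from $\mathbb{Z}_m$ rather than the paper's $\mathbb{Z}_n$, you check the joint distribution with the honest party's output, and you verify that $\vec{c}_i$ and $\vec{d}_i$ are jointly uniform even though both depend on $\vec{a}_i$.
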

  
    \begin{proof}
    There are three parts of incoming messages that are different between $\mathsf{Ideal}_{\mathcal{F}_\text{PFC}^{m,n},\mathcal{S},\mathcal{Z}}(1^\lambda)$ and $\mathsf{Real}_{ \Pi_\text{PFC}^{m,n},\mathcal{A},\mathcal{Z}}^{\mathcal{F}_\text{RVOSE},\mathcal{F}_\text{B2A}}(1^\lambda)$.

    \begin{itemize}  
        \item The output of $\mathcal{F}_\text{RVOSE}$ is a random boolean vector, which is indistinguishable between the ideal world and the real world.
        \item In the ideal world, $r_i$ are sampled uniformly at random rather than real $\langle \vec{x}[i]\rangle_1 - \epsilon_i$.
        \item In the ideal world, $\langle \vec{y'}[i]\rangle_0, i \in [n-1]$ are sampled uniformly at random, and $\langle \vec{\check{y'}}[n-1]\rangle_0$ are calculated with $\langle \vec{y}\rangle_0 - \sum_{i=0}^{n-2}\langle \vec{\check{y_i'}}\rangle$ rather than the output of $\mathcal{F}_\text{B2A}$. 
    \end{itemize}

    For the second part, due to the $\epsilon_i$ in the real world are sampled uniformly at random, $\langle \vec{x}[i]\rangle_1 - \epsilon \bmod n$ are also uniformly random, which is indistinguishable from the ideal world.
    For the third part, $\langle \vec{y'_i} \rangle_0$ are the output of $\mathcal{F}_\text{B2A}$, which are random arithmetic vectors. $\langle \vec{\check{y'_i}}\rangle_0, i\in [n-2]$ are sampled uniformly at random, and $\langle \vec{\check{y'_{n-1}}}\rangle_0 := \langle \vec{y}\rangle_0 - \sum_{i=0}^{n-2}\langle \vec{\check{y_i'}}\rangle$ is also uniformly random, thus $\langle \vec{\check{y'_i}}\rangle_0$ in the ideal world are indistinguishable with $\langle \vec{y'_i}\rangle_0$ in the real world.
    \end{proof}
   
    $\mathsf{Case}$ $2$: $P_1$ is corrupted.
    We construct the simulator $\mathcal{S}$ that internally runs $\mathcal{A}$, simulates $\mathcal{F}_\text{RVOSE}$ and $\mathcal{F}_\text{B2A}$, forwards messages to/from $\mathcal{Z}$, and simulates the interface of the honest party $P_0$.

    Upon receiving $(\mathsf{Input}, \mathsf{sid}, P_0)$ from $\mathcal{F}_\text{PFC}^{m,n}$, $\mathcal{S}$ does as follows.
    \begin{itemize}
        \item Upon receiving $\mathsf{Output}, \mathsf{sid}, \langle \vec{y}\rangle_0$ from $\mathcal{F}_\text{PFC}^{m,n}$, $\mathcal{S}$ records $\langle \vec{y}\rangle_1$.
        \item When $P_1$ invokes $\mathcal{F}_\text{RVOSE}$, $\mathcal{S}$ samples $\vec{\check{c_i}} \gets \mathbb{Z}_{2}^m$, $\check{\epsilon_i} \gets \mathbb{Z}_{m}$, and sends $\vec{\check{c_i}}$ and $\check{\epsilon_i}$ to $P_1$ act as $\mathcal{F}_\text{RVOSE}$. 
        \item $\mathcal{S}$ samples $\vec{\check{d_i}} \gets \mathbb{Z}_{2}^n$ and sends $\vec{\check{d_i}}$ to $P_1$ act as $P_0$.
        \item When $P_1$ invokes $\mathcal{F}_\text{B2A}$ with $(\mathsf{input}, \mathsf{sid}, \langle\vec{y''_i}\rangle_1)$, $\mathcal{S}$ samples $\langle \vec{\check{y'_i}}\rangle_1 \gets \mathbb{Z}_{2^{\ell_n}}^m$, for $i \in [n-1]$, computes $\langle \vec{\check{y'_{n-1}}}\rangle_1 := \langle \vec{y}\rangle_1 - \sum_{i=0}^{n-2}\langle \vec{\check{y'}}\rangle_1$, and sends $\langle \vec{\check{y'_i}}\rangle_1$ to $P_1$ act as $\mathcal{F}_\text{B2A}$. 
    \end{itemize}

    \noindent $\mathsf{Indistinguishability.}$ We show that the incoming message and the output of $P_1$ in the ideal world are indistinguishable from the real world. 

    \begin{claim}
     The ideal world $\mathsf{Ideal}_{\mathcal{F}_\text{PFC}^{m,n},\mathcal{S},\mathcal{Z}}(1^\lambda) $ and the real world $\mathsf{Real}_{ \Pi_\text{PFC}^{m,n},\mathcal{A},\mathcal{Z}}^{\mathcal{F}_\text{RVOSE},\mathcal{F}_\text{B2A}}(1^\lambda)$ are perfectly indistinguishable.
    \end{claim}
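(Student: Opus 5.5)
The plan is a hybrid argument over the three kinds of values that corrupted $P_1$ receives in the $(\mathcal{F}_\text{B2A}, \mathcal{F}_\text{RVOSE})$-hybrid execution. We show that each simulated value has exactly the same joint distribution as its real counterpart, conditioned on $P_1$'s view and on the output $\langle \vec{y}\rangle_1$ delivered by $\mathcal{F}_\text{PFC}^{m,n}$. The three values are: the RVOSE outputs $(\vec{c}_i,\epsilon_i)$, the masked vectors $\vec{d}_i$ sent by $P_0$, and the B2A outputs $\langle \vec{y}'_i\rangle_1$. Since no computational assumption is used in the hybrid model, we aim for perfect (hence statistical) indistinguishability. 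We also note in passing that the simulated $\vec{\check{d_i}}$ should range over $\mathbb{Z}_2^m$; the $\mathbb{Z}_2^n$ in the simulator description is a typo.

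\textbf{Step 1: the RVOSE outputs and the masked vectors, handled jointly.} In the real world, $P_0$ samples $\vec{a}_i$ uniformly from $\mathbb{Z}_2^m$ independently of everything else. $\mathcal{F}_\text{RVOSE}$ gives $P_1$ a uniform $\epsilon_i$ and $\vec{c}_i = \text{shift}(\epsilon_i,\vec{a}_i)\oplus\vec{b}_i$, where $\vec{b}_i$ is uniform and hidden from $P_1$. Hence $\vec{c}_i$ is uniform and independent of $(\epsilon_i, \vec{a}_i)$, because $\vec{b}_i$ acts as a one-time pad. Next, $\vec{d}_i = \vec{e}^{\langle \vec{x}[i]\rangle_0}\oplus\vec{a}_i$. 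Given $(\vec{c}_i,\epsilon_i)$, the vector $\vec{a}_i$ is still uniform, so $\vec{d}_i$ is uniform and independent of $(\vec{c}_i,\epsilon_i)$ and of $\langle \vec{x}\rangle_0$. Thus the triple $(\epsilon_i,\vec{c}_i,\vec{d}_i)$ is uniform on $\mathbb{Z}_m\times\mathbb{Z}_2^m\times\mathbb{Z}_2^m$, independently across $i$, which matches the simulator's sampling exactly.

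\textbf{Step 2: the B2A outputs.} By definition of $\mathcal{F}_\text{B2A}$, the arithmetic shares $\langle \vec{y}'_i\rangle_1$ are uniform over $\mathbb{Z}_{2^{\ell_n}}^m$, subject only to $\langle \vec{y}'_i\rangle_0+\langle \vec{y}'_i\rangle_1=\vec{y}''_i$. Since $\langle \vec{y}'_i\rangle_0$ is hidden from $P_1$, the only global constraint linking $P_1$'s view to the ideal output is $\sum_i\langle \vec{y}'_i\rangle_1=\langle \vec{y}\rangle_1$. In the real world the first $n-1$ shares are therefore i.i.d.\ uniform, and the last is determined by the output. The simulator reproduces exactly this: it samples $n-1$ shares uniformly and fixes the last as $\langle \vec{y}\rangle_1-\sum_{i=0}^{n-2}\langle \vec{\check{y'_i}}\rangle_1$.

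\textbf{Step 3: joint distribution with the honest output (the main obstacle).} The delicate point is showing that the joint distribution of $P_1$'s view together with both parties' outputs matches across worlds. This matters because $\mathcal{Z}$ sees $P_0$'s output as well. In the ideal world, $\langle \vec{y}\rangle_0 = \vec{y}-\langle \vec{y}\rangle_1$ with $\langle \vec{y}\rangle_1$ uniform. In the real world, correctness (established in the Correctness paragraph via shift composition and the correctness of $\mathcal{F}_\text{RVOSE}$ and $\mathcal{F}_\text{B2A}$) gives $\langle \vec{y}\rangle_0+\langle \vec{y}\rangle_1=\vec{y}$. Moreover, $\langle \vec{y}\rangle_1$ is uniform, because it is a sum of fresh B2A shares, and it is independent of $(\epsilon_i,\vec{c}_i,\vec{d}_i)$, because the B2A masks are independent of the inputs fed into B2A. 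Combining Steps 1–3, the two distributions coincide exactly, which proves the claim.
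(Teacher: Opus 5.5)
Your proposal is correct and follows the same route as the paper's proof of this claim in the $P_1$-corrupted case. Both use the three-way split into the $\mathcal{F}_{\text{RVOSE}}$ outputs, the masked vector $\vec{d}_i$, and the $\mathcal{F}_{\text{B2A}}$ outputs with the last share fixed by $\langle \vec{y}\rangle_1$. You are more careful than the paper in treating $(\epsilon_i,\vec{c}_i,\vec{d}_i)$ jointly: the paper calls $\vec{a}_i$ an RVOSE output, whereas it is $P_0$'s own mask, hidden from $P_1$ only because $\vec{b}_i$ pads $\vec{c}_i$. You also explicitly check the joint distribution with the honest party's output. The same claim text also appears for a corrupted $P_0$, and it goes through identically, with $\langle \vec{x}[i]\rangle_1-\epsilon_i$ uniform on $\mathbb{Z}_m$ because $\epsilon_i$ is uniform and independent of $\vec{b}_i$.
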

  
    \begin{proof}
    There are three parts of incoming messages that are different between $\mathsf{Ideal}_{\mathcal{F}_\text{PFC}^{m,n},\mathcal{S},\mathcal{Z}}(1^\lambda)$ and $\mathsf{Real}_{ \Pi_\text{PFC}^{m,n},\mathcal{A},\mathcal{Z}}^{\mathcal{F}_\text{RVOSE},\mathcal{F}_\text{B2A}}(1^\lambda)$.

    \begin{itemize}  
        \item The output of $\mathcal{F}_\text{RVOSE}$ is a random boolean vector, which is indistinguishable between the ideal world and the real world.
        \item In the ideal world, $\langle \vec{\check{d_i}}\rangle_1$ are sampled uniformly at random rather than real $\langle \vec{d_i}\rangle_1 := \vec{e_i}\oplus \vec{a_i}$.
        \item In the ideal world, $\langle \vec{y'}[i]\rangle_1, i \in [n-1]$ are sampled uniformly at random, and $\langle \vec{\check{y'}}[n-1]\rangle_1$ are calculated with $\langle \vec{y}\rangle_1 - \sum_{i=0}^{n-2}\langle \vec{\check{y_i'}}\rangle_1$ rather than the output of $\mathcal{F}_\text{B2A}$. 
    \end{itemize}

    For the second part, due to the $\vec{a_i}$ in the real world are the output of $\mathcal{F}_\text{RVOSE}$ which are uniformly at random, such that it is indistinguishable from $\vec{\check{d_i}}$ in the ideal world.
    For the third part, the indistinguishability is same as the $\mathsf{Case}$ $1$.
    \end{proof}

This concludes the proof.
\end{proof}

\subsubsection{Secure Counting Sort}
\label{sec: csort}

\begin{proof}
    To prove Theorem~\ref{theorem: csort}, we construct a PPT simulator $\mathcal{S}$, such that no non-uniform PPT environment $\mathcal{Z}$ can distinguish between the ideal world $\mathsf{Ideal}_{\mathcal{F}_\text{sort}^{m,n},\mathcal{S},\mathcal{Z}}(1^\lambda) $ and the real world $\mathsf{Real}_{ \Pi_\text{CSort}^{m,n},\mathcal{A},\mathcal{Z}}^{\mathcal{F}_\text{PFC}^{m,n}, \mathcal{F}_\text{convert}^{N_1 \rightarrow N_2}, \mathcal{F}_\text{RVOSE}}(1^\lambda)$. 
    We consider the following cases:

    $\mathsf{Case}$ $1$: $P_0$ is corrupted.
    We construct the simulator $\mathcal{S}$ that internally runs $\mathcal{A}$, simulates $\mathcal{F}_\text{convert}^{N_1\rightarrow N_2}$, $\mathcal{F}_\text{PFC}^{m,n}$, and $\mathcal{F}_\text{RVOSE}$, forwards messages to/from $\mathcal{Z}$, and simulates the interface of the honest party $P_1$.

    Upon receiving $(\mathsf{Input}, \mathsf{sid}, P_1)$ from $\mathcal{F}_\text{sort}^{m,n}$, $\mathcal{S}$ does as follows. 
    \begin{itemize}
        \item Upon receiving $\mathsf{Output}, \mathsf{sid}, \langle \vec{y}\rangle_0$ from $\mathcal{F}_\text{sort}^{m,n}$, $\mathcal{S}$ records $\langle \vec{y}\rangle_0$, and samples $\langle \vec{\check{y''_i}}\rangle_0 \gets \mathbb{Z}_{2^{\ell_m}}^n, i\in[1,m-1]$, computes $ \langle \vec{\check{y''_0}}\rangle_0 := 2\cdot\langle \vec{\check{y''_1}}\rangle_0 + \sum_{i=2}^{m-1}(\langle \vec{\check{y''_i}}\rangle_0 - \langle \vec{\check{y''_{i-1}}}\rangle_0)\cdot(i+1) - \langle \vec{y}\rangle_0  - \vec{1}$, samples $r_i \gets \mathbb{Z}_{n+1}$.
        $\mathcal{S}$ computes $\langle \vec{\check{y_i'''}}\rangle_0$ by solving the equation $\langle \vec{\check{y''_i}}\rangle_0 = A \langle \vec{\check{y'''_i}}\rangle_0$, where $A \in \mathbb{Z}_{2^{\ell_m}}^{n\times n}$ is an  unit lower triangular matrix.
        \item When $P_0$ invokes $\mathcal{F}_\text{RVOSE}$ with $\vec{a_i}$, $\mathcal{S}$ sets $\vec{\check{b_i}} := \text{shift}(-r_i, \langle \vec{\check{y'''_{i}}}\rangle_0)$, and sends $\vec{\check{b_i}}$ to $P_0$ act as $\mathcal{F}_\text{RVOSE}$.
        \item When $P_0$ invokes $\mathcal{F}_\text{PFC}^{m,n}$ with $\langle \vec{x}\rangle_0$, $\mathcal{S}$ samples $\langle \vec{\check{q''}}\rangle \gets \mathbb{Z}_{2^{\ell_n}}^{m}$, and sends $\langle \vec{\check{q''}}\rangle$ to $P_0$ act as $\mathcal{F}_\text{PFC}^{m,n}$.
        \item When $P_0$ invokes $\mathcal{F}_\text{convert}^{2^{\ell_n}\rightarrow n+1}$ with $\langle \vec{q''}\rangle^{2^{\ell_n}}_0$, $\mathcal{S}$ samples $\langle \vec{\check{q'}}\rangle_0 \gets \mathbb{Z}_{2^{n+1}}^{m}$, 
        \item To simulate the $\langle \vec{q}[i]\rangle_1-\epsilon_i$, $\mathcal{S}$ sends $r_i$ to $P_0$ act as $P_1$.
    \end{itemize}

    \noindent $\mathsf{Indistinguishability.}$ We show that the incoming message and the output of $P_0$ in the ideal world are indistinguishable from the real world. 

    \begin{claim}
     The ideal world $\mathsf{Ideal}_{\mathcal{F}_\text{sort}^{m,n},\mathcal{S},\mathcal{Z}}(1^\lambda) $ and the real world $\mathsf{Real}_{ \Pi_\text{CSort}^{m,n},\mathcal{A},\mathcal{Z}}^{\mathcal{F}_\text{RVOSE},\mathcal{F}_\text{convert}^{N_1\rightarrow N_2}, \mathcal{F}_\text{PFC}^{m,n}}(1^\lambda)$ are perfectly indistinguishable.
    \end{claim}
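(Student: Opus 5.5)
The claim is that, with $P_0$ corrupted, the view of $P_0$ produced by $\mathcal{S}$ in the ideal world has exactly the same distribution as its real view in the $(\mathcal{F}_\text{PFC}^{m,n},\mathcal{F}_\text{convert},\mathcal{F}_\text{RVOSE})$-hybrid execution, jointly with the honest output. I will argue this by a short sequence of hybrids that replace one group of real messages at a time. The messages $P_0$ receives fall into four groups:
\begin{itemize}
\item the output $\langle \vec{q}''\rangle_0$ of $\mathcal{F}_\text{PFC}^{m,n}$;
\item the output $\langle \vec{q}'\rangle_0$ of $\mathcal{F}_\text{convert}^{2^{\ell_n}\rightarrow n+1}$;
\item the vectors $\vec{b}_i$ from $\mathcal{F}_\text{RVOSE}$;
\item the masked offsets $\langle \vec{q}[i]\rangle_1-\epsilon_i$ sent by $P_1$.
\end{itemize}
The final share $\langle\vec{y}\rangle_0$ is then a deterministic local function of these messages and $P_0$'s own randomness $\vec{a}_i$.

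\emph{Step 1: the first two groups.} By the definitions of $\mathcal{F}_\text{PFC}^{m,n}$ and $\mathcal{F}_\text{convert}$, $P_0$'s shares of $\vec{q}''$ and $\vec{q}'$ are freshly uniform and independent of everything else $P_0$ holds. Replacing them with the simulator's uniform samples therefore changes nothing.

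\emph{Step 2: the masked offsets.} In the real world $\epsilon_i$ is output only to $P_1$ by $\mathcal{F}_\text{RVOSE}$ and is uniform in $\mathbb{Z}_{n+1}$. Hence $\langle \vec{q}[i]\rangle_1-\epsilon_i \bmod (n+1)$ is uniform and independent of $P_0$'s view, and it is distributed identically to $r_i$.

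\emph{Step 3: the $\vec{b}_i$ and the output constraint (main obstacle).} Conditioned on $r_i$, the real $\vec{b}_i$ is uniform in $\mathbb{Z}_{2^{\ell_m}}^{n+1}$, because the RVOSE masks it. $P_0$'s derived values then satisfy a chain of maps:
\begin{itemize}
\item $\langle \vec{y}'''_i\rangle_0=\text{shift}(r_i,\vec{b}_i)$;
\item $\langle\vec{y}''_i\rangle_0=A\langle\vec{y}'''_i\rangle_0$, where $A$ is the suffix-sum matrix (unit triangular, hence invertible over $\mathbb{Z}_{2^{\ell_m}}$);
\item $\langle\vec{y}\rangle_0$ is an affine combination of the $\langle\vec{y}''_i\rangle_0$.
\end{itemize}
The key point is that $\vec{b}_i\mapsto\langle\vec{y}''_i\rangle_0$ is a bijection. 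So sampling $\langle\vec{y}''_i\rangle_0$ uniformly for $i\ge1$, solving for $\langle\vec{y}''_0\rangle_0$ from the output equation, and pulling back through $A^{-1}$ and $\text{shift}(-r_i,\cdot)$ produces $\vec{b}_i$ with exactly the real joint distribution, provided the real output share is uniform given the other messages.

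That proviso is what needs checking. In the real world $\langle\vec{y}\rangle_0$ is uniform and independent of the rest of $P_0$'s view, because the coefficient of $\langle\vec{y}''_0\rangle_0$ in $\sum_i(i+1)\langle\vec{y}'_i\rangle_0$ is $1-2=-1$, a unit. Likewise, in the ideal world $\langle\vec{y}\rangle_0$ is uniform by the definition of $\mathcal{F}_\text{sort}^{m,n}$, and $\langle\vec{y}\rangle_1$ is fixed by correctness, which follows from the correctness argument given earlier. Two details must be handled here:
\begin{itemize}
\item the coefficients in the simulator's formula for $\langle \vec{\check{y''_0}}\rangle_0$ must match the telescoped form of step 11, including the fixed all-ones $\vec{y}''_{m-1}$;
\item the dimension mismatch (length $n+1$ for $\vec{b}_i$ versus $n$ for $\vec{y}''_i$) must be resolved by sampling the dropped coordinate uniformly.
\end{itemize}

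Chaining the hybrids gives identical distributions, so the two worlds are perfectly indistinguishable.
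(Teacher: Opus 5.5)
Your proposal is correct and follows essentially the same route as the paper. You use the same four groups of incoming messages and the same simulator: it samples the $\langle \vec{y}''_i\rangle_0$, solves for $\langle \vec{y}''_0\rangle_0$ from the output share, and pulls back through the unit triangular suffix-sum matrix $A$ and $\text{shift}(-r_i,\cdot)$ to obtain the $\vec{b}_i$. Your extra checks are points the paper's proof glosses over, and they are needed for the simulated view to be consistent with $P_0$'s output: that the real $\langle \vec{y}\rangle_0$ is uniform given the remaining messages (via the $-1$ coefficient), that $\vec{y}''_{m-1}$ must be the fixed all-ones vector rather than a fresh sample, and that the discarded coordinate of $\langle \vec{y}'''_i\rangle_0$ must be sampled separately to fill out the length-$(n+1)$ vector $\vec{b}_i$.
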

  
    \begin{proof}
    There are four parts of incoming messages that are different between $\mathsf{Ideal}_{\mathcal{F}_\text{sort},\mathcal{S},\mathcal{Z}}(1^\lambda)$ and $\mathsf{Real}_{ \Pi_\text{sort},\mathcal{A},\mathcal{Z}}^{\mathcal{F}_\text{RVOSE},\mathcal{F}_\text{PFC}^{m,n}, \mathcal{F}_\text{convert}^{N_1 \rightarrow N_2}}(1^\lambda)$.

    \begin{itemize} 
        \item In the ideal world, $\langle \vec{\check{b_i}}\rangle_0$ are calculated based on $\langle \vec{\check{y'''_i}}\rangle_0$ rather than the output of $\mathcal{F}_\text{RVOSE}$.
        \item The output of $\mathcal{F}_\text{PFC}^{m,n}$ is a random arithmetic vector, which is indistinguishable between the ideal world and the real world.
        \item The output of $\mathcal{F}_\text{convert}^{2^{\ell_n}\rightarrow n+1}$ is a random arithmetic vector, which is indistinguishable between the ideal world and the real world.
        \item In the ideal world, $r_i$ are sampled randomly rather than $\langle \vec{q}[i]\rangle_1 - \epsilon_i$.
    \end{itemize}

    For the first part, based on the randomness of $\langle \vec{\check{y''_i}}\rangle_0, \in [1, m-1]$, $\langle \vec{\check{y_0''}}\rangle_0$ is a random vector. Given the matrix $A$ is an unit lower triangular matrix, the map from $\vec{s}$ to $\vec{t}=A\vec{s}$ is bijection. Such that $\langle \vec{\check{y_i'''}}\rangle$ is a random vector.
    $\vec{\check{b_i}}$ is circular shift of $\langle \vec{\check{y_i'''}}\rangle_0$, which is indistinguishable from the output of $\mathcal{F}_\text{RVOSE}$ in the real world.
    For the fourth part, due to the $\epsilon_i$ in the real world are uniformly random, $\langle \vec{q}[i]\rangle_1 - \epsilon \bmod (n+1)$ are uniformly random, which is indistinguishable from the ideal world.
    \end{proof}

    $\mathsf{Case}$ $2$: $P_1$ is corrupted.
    We construct the simulator $\mathcal{S}$ that internally runs $\mathcal{A}$, simulates $\mathcal{F}_\text{convert}^{N_1\rightarrow N_2}$, $\mathcal{F}_\text{PFC}^{m,n}$, and $\mathcal{F}_\text{RVOSE}$, forwards messages to/from $\mathcal{Z}$, and simulates the interface of the honest party $P_0$.

    Upon receiving $(\mathsf{Input}, \mathsf{sid}, P_0)$ from $\mathcal{F}_\text{sort}^{m,n}$, $\mathcal{S}$ does as follows.
    \begin{itemize}
        \item Upon receiving $(\mathsf{Output}, \mathsf{sid}, \langle \vec{y}\rangle_1)$ from $\mathcal{F}_\text{sort}^{m,n}$, $\mathcal{S}$ records $\langle \vec{y}\rangle_1$, and samples $\langle \vec{\check{y''_i}}\rangle_1 \gets \mathbb{Z}_{2^{\ell_m}}^n, i\in[1,m-1]$, computes $ \langle \vec{\check{y''_0}}\rangle_1 := 2\cdot\langle \vec{\check{y''_1}}\rangle_1 + \sum_{i=2}^{m-1}(\langle \vec{\check{y''_i}}\rangle_1 - \langle \vec{\check{y''_{i-1}}}\rangle_1)\cdot(i+1) - \langle \vec{y}\rangle_1  - \vec{1}$, samples $r_i \gets \mathbb{Z}_{n+1}$.
        $\mathcal{S}$ computes $\langle \vec{\check{y_i'''}}\rangle_1$ by solving the equation $\langle \vec{\check{y''_i}}\rangle_1 = A \langle \vec{\check{y'''_i}}\rangle_1$, where $A \in \mathbb{Z}_{2^{\ell_m}}^{n\times n}$ is an  unit lower triangular matrix.
        \item When $P_1$ invokes $\mathcal{F}_\text{RVOSE}$, $\mathcal{S}$ sets $\vec{\check{c_i}} \gets \mathbb{Z}_{2^{\ell_m}}^{n+1}, \check{\epsilon}_i \gets \mathbb{Z}_{n+1}$ , and sends $\vec{\check{c_i}}, \check{\epsilon_i}$ to $P_1$ act as $\mathcal{F}_\text{RVOSE}$.
        \item When $P_1$ invokes $\mathcal{F}_\text{PFC}^{m,n}$ with $\langle \vec{x}\rangle_1$, $\mathcal{S}$ samples $\langle \vec{\check{q''}}\rangle_1 \gets \mathbb{Z}_{2^{\ell_n}}^{m}$, and sends $\langle \vec{\check{q''}}\rangle_1$ to $P_1$ act as $\mathcal{F}_\text{PFC}^{m,n}$.
        \item When $P_1$ invokes $\mathcal{F}_\text{convert}^{2^{\ell_n}\rightarrow n+1}$ with $\langle \vec{q''}\rangle^{2^{\ell_n}}_1$, $\mathcal{S}$ samples $\langle \vec{\check{q'}}\rangle_1 \gets \mathbb{Z}_{n+1}^{m}$,  sends $\langle \vec{\check{q'}}\rangle_1$ to $P_1$ act as $\mathcal{F}_\text{convert}^{2^{\ell_n}\rightarrow n+1}$, and computes $\langle \vec{\check{q}}[i]\rangle := \sum_{j=0}^i(\langle \vec{\check{q'}}[j] \rangle)$.
        \item To simulate the $\vec{d}_i$, $\mathcal{S}$ computes $\vec{\check{t_i}}:= \text{shift}(\check{\epsilon_i} - \langle \vec{\check{q}}[i]\rangle_1,\langle \vec{\check{y_i'''}}\rangle_1)$, $\vec{\check{d_i}}:=\text{shift}(-\check{\epsilon_i}, \vec{\check{t_i}} - \vec{\check{c_i}})$ sends $\vec{\check{d_i}}$ to $P_1$ act as $P_0$.
    \end{itemize}

    \noindent $\mathsf{Indistinguishability.}$ We show that the incoming message and the output of $P_0$ in the ideal world are indistinguishable from the real world. 

    \begin{claim}
     The ideal world $\mathsf{Ideal}_{\mathcal{F}_\text{sort}^{m,n},\mathcal{S},\mathcal{Z}}(1^\lambda) $ and the real world $\mathsf{Real}_{ \Pi_\text{CSort}^{m,n},\mathcal{A},\mathcal{Z}}^{\mathcal{F}_\text{RVOSE},\mathcal{F}_\text{convert}^{2^{\ell_n}\rightarrow n+1}, \mathcal{F}_\text{PFC}^{m,n}}(1^\lambda)$ are perfectly indistinguishable.
    \end{claim}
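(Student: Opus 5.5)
The plan is to compare, message by message, the view of the corrupted $P_1$ in the real $\Pi_\text{CSort}^{m,n}$ execution (in the $\mathcal{F}_\text{PFC}^{m,n}$, $\mathcal{F}_\text{convert}^{2^{\ell_n}\rightarrow n+1}$, $\mathcal{F}_\text{RVOSE}$ hybrid) with its view under the simulator above, and to prove the two joint distributions are identical. $P_1$ sees four things:
\begin{itemize}
\item the hybrid outputs $(\vec{c}_i,\epsilon_i)$;
\item the hybrid outputs $\langle\vec{q}''\rangle_1$ and $\langle\vec{q}'\rangle_1$;
\item the messages $\vec{d}_i$;
\item its final output $\langle\vec{y}\rangle_1$.
\end{itemize}
I would handle these in three hybrids.

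First, the hybrid outputs. $\mathcal{F}_\text{PFC}^{m,n}$ and $\mathcal{F}_\text{convert}$ hand $P_1$ a share that is uniform and independent of everything else, because the complementary share goes to $P_0$. In the real arithmetic $\mathcal{F}_\text{RVOSE}$, $\epsilon_i$ is uniform in $\mathbb{Z}_{n+1}$. Since $\vec{b}_i$ (held only by $P_0$) is uniform, $\vec{c}_i=\text{shift}(\epsilon_i,\vec{a}_i)+\vec{b}_i$ is also uniform in $\mathbb{Z}_{2^{\ell_m}}^{n+1}$ and independent of $\epsilon_i$. So the simulator's fresh samples $\check{\epsilon}_i,\vec{\check c}_i,\langle\vec{\check q}''\rangle_1,\langle\vec{\check q}'\rangle_1$ have exactly the real distribution. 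They are also independent of $\vec{x}$, since they are determined only by $P_1$'s local randomness and the sub-functionality masks.

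Second, the messages $\vec{d}_i$, which is the main step. In the real world $\vec{d}_i=\vec{v}_i-\vec{a}_i$, and $\vec{a}_i$ is uniform and used only there, so $\vec{d}_i$ alone is uniform. The difficulty is joint consistency. The real $\vec{d}_i$ is tied to $\vec{c}_i,\epsilon_i,\langle\vec{q}[i]\rangle_1$ and to $P_1$'s output through the deterministic local steps 7, 9, 11 and 12. The simulator must therefore pick $\vec{d}_i$ so that $P_1$'s locally recomputed $\langle\vec{y}'''_i\rangle_1$ equals the preimage $\langle\vec{\check y}'''_i\rangle_1$ that it chose to be consistent with $\langle\vec{y}\rangle_1$. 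I would verify this by inverting the shifts: with $\vec{\check t}_i=\text{shift}(\check\epsilon_i-\langle\vec{\check q}[i]\rangle_1,\langle\vec{\check y}'''_i\rangle_1)$ and $\vec{\check d}_i=\text{shift}(-\check\epsilon_i,\vec{\check t}_i-\vec{\check c}_i)$, step 7 returns $\langle\vec{\check y}'''_i\rangle_1$, up to fixing the sign convention of the final shift. Next I would check that $\langle\vec{\check y}'''_i\rangle_1$ is uniform given everything else. This follows from three facts:
\begin{itemize}
\item $\langle\vec{\check y}''_i\rangle_1$ for $i\ge1$ is uniform;
\item the map $\vec{s}\mapsto A\vec{s}$ given by the suffix sum of step 9 is a bijection (unit triangular);
\item $\vec{\check c}_i$ is an independent uniform mask, which makes $\vec{\check d}_i$ uniform.
\end{itemize}

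Third, the output. The relation defining $\langle\vec{\check y}''_0\rangle_1$ is designed so that steps 10--12 applied to the simulated $\langle\vec{\check y}''_i\rangle_1$ reproduce exactly $\langle\vec{y}\rangle_1$ from $\mathcal{F}_\text{sort}^{m,n}$. I would check this against step 10, where $P_1$ fixes $\langle\vec{y}''_{m-1}\rangle_1=\vec 0$, and expect to need a small correction to the indexing of the simulator's formula. In the real world, $P_1$'s view variables for $i\ge1$ are uniform and its output is a uniform share jointly with $P_0$'s. Since $P_0$'s share is hidden behind $\vec{a}_i,\vec{b}_i$, the joint distribution of view and output, together with the honest party's output, matches. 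By the composition theorem, perfect indistinguishability in the hybrid model then yields the claim.

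The expected obstacle is the second step. There one must show that the real $\vec{d}_i$ is uniform \emph{jointly with} the output share, not merely marginally. The crux is that $\vec{a}_i$ (equivalently $\vec{b}_i$) is a one-time mask unseen by $P_1$, so the real $\langle\vec{y}'''_i\rangle_1$ is uniform conditioned on $P_1$'s other view. I would argue this first, and then show that the simulator's backward computation induces the same conditional distribution.
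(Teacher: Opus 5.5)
Your proposal is correct and takes essentially the same route as the paper. Both sample fresh uniform values for the $\mathcal{F}_\text{RVOSE}$, $\mathcal{F}_\text{PFC}^{m,n}$ and $\mathcal{F}_\text{convert}$ outputs, obtain $\langle\vec{\check y}'''_i\rangle_1$ by inverting the unit-triangular suffix sum of step 9, and recover $\vec{\check d}_i$ through the inverse shifts. Your argument on the joint distribution (rather than the paper's message-by-message uniformity) and your step-10 correction (the paper's simulator samples $\langle\vec{\check y}''_{m-1}\rangle_1$ instead of fixing it to $\vec 0$) are genuine improvements. One detail you inherit from the paper: step 9 drops entry $0$ of $\vec{y}'''_i\in\mathbb{Z}_{2^{\ell_m}}^{n+1}$, so the suffix-sum map is onto $\mathbb{Z}_{2^{\ell_m}}^{n}$, not a square bijection, and the simulator must also sample $\langle\vec{\check y}'''_i\rangle_1[0]$ uniformly, because in the real world $P_1$ recomputes a uniform value there.
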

  
    \begin{proof}
    There are four parts of incoming messages that are different between $\mathsf{Ideal}_{\mathcal{F}_\text{PFC}^{m,n},\mathcal{S},\mathcal{Z}}(1^\lambda)$ and $\mathsf{Real}_{ \Pi_\text{PFC}^{m,n},\mathcal{A},\mathcal{Z}}^{\mathcal{F}_\text{RVOSE},\mathcal{F}_\text{B2A}}(1^\lambda)$.

    \begin{itemize}  
        \item The output of $\mathcal{F}_\text{RVOSE}$ is a random arithmetic vector and a random number, which are indistinguishable between the ideal world and the real world.
        \item The output of $\mathcal{F}_\text{PFC}^{m,n}$ is a random arithmetic vector, which is indistinguishable between the ideal world and the real world.
        \item The output of $\mathcal{F}_\text{convert}^{2^{\ell_n}\rightarrow n+1}$ is a random arithmetic vector, which is indistinguishable between the ideal world and the real world.
        \item In the ideal world, $\vec{\check{d_i}}$ are calculated based on $\langle \vec{\check{y'''_i}}\rangle$ rather than $\vec{d_i}$ in the real world.
    \end{itemize}

    For the fourth part, we can proof $\langle \vec{\check{y'''_i}}\rangle$ are uniformly random in a similar way in $\mathsf{Case}$ $1$, then $\vec{\check{t_i}}$ is uniformly random, which is a circular vector shift of  $\langle \vec{\check{y'''_i}}\rangle$. 
    $\vec{\check{d_i}} =\text{shift}(-\check{\epsilon_i}, \vec{\check{t_i}} - \vec{\check{c_i}})$ is uniformly random based on the randomness of $\vec{\check{t_i}}$.
    Then, the fourth part is indistinguishable between the ideal world and the real world.
    \end{proof}

This concludes the proof.
\end{proof}

\subsubsection{Secure MSNZB}
\label{sec: msnzb}
\begin{proof}
    To prove Theorem~\ref{theorem: msnzb}, we construct a PPT simulator $\mathcal{S}$ to simulate the views of the corrupted party.

    $\mathsf{Case}$ $1$: $P_0$ is corrupted.
    In the hybrid model, $P_0$ receives messages from the functionalities, $\mathcal{F}_\text{Mill}$,$\mathcal{F}_\text{B2A}$,$\mathcal{F}_\text{eq}$,$\mathcal{F}_\text{MUX}$,$\mathcal{F}_\text{LUT}$,$\mathcal{F}_\text{B2A}$ and $\mathcal{F}_\text{one-hot}$.
    By the UC security of the functionalities, their outputs are uniformly random.
    We construct the simulator $\mathcal{S}$ to samples uniformly random values to simulate the messages that $P_0$ receives.

    $\mathsf{Case}$ $2$: $P_1$ is corrupted.
    The construction of $\mathcal{S}$ and the indistinguishability are the same as $\mathsf{Case}$ $1$.
\end{proof}

\subsubsection{Secure Mode}
\label{sec: mode}
\myherebox{Functionality $\mathcal{F}_\text{mode}^{m,n,N}$}{white!20}{white!10}{
  $\mathcal{F}_\text{mode}$ interacts with $P_0$, $P_1$ and the simulator $\mathcal{S}$.

\underline{\textbf{Parameters:}} $n$ is the size of the dataset; $m$ is the size of feature domain, $N$ is an integer sightly larger than $m$.
  
\underline{\textbf{Input:}}
  \begin{itemize}[leftmargin=*]
  \item Upon receiving $(\mathsf{Input}, \mathsf{sid}, \langle \vec{x}\rangle_0)$ from $P_0$, record $\langle \vec{x}\rangle_0$ and send $(\mathsf{Input}, \mathsf{sid}, P_0)$ to $\mathcal{S}$, where $\langle \vec{x}\rangle_0 \in \mathbb{Z}_m^n$.
  \item Upon receiving $(\mathsf{Input}, \mathsf{sid}, \langle \vec{x}\rangle_1)$ from $P_1$, record $\langle \vec{x}\rangle_1$ and send $(\mathsf{Input}, \mathsf{sid}, P_1)$ to $\mathcal{S}$, where $\langle \vec{x}\rangle_1 \in \mathbb{Z}_m^n$.
  \end{itemize}
    
\underline{\textbf{Execution:}}
  \begin{itemize}[leftmargin=*]
   \item If both $\langle \vec{x}\rangle_0, \langle \vec{x}\rangle_1$ are recorded, reveal $\vec{x} := \langle \vec{x}\rangle_0 + \langle \vec{x}\rangle_1$, computes the mode value $y$, samples $y_0 \gets \mathbb{Z}_{N}$, and computes $y_1 := y - y_0$.
   \item Send $(\mathsf{Output},\mathsf{sid}, y_b)$ to $P_b$, $b \in \{0, 1\}$.
  \end{itemize}
}{The Ideal Functionality $\mathcal{F}_\text{mode}$.}{func: mode}

We define the functionality $\mathcal{F}_\text{percentile}^{m,n,k}$ for secure percentile in \autoref{func: percentile} and describe the protocol in \autoref{algo: exkth}.

\begin{theorem}
\label{theorem: mode}
$\Pi_\text{percentile}^{m,n,k}$ UC realizes $\mathcal{F}_\text{percentile}^{m,n,k}$ in the $(\mathcal{F}_\text{PFC}^{m,n}, \mathcal{F}_\text{RVOSE})$ hybrid model against semi-honest probabilistic polynomial time (PPT) adversaries with statistical corruption.
\end{theorem}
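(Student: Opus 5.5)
We prove the theorem with the same simulation paradigm used for Theorem~\ref{theorem: pfc} and Theorem~\ref{theorem: csort}. The protocol $\Pi_\text{percentile}^{m,n,k}$ (\autoref{algo: exkth}) has a simple structure: every step is either a local computation or a call to an ideal functionality. These functionalities are $\mathcal{F}_\text{PFC}^{m,n}$, $\mathcal{F}_\text{Mill}$, $\mathcal{F}_\text{AND}$ and $\mathcal{F}_\text{B2A}$. We therefore work in the hybrid model containing all of them. The statement lists only $(\mathcal{F}_\text{PFC}^{m,n},\mathcal{F}_\text{RVOSE})$; we read it as including the other functionalities the protocol invokes, since $\mathcal{F}_\text{RVOSE}$ is used only inside $\mathcal{F}_\text{PFC}^{m,n}$. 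Because no party ever sends a protocol message directly to the other, the corrupted party's view consists entirely of outputs of hybrid functionalities plus its final output share $\langle\omega\rangle_b$.

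We treat two cases, $P_0$ corrupted and $P_1$ corrupted; they are symmetric. The simulator $\mathcal{S}$ internally runs $\mathcal{A}$ and plays every hybrid functionality. It first receives $\langle\omega\rangle_b$ from $\mathcal{F}_\text{percentile}^{m,n,k}$. It then answers the calls as follows.
\begin{enumerate}
\item For $\mathcal{F}_\text{PFC}^{m,n}$, it returns a uniform $\langle\vec{\check z}'\rangle_b\gets\mathbb{Z}_{2^{\ell_n}}^m$.
\item For the $m-1$ calls to $\mathcal{F}_\text{Mill}$, it returns uniform boolean shares $\langle \check t_i\rangle_b$.
\item For the $m-2$ calls to $\mathcal{F}_\text{AND}$, it returns uniform boolean shares $\langle \check u_i\rangle_b$.
\item For the $m$ calls to $\mathcal{F}_\text{B2A}$, it samples $\langle\check w[i]\rangle_b\gets\mathbb{Z}_{2^{\ell_m}}$ for $i\in[1,m-1]$. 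It keeps $\langle\check w[0]\rangle_b$ uniform as well, because index $0$ has coefficient $0$ in $\omega=\sum_i i\cdot w[i]$.
\end{enumerate}

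The delicate point is consistency with the output. The corrupted party's output share is $\sum_i i\cdot\langle\check w[i]\rangle_b$, and this must equal the share $\langle\omega\rangle_b$ that $\mathcal{F}_\text{percentile}^{m,n,k}$ assigned. The simulator fixes this by sampling $\langle\check w[i]\rangle_b$ uniformly for $i\ge2$ and solving for $\langle\check w[1]\rangle_b:=\langle\omega\rangle_b-\sum_{i\ge2}i\cdot\langle\check w[i]\rangle_b$. This is the same trick used for the last B2A output in the proof of Theorem~\ref{theorem: pfc}. Coefficient $1$ is invertible in $\mathbb{Z}_{2^{\ell_m}}$, so $\langle\check w[1]\rangle_b$ is uniform given the rest. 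Hence the joint distribution of the B2A outputs matches the real world: uniform vectors conditioned on the output share.

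Indistinguishability then reduces to one claim. In the real hybrid world, the outputs of $\mathcal{F}_\text{PFC}^{m,n}$, $\mathcal{F}_\text{Mill}$ and $\mathcal{F}_\text{AND}$ are fresh uniform shares, independent of everything else the corrupted party holds. The $\mathcal{F}_\text{B2A}$ outputs are uniform subject only to the linear relation with $\langle\omega\rangle_b$, since each functionality samples its share freshly.

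The main obstacle is the joint distribution of the real view together with the honest party's output, which is required because $\mathcal{Z}$ sees both. The two shares reconstruct $\omega$ correctly in both worlds. Correctness of the real protocol follows from the correctness argument given after \autoref{algo: exkth}, namely that $\vec v$ is the unit vector of the interval containing $k$. Given this, the ideal and real distributions are perfectly identical, which completes the proof.
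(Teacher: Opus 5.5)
Your proposal is correct and uses the same paradigm as the paper's sketch: every step of $\Pi_\text{percentile}^{m,n,k}$ is either local or a call to a hybrid functionality, so the simulator answers each call with fresh uniform shares, and the two corruption cases are symmetric. Your version is more careful than the paper's in three respects. The paper lists $\mathcal{F}_\text{convert}$ and $\mathcal{F}_\text{argmax}$ (carried over from the mode protocol) where you correctly use $\mathcal{F}_\text{Mill}$, $\mathcal{F}_\text{AND}$, $\mathcal{F}_\text{B2A}$. The paper never makes the simulated B2A outputs consistent with $\langle\omega\rangle_b$, whereas you solve for $\langle\check w[1]\rangle_b$ as in the proof of Theorem~\ref{theorem: pfc}. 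You also explicitly account for the honest party's output via correctness; as a caveat on the paper's parameters rather than on your argument, that correctness implicitly needs $2^{\ell_n}>n$, so that prefix sums equal to $n$ do not wrap to $0$.
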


\begin{proof}
    To prove Theorem~\ref{theorem: mode}, we construct a PPT simulator $\mathcal{S}$ to simulate the views of the corrupted party.

    $\mathsf{Case}$ $1$: $P_0$ is corrupted.
    In the hybrid model, $P_0$ receives messages from the functionalities, $\mathcal{F}_\text{PFC}^{m,n}, \mathcal{F}_\text{convert}^{N_1\rightarrow N_2}, \mathcal{F}_\text{argmax}^{t_i}$.
    By the UC security of the functionalities, their outputs are uniformly random. 
    We construct the simulator $\mathcal{S}$ to samples uniformly random values to simulate the messages that $P_0$ receives.

    $\mathsf{Case}$ $2$: $P_1$ is corrupted.
    The construction of $\mathcal{S}$ and the indistinguishability are the same as $\mathsf{Case}$ $1$.
\end{proof}

\subsubsection{Secure Percentile}
\label{sec: kth}
\myherebox{Functionality $\mathcal{F}_\text{percentile}^{m,n,k}$}{white!20}{white!10}{
  $\mathcal{F}_\text{percentile}$ interacts with $P_0$, $P_1$ and the simulator $\mathcal{S}$.

\underline{\textbf{Parameters:}} $n$ is the size of the dataset; $m$ is the size of feature domain, $k$ is the corresponding index of the $\mathsf{percentile}$ value in the sorted vector.
  
\underline{\textbf{Input:}}
  \begin{itemize}[leftmargin=*]
  \item Upon receiving $(\mathsf{Input}, \mathsf{sid}, \langle \vec{x}\rangle_0)$ from $P_0$, record $\langle \vec{x}\rangle_0$ and send $(\mathsf{Input}, \mathsf{sid}, P_0)$ to $\mathcal{S}$, where $\langle \vec{x}\rangle_0 \in \mathbb{Z}_m^n$.
  \item Upon receiving $(\mathsf{Input}, \mathsf{sid}, \langle \vec{x}\rangle_1)$ from $P_1$, record $\langle \vec{x}\rangle_1$ and send $(\mathsf{Input}, \mathsf{sid}, P_1)$ to $\mathcal{S}$, where $\langle \vec{x}\rangle_1 \in \mathbb{Z}_m^n$.
  \end{itemize}
    
\underline{\textbf{Execution:}}
  \begin{itemize}[leftmargin=*]
   \item If both $\langle \vec{x}\rangle_0, \langle \vec{x}\rangle_1$ are recorded, reveal $\vec{x} := \langle \vec{x}\rangle_0 + \langle \vec{x}\rangle_1$, computes the sorted vector $\vec{y}$, uniformly random sample $y_0 \gets \mathbb{Z}_{2^{\ell_m}}$, and compute $y_1 := \vec{y}[k] - y_0$.
   \item Send $(\mathsf{Output},\mathsf{sid}, y_b)$ to $P_b$, $b \in \{0, 1\}$.
  \end{itemize}
}{The Ideal Functionality $\mathcal{F}_\text{percentile}^{m,n,k}$.}{func: percentile}

We define the functionality $\mathcal{F}_\text{percentile}^{m,n,k}$ for secure percentile in \autoref{func: percentile} and describe the protocol in \autoref{algo: exkth}. 

\begin{theorem}
\label{theorem: kth}
$\Pi_\text{percentile}^{m,n,k}$ UC realizes $\mathcal{F}_\text{percentile}^{m,n,k}$ in the $(\mathcal{F}_\text{PFC}^{m,n}$,$ \mathcal{F}_\text{Mill}$,$ \mathcal{F}_\text{AND}$,$ \mathcal{F}_\text{B2A})$ hybrid model against semi-honest probabilistic polynomial time (PPT) adversaries with statistical corruption.
\end{theorem}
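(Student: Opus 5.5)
We prove Theorem~\ref{theorem: kth} with the same simulation paradigm used for Theorem~\ref{theorem: pfc} and Theorem~\ref{theorem: csort}, working in the $(\mathcal{F}_\text{PFC}^{m,n}, \mathcal{F}_\text{Mill}, \mathcal{F}_\text{AND}, \mathcal{F}_\text{B2A})$-hybrid model. The starting observation is that $\Pi_\text{percentile}^{m,n,k}$ contains no direct message between $P_0$ and $P_1$. Every interaction is a call to an ideal functionality, and all local steps (prefix sums, XOR with $b$, assembling $\vec{v}$, the final weighted sum) are deterministic functions of the corrupted party's own view. So the corrupted party's view consists only of its input share and the outputs it receives from the four functionalities. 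Since every functionality samples the corrupted party's share uniformly and independently, the simulation reduces to sampling these shares consistently with the final output share handed out by $\mathcal{F}_\text{percentile}^{m,n,k}$.

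\textbf{Simulator for corrupted $P_b$.} We treat $b\in\{0,1\}$ symmetrically. On receiving $(\mathsf{Output},\mathsf{sid},y_b)$ from $\mathcal{F}_\text{percentile}^{m,n,k}$, $\mathcal{S}$ internally runs $\mathcal{A}$ and acts for all four functionalities, using only fresh randomness plus one programming step:
\begin{enumerate}[label=\arabic*), leftmargin=*]
\item When $P_b$ calls $\mathcal{F}_\text{PFC}^{m,n}$, $\mathcal{S}$ returns a uniform $\langle \vec{z}'\rangle_b \gets \mathbb{Z}_{2^{\ell_n}}^m$.
\item For each call to $\mathcal{F}_\text{Mill}$, $\mathcal{S}$ returns a uniform bit $\langle t_i\rangle_b$, for $i\in[0,m-2]$.
\item For each call to $\mathcal{F}_\text{AND}$, $\mathcal{S}$ returns a uniform bit $\langle u_i\rangle_b$, for $i\in[0,m-3]$.
\item For the calls to $\mathcal{F}_\text{B2A}$, $\mathcal{S}$ samples $\langle \vec{w}[i]\rangle_b \gets \mathbb{Z}_{2^{\ell_m}}$ uniformly for $i\in[0,m-2]$. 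It then sets
\[
\langle \vec{w}[m-1]\rangle_b := (m-1)^{-1}\Bigl(y_b - \sum_{i=0}^{m-2} i\,\langle \vec{w}[i]\rangle_b\Bigr),
\]
so that $P_b$'s local output $\sum_i i\cdot\langle \vec{w}[i]\rangle_b$ equals $y_b$.
\end{enumerate}

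\textbf{Main obstacle: the output-consistency step.} The programming in step 4 is the one delicate point, because $m-1$ need not be invertible in $\mathbb{Z}_{2^{\ell_m}}$. We would first fix an index $j^\ast\in[1,m-1]$ with $j^\ast$ odd; $j^\ast=1$ exists whenever $m\ge 2$. We then program $\langle \vec{w}[j^\ast]\rangle_b$ instead, solving
\[
y_b=\sum_i i\,\langle \vec{w}[i]\rangle_b
\]
for that coordinate, and sample all other coordinates uniformly. Because $j^\ast$ is a unit modulo $2^{\ell_m}$, the map $\langle \vec{w}[j^\ast]\rangle_b \mapsto \sum_i i\,\langle \vec{w}[i]\rangle_b$ is a bijection once the other coordinates are fixed. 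This gives the indistinguishability argument:
\begin{itemize}[leftmargin=*]
\item In the real world, the $m$ outputs of $\mathcal{F}_\text{B2A}$ are independent uniform shares, so $(\langle \vec{w}[i]\rangle_b)_{i\ne j^\ast}$ together with $\sum_i i\,\langle \vec{w}[i]\rangle_b$ is jointly uniform.
\item By correctness of $\Pi_\text{percentile}^{m,n,k}$, this sum is exactly a share of $\vec{y}[k]$.
\item That share is distributed identically to $y_b$ in the ideal world, jointly with the honest party's output.
\end{itemize}

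\textbf{Remaining hybrids and conclusion.} The other hybrid outputs ($\mathcal{F}_\text{PFC}^{m,n}$, $\mathcal{F}_\text{Mill}$, $\mathcal{F}_\text{AND}$) are fresh uniform shares in both worlds. They are independent of the B2A outputs, since each functionality re-randomizes its output shares. Hence the two worlds are perfectly indistinguishable for each corruption case. We finish by invoking UC composition to replace the hybrid functionalities with their realizations. Correctness of the joint output distribution, namely that the honest party's share completes $y_b$ to $\vec{y}[k]$, follows from the correctness argument already given for $\Pi_\text{percentile}^{m,n,k}$.
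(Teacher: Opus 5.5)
Your proposal is correct and follows the paper's approach. Since $\Pi_\text{percentile}^{m,n,k}$ exchanges no direct messages, the simulator answers every hybrid call ($\mathcal{F}_\text{PFC}^{m,n}$, $\mathcal{F}_\text{Mill}$, $\mathcal{F}_\text{AND}$, $\mathcal{F}_\text{B2A}$) with fresh uniform shares; your extra step of programming the B2A share at the unit-coefficient index $j^\ast=1$ so that the corrupted party's local output equals $y_b$ is left implicit in the paper's sketch, mirrors the last-share programming in its $\Pi_\text{PFC}^{m,n}$ proof, and is what makes the joint distribution with the honest party's output match.
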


\begin{proof}
    To prove Theorem~\ref{theorem: kth}, we construct a PPT simulator $\mathcal{S}$ to simulate the views of the corrupted party.

    $\mathsf{Case}$ $1$: $P_0$ is corrupted.
    In the hybrid model, $P_0$ receives messages from the functionalities, $\mathcal{F}_\text{PFC}^{m,n}, \mathcal{F}_\text{Mill}, \mathcal{F}_\text{AND}, \mathcal{F}_\text{B2A}$.
    By the universal composability security of the functionalities, their outputs are uniformly random.
    We construct the simulator $\mathcal{S}$ to samples uniformly random values to simulate the messages that $P_0$ receives.

    $\mathsf{Case}$ $2$: $P_1$ is corrupted.
    The construction of $\mathcal{S}$ and the indistinguishability are the same as $\mathsf{Case}$ $1$.
\end{proof}

\subsection{The protocol of Secure Counting Max} 
\label{appendix: ex_max}
\myherebox{Protocol $\Pi_\text{CMax}$}{white!20}{white!10}{
   \emph{$\mathsf{Input:}$} 
   For $b \in \{0, 1\}$ $P_b$ inputs the private prefix-sum frequency $\langle \vec{z}\rangle_b \in \mathbb{Z}_{2^{\ell_n}}^m$ and the private $\langle k\rangle_b$. 

   \emph{$\mathsf{Output:}$} 
   $P_b$ outputs the max value $\langle \omega\rangle_b$. 

\underline{\textbf{Protocol:}}
\begin{enumerate}[label=\arabic*), leftmargin=*]
    \item $P_b$ computes $1\{\langle \vec{z}[i]\rangle = n\}$ and gets $\langle \vec{u}'[i]\rangle$, $i \in [m]$.
    \item $P_b$ sets $\langle \vec{u}\rangle[0]:=\langle \vec{u}'[0]\rangle, \langle \vec{u}[i]\rangle:=\langle \vec{u}'[i-1]\rangle \oplus \langle \vec{u}'[i]\rangle$.
    \item $P_b$ invokes $\mathcal{F}_\text{B2A}$ with $\langle \vec{u}\rangle_b$ as input and learns $\langle \vec{t}\rangle_b^{2^{l_m}}$.
    \item $P_b$ sets $\langle \omega\rangle_b := \sum_{j=0}^{m} j\langle \vec{t}[j]\rangle_b^{2^{l_m}}$.
\end{enumerate}
}{The Secure Counting Max Protocol}{algo: exmax}

\end{document}